\documentclass[11pt]{article}
\usepackage{amsmath,amssymb,amsthm}
\usepackage{aliascnt}
\usepackage{graphicx}
\usepackage{fullpage,hyperref,color,cleveref,booktabs,array,enumitem,placeins}

\usepackage[linesnumbered,ruled,vlined]{algorithm2e}

\allowdisplaybreaks

\newcommand{\Expose}{\operatorname{Expose}}
\providecommand{\TV}{\operatorname{TV}}
\providecommand{\Tr}{\operatorname{Tr}}
\providecommand{\Unif}{\operatorname{Unif}}

\newtheorem{theorem}{Theorem}[section]
\newaliascnt{lemma}{theorem}
\newtheorem{lemma}[lemma]{Lemma}
\aliascntresetthe{lemma}
\newaliascnt{claim}{theorem}

\aliascntresetthe{claim}
\newaliascnt{corollary}{theorem}
\newtheorem{corollary}[corollary]{Corollary}
\aliascntresetthe{corollary}
\newaliascnt{proposition}{theorem}

\aliascntresetthe{proposition}
\theoremstyle{definition}
\newaliascnt{definition}{theorem}
\newtheorem{definition}[definition]{Definition}
\aliascntresetthe{definition}
\newaliascnt{example}{theorem}

\aliascntresetthe{example}
\theoremstyle{remark}
\newaliascnt{remark}{theorem}

\aliascntresetthe{remark}
\newaliascnt{assumption}{theorem}

\aliascntresetthe{assumption}

\crefname{theorem}{theorem}{theorems}
\Crefname{theorem}{Theorem}{Theorems}
\crefname{lemma}{lemma}{lemmas}
\Crefname{lemma}{Lemma}{Lemmas}
\crefname{claim}{claim}{claims}
\Crefname{claim}{Claim}{Claims}
\crefname{corollary}{corollary}{corollaries}
\Crefname{corollary}{Corollary}{Corollaries}
\crefname{proposition}{proposition}{propositions}
\Crefname{proposition}{Proposition}{Propositions}
\crefname{definition}{definition}{definitions}
\Crefname{definition}{Definition}{Definitions}
\crefname{example}{example}{examples}
\Crefname{example}{Example}{Examples}
\crefname{remark}{remark}{remarks}
\Crefname{remark}{Remark}{Remarks}
\crefname{assumption}{assumption}{assumptions}
\Crefname{assumption}{Assumption}{Assumptions}
\crefname{algocf}{algorithm}{algorithms}
\Crefname{algocf}{Algorithm}{Algorithms}

\title{Testing Monotonicity of Real-Valued Functions on DAGs}

\begin{document}
\hypersetup{pageanchor=false}

\author{Yuichi Yoshida \\
National Institute of Informatics\\
\texttt{yyoshida@nii.ac.jp}}
\maketitle

\begin{abstract}
    We study monotonicity testing of real-valued functions on directed acyclic graphs (DAGs) with $n$ vertices.
    Let $m$ and $\ell$ be the numbers of edges in the transitive reduction and the transitive
    closure, respectively.  For $1\le c\le d\le2$, define
    \[
      u(c,d):=\min\left\{\frac12,\frac c3,\frac{c+d}{2}-1\right\}.
    \]
    We show that every family of DAGs with $m=n^{c+o(1)}$ and $\ell=n^{d+o(1)}$ admits, for every
    fixed $\varepsilon\in(0,1)$, a non-adaptive tester with one-sided error that uses
    $O_\varepsilon(n^{u(c,d)+o(1)})$ queries.
    Conversely, we show that for every sufficiently small fixed $\varepsilon>0$ and every fixed
    $(c,d)$, there are families of DAGs satisfying $m=n^{c+o(1)}$ and $\ell=n^{d+o(1)}$ on which
    every randomized non-adaptive tester, even with two-sided error, requires
    $n^{u(c,d)-o(1)}$ queries, making the upper bound tight up to a factor $n^{o(1)}$.
    Our main technical contribution is a lower-bound technique based on Ruzsa--Szemer\'edi families
    of positive matchings.
\end{abstract}

\thispagestyle{empty}
\newpage
\hypersetup{pageanchor=true}
\pagenumbering{roman}
\thispagestyle{empty}
\setcounter{tocdepth}{2}
\tableofcontents
\clearpage
\pagenumbering{arabic}

\section{Introduction}
Testing monotonicity on structured domains is a central problem in property testing and a standard example
in the sublinear-time model; see, e.g.,~\cite{EKKRV00,Goldreich17,BhattacharyyaYoshida22,GGR98,Ron09}.
It was first studied on the Boolean hypercube and product domains and has since been extended to richer
product structures, general posets, and families of graphs~\cite{AC06,BGJRW12,CS13a,DGLRRS99,fischer2002monotonicity,GGLRS00,HK08,Raskhodnikova10}.
One way to model posets is via reachability in directed acyclic graphs (DAGs).  In this work we study
the \emph{massively parameterized model}~\cite{Ron09,Goldreich17}, where the entire DAG is given explicitly.

\subsection{Problem and background}
For every positive integer $K$, write $[K]:=\{1,\ldots,K\}$.
Let $G=(V,E)$ be a DAG with $n=|V|$ vertices.
A function $f:V\to\mathbb{R}$ is monotone if $f(u)\le f(v)$ whenever $u$ reaches $v$ in $G$
(equivalently, whenever $(u,v)$ is an edge of the transitive closure).
Given full access to $G$, a proximity parameter $\varepsilon>0$, and oracle access to $f$,
the goal is to distinguish monotone functions from those that are $\varepsilon$-far from
monotonicity with constant success probability.
We do not charge for preprocessing or computations on $G$; only queries to $f$ are counted.
A query to a vertex $v$ returns its full value $f(v)$.
A tester is \emph{non-adaptive} if all queries are fixed in advance and has \emph{one-sided error}
if it always accepts monotone functions.

For general posets, the classical tester of Fischer et al.~\cite{fischer2002monotonicity}, which is
non-adaptive and has one-sided error,
achieves
\[
O\!\left(\min\left\{\sqrt{\frac{n}{\varepsilon}},\ \frac{\ell}{\varepsilon n}\right\}\right)
\]
queries, where $\ell$ is the number of comparable pairs, i.e., the number of edges in the transitive
closure of $G$.
They also give an $n^{\Omega(1/\log\log n)}$ lower bound for Boolean ranges against non-adaptive
testers with two-sided error, using constructions based on Ruzsa--Szemer\'edi graphs~\cite{RS78}.
This leaves a large gap from the upper bound $O(\sqrt{n/\varepsilon})$.

\subsection{Our results}
\paragraph{Query-complexity phase diagram.}
Our main result determines, up to a factor $n^{o(1)}$, how the worst-case query complexity of
randomized non-adaptive testers depends on the sizes of the transitive reduction and transitive
closure of the DAG.
We write $m$ for the number of edges in the transitive reduction of $G$ and $\ell$ for the number
of edges in its transitive closure.

\begin{theorem}[Query complexity for every pair of reachability exponents]
\label{thm:query-complexity-phase-diagram}
Fix constants $1\le c\le d\le2$ and define
\[
  u(c,d):=
  \min\left\{\frac12,\frac c3,\frac{c+d}{2}-1\right\}.
\]
\begin{enumerate}[label=(\roman*)]
  \item For every fixed $\varepsilon\in(0,1)$, every family of $n$-vertex DAGs satisfying
  $m=n^{c+o(1)}$ and $\ell=n^{d+o(1)}$ admits a non-adaptive one-sided $\varepsilon$-tester using
  $O_\varepsilon(n^{u(c,d)+o(1)})$ queries.
  \item For every sufficiently small fixed $\varepsilon>0$, there is a family of $n$-vertex DAGs
  satisfying $m=n^{c+o(1)}$ and $\ell=n^{d+o(1)}$ on which every randomized non-adaptive
  $\varepsilon$-tester, even with two-sided error, requires $n^{u(c,d)-o(1)}$ queries.
\end{enumerate}
\end{theorem}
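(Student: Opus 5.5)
The plan splits into the three terms of the minimum for the upper bound, and three matching families for the lower bound.

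\textbf{Upper bound.} We run several non-adaptive one-sided testers in parallel and output the cheapest. The $\sqrt{n}$ bound is the sample-based tester of Fischer et al.~\cite{fischer2002monotonicity}: query $O(\sqrt{n/\varepsilon})$ random vertices and reject if a queried comparable pair is violated. The argument is a birthday-paradox count on a matching of $\Omega(\varepsilon n)$ vertex-disjoint violated pairs.

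For the term $\frac{c+d}{2}-1$ we interpolate between two strategies. The edge tester uses $O(m/(\varepsilon n))$ queries: for real-valued $f$, the violation structure on a DAG can be repaired along transitive-reduction edges. The closure-pair tester uses $\ell/(\varepsilon n)$ queries. Neither alone gives the bound, so we need a hybrid. Fix a threshold $\tau$ and split the closure pairs into those certified through "short" reachability and the rest. We sample $\sqrt{m\ell}/n = n^{(c+d)/2-1}$ vertices together with structured neighbourhoods, namely random out- and in-sets of size $\sqrt{\ell/m}$ along paths. A violating matching edge $(u,v)$ is then caught when a sampled ancestor set of $u$ meets a sampled descendant set of $v$ through an intermediate vertex $w$ with $f(w)$ on the wrong side of one endpoint. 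This is a meet-in-the-middle or birthday argument over the transitive closure.

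For the $c/3$ term we expect a three-way birthday collision. We sample $n^{c/3}$ vertices in each of three layers defined by reduction-edge structure. A violated pair $(u,v)$ routes through a reduction edge $(x,y)$, and one of $(u,x)$, $(x,y)$, $(y,v)$ must witness a violation. Counting shows a collision among three sets of size $m^{1/3}$ suffices.

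\textbf{Lower bound.} For each regime we build a family and apply Yao's principle. We pick distributions $\mathcal{D}_{\mathrm{yes}}$ (monotone) and $\mathcal{D}_{\mathrm{no}}$ ($\varepsilon$-far) that any $q$-query set fails to distinguish unless $q\ge n^{u-o(1)}$. The tool is a Ruzsa--Szemer\'edi family of positive matchings: many induced matchings $M_1,\dots,M_t$ between two layers $A,B$ with $|M_i|=n^{1-o(1)}$. The DAG is bipartite $A\to B$ plus padding gadgets that tune $m$ and $\ell$. In the no-distribution we choose a random $i$, set values so that exactly $M_i$ is violated, and make the other edges consistent. Distinguishing requires querying both endpoints of some edge of $M_i$. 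Since the matchings are induced and number $t=n^{2-o(1)}/n$, a query set $Q$ spans at most $|Q|^2$ pairs. The probability that a random $M_i$ is hit is at most about $|Q|^2/(t\,|M_i|)$, which forces $|Q|\ge n^{1/2-o(1)}$. Matching the $c/3$ and $\frac{c+d}{2}-1$ regimes means using sparser RS families with $m$ reduction edges and layered chains that blow up $\ell$ while keeping the transitive reduction of size $m$. The exponent is then governed by $|Q|^2\cdot(\text{closure density})$ or by $|Q|^3$ via three-layer chains. For two-sided error we also need real-valued encodings under which the yes and no distributions agree exactly on every set not containing a violated pair; random shifts of the values achieve this.

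\textbf{Main obstacle.} The hardest step is making the upper-bound hybrid for $\frac{c+d}{2}-1$ match the lower-bound constructions exactly, and in particular arranging the RS-based families so that $m$ and $\ell$ independently hit $n^{c}$ and $n^{d}$ without giving the tester shortcuts. I would first settle the extreme cases $d=2$ and $c=d$, and then interpolate by taking disjoint unions and chain blow-ups.
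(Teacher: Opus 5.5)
Your proposal has genuine gaps in both directions. For the upper bound, the ``edge tester with $O(m/(\varepsilon n))$ queries'' is false. Let $n/2$ sources point to a vertex $x$, add the edge $x\to y$, and let $y$ point to $n/2$ sinks. With $f=1$ on the sources and $x$, and $f=0$ on $y$ and the sinks, $f$ is about $1/2$-far and $m\approx n$, yet $(x,y)$ is the only violating reduction edge. Had your bound been true, its exponent $c-1\le (c+d)/2-1$ would already beat the target and contradict part (ii) whenever $c<d$. The ancestor/descendant hybrid and the ``three-way collision'' are not analyzed, and they miss the mechanism behind both new testers. Take a violating matching $M$ with $|M|>\varepsilon n$. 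For each pair in $M$, fix a reduction path and charge the pair to the \emph{first} violating reduction edge $e$ on it. Splitting the bucket $M_e$ at a median value gives sets $A_e^+,B_e^-$ of size at least $|M_e|/2$ such that every pair in $A_e^+\times B_e^-$ is a violating closure pair, and these sides are disjoint across different $e$. Then threshold the number $|F_f|$ of violating reduction edges. If $|F_f|\ge\tau$, then $O(m/\tau)$ uniform reduction edges suffice. Otherwise Cauchy--Schwarz gives $\sum_e|M_e|^2\ge\varepsilon^2n^2/\tau$. Then either $O(\tau\ell/(\varepsilon^2n^2))$ uniform closure pairs suffice (balance at $\tau=\varepsilon n\sqrt{m/\ell}$), or two independent vertex samples of size $O(\sqrt\tau/\varepsilon)$ hit both sides of some rectangle (balance at $\tau=(\varepsilon m)^{2/3}$). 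The $c/3$ term is thus a two-way birthday collision plus edge sampling. The edge $e$ itself need not be sampled, and requiring it to be hit as well would only give about $\sqrt{m/\varepsilon}$.

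For the lower bound, you conflate induced and positive matchings. The best known constructions give only $n^{\Theta(1/\log\log n)}$ induced matchings of linear size. Matchings of size $n^{1-o(1)}$ do not make the NO instances $\varepsilon$-far for fixed $\varepsilon$. Positivity is what lets a non-induced matching be exactly the violation set for real values: use a monotone $g_i$ with slack $1/2$ off $M_i$, plus equal versus cyclically shifted noise on matched endpoints. Shifts $(x,z)\mapsto(x+a,z+\|a\|^2)$ then give polynomially many linear-size positive matchings. Your hitting bound $|Q|^2/(t|M_i|)$ is also wrong; edge-disjointness gives $|Q|^2/(4t)$, hence only $\Omega(\sqrt t)$. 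This is the crux. In a bipartite graph with $m$ edges, $t$ edge-disjoint matchings of size $\eta n_0$ force $t\le\min\{m/(\eta n_0),n_0/\eta\}\le\sqrt m/\eta$. So this counting never exceeds about $m^{1/4}$, and on cloud-type blow-ups with $\sqrt{m\ell}/n=\Theta(t)$ it gives only the square root of the target. The matching bounds need a lower bound \emph{linear} in the number $s$ of matchings: $\widetilde\Omega(s)$ with $s\approx m^{1/3}$, or with $s\approx\sqrt{m\ell}/n$. That requires bounded label exposure: every query set of size about $s/\log n$ contains both endpoints of edges from at most $s/20$ of the matchings. Deterministic quadratic heights fail this. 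For even $k$, let the left queries vary in half of the coordinates and the right queries in the other half, with compensating $z$-coordinates; about $2\sqrt s$ queries then expose all $s$ labels. The paper instead uses random heights $C\|a\|^2+\xi_a$, independent over $a$, so that a Chernoff bound and a union bound over query sets apply. Your ``$|Q|^3$ via three-layer chains'' provides no such property. The $1/\varepsilon$ dependence and the blue and yellow regions also require planting the NO instance in a random one of $\Theta(1/\varepsilon)$ copies and deterministic padding, which your sketch does not specify.
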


\Cref{fig:monotonicity-tester} shows the resulting phase diagram, and
\Cref{tab:main-results} summarizes the underlying quantitative upper and lower bounds, including
their dependence on $\varepsilon$.

\begin{figure}[t]
    \centering
    \includegraphics[width=0.4\linewidth]{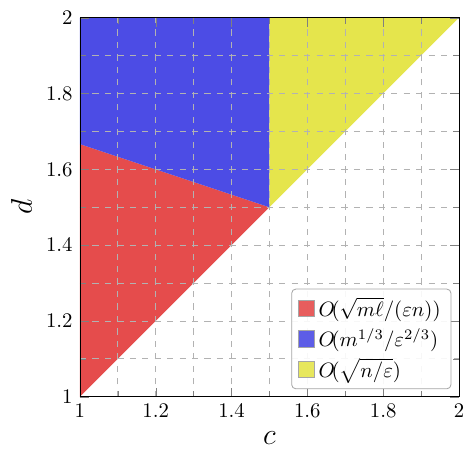}
    \caption{For fixed $\varepsilon>0$, the colors indicate which of the three terms in the
    upper bound $u(c,d)=\min\{1/2,c/3,(c+d)/2-1\}$ determines the exponent of the query
    complexity when
    $m=n^{c+o(1)}$ and $\ell=n^{d+o(1)}$
    (\Cref{thm:query-complexity-phase-diagram}\textup{(i)}).  The displayed triangle
    $1\le c\le d\le2$ is the natural parameter range for connected underlying undirected graphs;
    the bounds themselves do not require connectivity.  For every sufficiently small fixed
    $\varepsilon$, \Cref{thm:query-complexity-phase-diagram}\textup{(ii)} provides a matching lower
    bound at every fixed pair in the triangle, up to an arbitrarily small exponent loss, against
    randomized non-adaptive testers with two-sided error.}
    \label{fig:monotonicity-tester}
\end{figure}

\paragraph{Upper bounds.}
The upper bound in \Cref{thm:query-complexity-phase-diagram}\textup{(i)} combines the classical
$O(\sqrt{n/\varepsilon})$-query tester~\cite{fischer2002monotonicity} with two new
testers.  Our $m$-dependent tester combines samples from the transitive reduction with two
independent samples of vertices and uses
\[
  O\!\left(\frac{m^{1/3}}{\varepsilon^{2/3}}\right)
\]
queries (\Cref{thm:m13}).  Our $(m,\ell)$-dependent tester samples from both the transitive
reduction and the transitive closure and uses
\[
  O\!\left(\frac{\sqrt{m\ell}}{\varepsilon n}\right)
\]
queries (\Cref{thm:sqrt-ml}).  Both testers are non-adaptive and have one-sided error.  Taking the
best of the three gives
\[
  O\!\left(
    \min\left\{
      \sqrt{\frac n\varepsilon},
      \frac{m^{1/3}}{\varepsilon^{2/3}},
      \frac{\sqrt{m\ell}}{\varepsilon n}
    \right\}
  \right)
\]
queries.  For fixed $\varepsilon$, $m=n^{c+o(1)}$, and $\ell=n^{d+o(1)}$, the three terms have
exponents $1/2$, $c/3$, and $(c+d)/2-1$, respectively, which gives $u(c,d)$.  The two new bounds
improve the classical bound when $m=o(n^{3/2})$ and $m\ell=o(n^3)$, respectively.

\paragraph{Lower bounds via positive matchings.}
The lower bound in \Cref{thm:query-complexity-phase-diagram}\textup{(ii)} matches all three terms in the upper
bound, up to a factor $n^{o(1)}$, even for randomized non-adaptive testers with two-sided error.
Our construction begins with a \emph{PMRS family}, short for a Ruzsa--Szemer\'edi family of positive
matchings.  A matching $M$ in a graph is \emph{positive} if some weighting of the vertices makes the sum of
the weights of the endpoints positive exactly on the edges of $M$.  For a bipartite DAG, this is equivalent
to $M$ being exactly the set of violating edges of a real-valued function.

For every constant $\delta>0$, every fixed $\varepsilon\in(0,1/8]$, and infinitely many $n$, PMRS
families yield $n$-vertex bipartite DAGs on which every randomized non-adaptive $\varepsilon$-tester,
even with two-sided error, needs
\[
  \Omega_\delta\!\left(\frac{n^{1/2-\delta}}{\sqrt{\varepsilon}}\right)
\]
queries (\Cref{thm:pmrs_near_sqrt}).  This nearly matches the classical
$O(\sqrt{n/\varepsilon})$ upper bound and improves the previous
$n^{\Omega(1/\log\log n)}$ lower bound~\cite{fischer2002monotonicity}.

To obtain matching lower bounds for the two parameter-dependent terms, we construct PMRS families with
\emph{bounded label exposure}: no small query set contains both endpoints of edges from more than a small fraction of
the designated matchings.  On suitable families of DAGs, this gives
\[
  \Omega_\delta\!\left(\frac{m^{1/3-\delta}}{\varepsilon^{2/3}}\right)
  \qquad\text{and}\qquad
  \Omega\!\left(\frac{1}{\log n}\cdot
  \frac{\sqrt{m\ell}}{\varepsilon n}\right)
\]
for every $\delta>0$ and every sufficiently small fixed $\varepsilon>0$
(\Cref{thm:m-third-twosided,thm:ell-dependent-twosided}).  The first differs from its upper bound
by an arbitrarily small power of $m$, and the second by a logarithmic factor.

\begin{table}[t!]
  \centering
  \caption{Quantitative upper bounds and lower bounds retaining their dependence on $\varepsilon$.
  The lower bounds are attained on suitable families of DAGs.}
  \label{tab:main-results}
  \small
  \begin{tabular}{@{}p{0.14\linewidth}p{0.38\linewidth}>{\raggedright\arraybackslash}p{0.42\linewidth}@{}}
\toprule
Parameters & Upper bound (one-sided) & Lower bound on suitable DAGs (two-sided) \\
\midrule
$n$ & $O(\sqrt{n/\varepsilon})$~\cite{fischer2002monotonicity}
    & $\Omega_\delta(n^{1/2-\delta}/\sqrt{\varepsilon})$ (\Cref{thm:pmrs_near_sqrt}) \\
\addlinespace
$m$ & $O(m^{1/3}/\varepsilon^{2/3})$ (\Cref{thm:m13})
    & $\Omega_\delta(m^{1/3-\delta}/\varepsilon^{2/3})$ (\Cref{thm:m-third-twosided}) \\
\addlinespace
$m,\ell$ & $O(\sqrt{m\ell}/(\varepsilon n))$ (\Cref{thm:sqrt-ml})
    & $\Omega(\sqrt{m\ell}/(\varepsilon n\log n))$ (\Cref{thm:ell-dependent-twosided}) \\
\bottomrule
  \end{tabular}
\end{table}

\FloatBarrier

\subsection{Technical overview}
\subsubsection{Upper bounds parameterized by the transitive reduction and closure}
For fixed $\varepsilon$, the three terms $1/2$, $c/3$, and $(c+d)/2-1$ in $u(c,d)$ come from
the classical tester, our tester whose query bound depends on $m$ but not $\ell$, and our tester
using both the transitive reduction and closure, respectively.  We describe the two terms supplied
by our new testers.

\paragraph{Partitioning by the first violating edge.}
The upper bounds start from the characterization of distance via the violation graph
(\Cref{sec:preliminaries}).  If $f$ is $\varepsilon$-far, the violation graph in the transitive
closure contains a matching $M$ of size greater than $\varepsilon n$.  Thus there are many disjoint
comparable pairs $(u,v)$ with $u\leadsto v$ and $f(u)>f(v)$.  The difficulty is that these pairs
may be edges only of the transitive closure, whereas the transitive reduction may contain very few
violating edges.

For every $(u,v)\in M$, fix a path $P_{uv}$ in the transitive reduction and assign $(u,v)$ to the
first violating edge on this path.  Writing
\[
  F_f:=\{(x,y)\in E:f(x)>f(y)\},
  \qquad
  M=\biguplus_{e\in F_f}M_e,
\]
produces a bucket $M_e$ for each violating edge $e\in E$.  Because $M$ is a matching, the sets
$A_e$ of sources and $B_e$ of sinks appearing in the buckets are disjoint across distinct $e$.
For every nonempty bucket, split its sources and sinks around a median value to obtain sets
$A_e^+\subseteq A_e$ and $B_e^-\subseteq B_e$, each of size at least $|M_e|/2$.
Every $a\in A_e^+$ reaches the tail of $e$, and the head of $e$ reaches every $b\in B_e^-$;
moreover, $f(a)>f(b)$.  Thus every pair in $A_e^+\times B_e^-$ is a violating pair in the
transitive closure.  Both testers use these violating rectangles, and the disjointness of their
left and right sides allows their detection probabilities to be combined across buckets.

\paragraph{Combining edge and vertex sampling.}
Before making any queries, the first tester samples
edges from $E$ and independently draws two random multisets of vertices $Q_L,Q_R$; it then checks
every reachable pair between them
(\Cref{alg:edge-vertex-sampling}); reachability computation is free in our model.  Fix a threshold $\tau$.
If $|F_f|\ge\tau$, direct sampling from $E$ succeeds with a constant probability using
$q=O(m/\tau)$ samples.  If $|F_f|<\tau$, the tester rejects whenever, for some bucket $e$,
$Q_L$ meets $A_e^+$ and $Q_R$ meets $B_e^-$.

In the second case, a birthday-paradox argument for hitting both sides of the same violating
rectangle shows that $q=O(\sqrt{\tau}/\varepsilon)$ samples in each of $Q_L$ and $Q_R$ suffice.
Balancing the two requirements by taking
$\tau=(\varepsilon m)^{2/3}$ yields
\[
  q=O\!\left(\frac{m^{1/3}}{\varepsilon^{2/3}}\right)
\]
queries (\Cref{thm:m13}), independently of $\ell$.

\paragraph{Sampling from the transitive reduction and closure.}
The second tester also uses these violating rectangles but samples pairs directly from the transitive
closure instead of combining vertices drawn independently for $Q_L$ and $Q_R$
(\Cref{alg:tr-hybrid}).  This makes the
detection argument even simpler.  When $|F_f|<\tau$, the violating rectangles certify that the closure
contains at least
\[
  \frac14\sum_{e\in F_f}|M_e|^2
  \ge \frac{|M|^2}{4|F_f|}
  > \frac{\varepsilon^2n^2}{4\tau}
\]
violating pairs.  Hence $q=O(\tau\ell/(\varepsilon^2n^2))$ samples from the transitive closure
suffice in this case, whereas $q=O(m/\tau)$ samples of edges from the transitive reduction suffice
when $|F_f|\ge\tau$.
Balancing them at $\tau=\varepsilon n\sqrt{m/\ell}$ gives
\[
  q=O\!\left(\frac{\sqrt{m\ell}}{\varepsilon n}\right)
\]
queries (\Cref{thm:sqrt-ml}).  For constant $\varepsilon$, this improves the classical
$O(\sqrt n)$ bound exactly when $m\ell=o(n^3)$.

\subsubsection{A lower-bound framework via positive matchings}
\paragraph{From induced matchings to positive matchings.}
The classical approach to proving lower bounds for monotonicity testing on general posets, pioneered by
Fischer et al.~\cite{fischer2002monotonicity}, is built around
\emph{Ruzsa--Szemer\'edi (RS) graphs}: one packs many large \emph{induced} matchings and uses a
randomly selected matching to generate YES/NO instances that are difficult to distinguish.  Matchings
are natural because their edges are vertex-disjoint.  If many edges in the matching are violated,
then repairing the function requires changing
at least one endpoint of each edge; at the same time, a tester typically needs to query both endpoints
of a matched pair to see the correlation that distinguishes YES from NO.

The inducedness requirement plays two opposing roles in the construction for Boolean-valued
functions.  On the one hand, it keeps the violation pattern clean, since extra edges among the
endpoints of matching edges would immediately introduce additional Boolean violations.  On the
other hand, the strength of the RS framework is governed by how many large induced matchings can
be packed into one graph: the hard distribution hides a uniformly selected matching among the
$t$ choices.  The construction of Fischer et al.~\cite{fischer2002monotonicity}, however, packs
only $t=n^{\Theta(1/\log\log n)}$ induced matchings of linear size and consequently yields a
lower bound of $n^{\Omega(1/\log\log n)}$ queries, still far below the birthday bound $\sqrt n$.

Our first step is to identify the correct replacement for inducedness when the function range is
$\mathbb R$.  On a bipartite DAG, the question ``can a matching be realized as exactly the
set of violating edges?'' has a precise answer: this is possible if and only if the matching is
positive (\Cref{lem:pm_equiv}).  Informally, a matching $M$ is positive if there is an assignment
$w$ of weights to vertices such that
\[
  w(\ell)+w(r)>0 \quad\Longleftrightarrow\quad (\ell,r)\in M
\]
on every edge of the bipartite graph under consideration.  Equivalently, positivity excludes alternating
closed walks on the vertices saturated by $M$ (\Cref{thm:positive_alt_walk}).  This motivates
PMRS families, which replace induced matchings by large positive matchings with pairwise disjoint
edge sets (\Cref{def:pmrs}).  This relaxation is unavailable for the Boolean range, but for
real-valued functions it bypasses the limitations on packing induced matchings while retaining
exactly the structural property required by the lower-bound argument.

\paragraph{From PMRS to a lower bound with two-sided error.}
Suppose a bipartite graph contains $s$ positive matchings with pairwise disjoint edge sets,
$M_1,\ldots,M_s$, each containing $\Theta(n)$ edges.  We hide a uniform index $I\in[s]$ and use positivity to
construct a monotone function $g_I$ that is equal on the endpoints of $M_I$ and has constant
slack on every edge outside $M_I$.  We then add independent noise from an alphabet of constant size
to the endpoints of the edges in $M_I$.  In the YES distribution, the two endpoints of each edge in $M_I$ receive
the same noise.  In the NO distribution, the noise at the right endpoint is cyclically shifted, so
a constant fraction of the edges in $M_I$ are violated while the slack prevents violations outside $M_I$
(\Cref{sec:pmrs-twosided-lb}).

The key fact is that the YES and NO distributions have the same marginal distribution at every
vertex; they differ only in the local correlation between the two endpoints
of an edge in $M_I$.  Hence, for any query set $Q$ chosen non-adaptively, the transcript has the
same distribution in the two cases unless $Q$ contains both endpoints of an edge in the randomly selected matching.  Since the
matchings are edge-disjoint, $Q$ contains both endpoints of edges from at most
$|Q\cap L|\,|Q\cap R|\le |Q|^2/4$ of the matchings.
Let $\Tr(\mathcal D,Q)$ denote the transcript on $Q$ when the input is drawn from $\mathcal D$.
Consequently,
\[
  \TV\bigl(\Tr(\mathcal D^+,Q),\Tr(\mathcal D^-,Q)\bigr)
  \le \frac{|Q|^2}{4s},
\]
which yields the $\Omega(\sqrt s)$ lower bound against non-adaptive testers with two-sided error
(\Cref{thm:twosided-lb}).  This controls the distribution of the entire transcript, not merely the
probability of querying a violating edge: the full numerical transcripts are statistically
indistinguishable before both endpoints of an edge in the randomly selected matching are queried.

Positivity is also closed under taking submatchings (\Cref{lem:positive_submatching}).  We may
therefore partition every $M_i$, which contains $\Theta(n)$ edges, into
$\Theta(1/\varepsilon)$ positive matchings of size $\Theta(\varepsilon n)$, increasing by the same
factor the number of matchings available as choices for the random index.  Applying the
$\Omega(\sqrt s)$ reduction to the refined family gives the additional
$1/\sqrt\varepsilon$ factor (\Cref{lem:pmrs_refinement} and \Cref{cor:pmrs_eps_lb}).

\paragraph{An explicit construction with polynomially many positive matchings.}
It remains to build many positive matchings.  They must be large and have pairwise disjoint edge
sets.  Our construction
(\Cref{sec:explicit_pmrs}) labels each vertex by $(x,z)$, where $x\in[N]^k$ and
$z\in[N^2]$.  A shift vector $a$ indexes the matching
\[
  (x,z)_L\longmapsto(x+a,z+\|a\|_2^2)_R.
\]
The nontrivial point is positivity.  For every $a$ we give an explicit weight function $w_a$ such
that, on an edge belonging to shift $b$,
\[
  w_a(\ell)+w_a(r)=\frac12-\|b-a\|_2^2.
\]
Because the shifts are integral, this expression is positive precisely for $b=a$ and is at most
$-1/2$ otherwise.  Thus the matching associated with every shift is positive, even though it need
not be induced.

The two sides have size $n_0=N^{k+2}$, while the construction contains
$s=\Theta(N^k)=\Theta(n_0^{k/(k+2)})$ positive matchings, each containing $\Theta(n_0)$ edges
(\Cref{thm:explicit_pmrs}).  Applying the reduction above therefore gives
\[
  \Omega\!\left(n_0^{k/(2k+4)}\right)
  =\Omega\!\left(n_0^{1/2-1/(k+2)}\right)
\]
queries, and the refinement above supplies the factor $1/\sqrt\varepsilon$.
Taking $k$ to be a sufficiently large constant gives the final
$\Omega_\delta(n^{1/2-\delta}/\sqrt\varepsilon)$ lower bound
(\Cref{thm:pmrs_near_sqrt}).

The bounded-label-exposure construction described below can also recover a
near-$\sqrt{n/\varepsilon}$ lower bound.  We nevertheless prove this bound first using the simpler
PMRS construction above as a warm-up application of the framework, thereby isolating the basic PMRS
and correlated-noise argument from the additional machinery needed to control $m$ and $\ell$.

\subsubsection{Matching lower bounds for the phase diagram}
\paragraph{Why bounded label exposure is needed.}
On sparse graphs, our new testers improve on the classical $\sqrt n$ upper bound.  To prove matching
lower bounds, we strengthen the basic PMRS argument used for the near-$\sqrt{n/\varepsilon}$ lower
bound.  That argument bounds the probability of querying both endpoints of an edge in the randomly
selected matching by $O(|Q|^2/s)$, yielding only an $\Omega(\sqrt s)$ query lower bound.  This
square-root loss suffices in the near-$\sqrt{n/\varepsilon}$ setting but is too costly for the
sparse-graph lower bounds.  We therefore require the PMRS family to have bounded label exposure: for
every query set $Q$ of size at most
$q_0$, only a small fraction of the indices $i\in[s]$ have an edge of $M_i$ whose two endpoints lie
in $Q$ (\Cref{def:bounded-label-exposure-pmrs}).  The same YES and NO distributions described above then give
small total variation for every such $Q$, so every non-adaptive tester with two-sided error needs more than
$q_0$ queries (\Cref{lem:bounded-label-exposure-to-twosided}).

\paragraph{Independent perturbations ensure bounded label exposure.}
To construct such a family of matchings, take shifts
$A=\{0,\ldots,P-1\}^t\times\{0\}^{r-t}$, so $s=|A|=P^t$.  For each shift $a$, use the following
independently perturbed displacement in the $z$-coordinate:
\[
  q(a)=C\|a\|_2^2+\xi_a,
  \qquad C=4s,
  \qquad \xi_a\sim\Unif\{0,\ldots,s-1\}.
\]
The variables $(\xi_a)_{a\in A}$ are mutually independent.
The matching indexed by $a$ again maps
$(x,z)_L$ to $(x+a,z+q(a))_R$.  The large quadratic term preserves positivity: with
$p_a=2Ca$, for every $b\ne a$,
\[
  q(b)-q(a)-p_a\cdot(b-a)
  =C\|b-a\|_2^2+\xi_b-\xi_a\ge1.
\]
The resulting sides have size $n_0=\Theta_{r,t}(P^{r+t+2})$, every matching has size
$\Theta(n_0)$, and $U$ has $m_0=\Theta(n_0s)$ edges.

The perturbations ensure bounded label exposure.  Fix a query set $Q\subseteq L\cup R$, and write
$X:=Q\cap L$ and $Y:=Q\cap R$.  Let $\Expose(Q)$ be the set of
indices $a$ for which $Q$ contains both endpoints of an edge in the matching indexed by $a$.  Let
$r_a(Q)$ count pairs in $X\times Y$ for which the right endpoint's $x$-coordinate minus
the left endpoint's $x$-coordinate is $a$.  The event $a\in\Expose(Q)$ can occur only if, for one such pair, the
difference of the $z$-coordinates equals $q(a)$, or equivalently its value after subtracting
$C\|a\|_2^2$ equals the independent random variable $\xi_a$.  Hence
\[
  \Pr_{\xi_a}[a\in\Expose(Q)]\le\frac{r_a(Q)}s.
\]
Since $|\Expose(Q)|=\sum_{a\in A}\mathbf 1[a\in\Expose(Q)]$, linearity of expectation gives
\[
  \mathbb E_{\xi}|\Expose(Q)|
  =\sum_{a\in A}\Pr_{\xi_a}[a\in\Expose(Q)]
  \le\frac1s\sum_{a\in A}r_a(Q)
  \le\frac{|X||Y|}{s}
  \le\frac{|Q|^2}{4s}.
\]
The indicators of the events $a\in\Expose(Q)$ are independent over $a$.  Thus, a Chernoff bound for
each fixed $Q$, followed by a union bound over all query sets, gives a single choice of the perturbations
for which $|\Expose(Q)|\le s/20$ whenever $|Q|\le q_0$, with
\[
  q_0=\Theta_{r,t}\!\left(\frac{s}{\log n_0}\right)
\]
(\Cref{lem:random-height-pmrs}).  Together with
\Cref{lem:bounded-label-exposure-to-twosided}, this shows that $\Omega_{r,t}(s/\log n_0)$ queries are necessary.

\paragraph{Planting a NO instance in a uniformly random copy.}
A PMRS family with bounded label exposure gives the required lower bound when the distance parameter is
constant.  To recover the dependence on $\varepsilon$, take $T=\Theta(1/\varepsilon)$ disjoint
copies and plant one NO instance in a uniformly random copy.  If a query set uses fewer than $Tq_0$
queries, it cannot spend more than $q_0$ queries on more than a small fraction of the copies.
Conditioned on selecting a copy that receives at most $q_0$ queries, the transcripts from that copy
remain indistinguishable.  This gives an $\Omega(Tq_0)$ lower bound
(\Cref{lem:copy-planting}).

For the choice $r=t=k$, the parameters of the graph are
\[
  n_0=\Theta_k(P^{2k+2}),
  \qquad s=P^k,
  \qquad m_0=\Theta_k(P^{3k+2}).
\]
Writing $\alpha_k=k/(3k+2)$, we have
$q_0=\widetilde\Omega(m_0^{\alpha_k})$.  Since the disjoint union of the $T$ copies has $m=Tm_0$
edges,
planting the NO instance in a uniformly random copy gives
\[
  \widetilde\Omega\!\left(m^{\alpha_k}T^{1-\alpha_k}\right).
\]
As $k\to\infty$, $\alpha_k\to1/3$ and $1-\alpha_k\to2/3$.
Absorbing the logarithmic loss into $m^\delta$ yields
$\Omega_\delta(m^{1/3-\delta}/\varepsilon^{2/3})$
(\Cref{thm:m-third-twosided}), matching the first tester up to an arbitrarily small exponent loss.

\paragraph{The cloud lift and the $(m,\ell)$ lower bound.}
To tune $\ell$ separately from $m$, start from the bipartite graph $U$ above, with sides $L$ and $R$.
Attach $D$
new predecessors to every $u\in L$ and $D$ new successors to every $v\in R$, while retaining each
edge of $U$ as a directed edge from $L$ to $R$.  An original vertex together with its attached
vertices forms a cloud $C(x)$, on which the function is constant.  Each original edge $(u,v)$ then
makes all $(D+1)^2$ pairs in $C(u)\times C(v)$ comparable.  Write
$n_{\mathrm{cl}},m_{\mathrm{cl}},\ell_{\mathrm{cl}}$ for the number of vertices, the number of edges
in the transitive reduction, and the number of edges in the transitive closure, respectively.  One
lifted copy with $1\le D\le s$ has
\begin{equation}
\label{eq:cloud-lift-overview-parameters}
  n_{\mathrm{cl}}=\Theta(n_0D),
  \qquad m_{\mathrm{cl}}=\Theta(n_0s),
  \qquad \ell_{\mathrm{cl}}=\Theta(n_0sD^2).
\end{equation}
Replacing each queried vertex by the original vertex of its cloud does not increase the number of
queried vertices, so the bounded-label-exposure guarantee still applies.  At the same time, every violating
edge of $M_i$ supplies $D+1$ vertex-disjoint violating pairs in the lifted graph
(\Cref{lem:cloud-lift-hardness}).

After taking $T=\Theta(1/\varepsilon)$ copies and planting one NO instance in a uniformly random
one, the lower bound is $\Omega(Ts/\log n)$, and \eqref{eq:cloud-lift-overview-parameters} gives
\[
  \frac{\sqrt{m\ell}}n=\Theta(s).
\]
Hence every randomized non-adaptive $\varepsilon$-tester, even with two-sided error, needs
\[
  \Omega\!\left(
    \frac1{\log n}\cdot\frac{\sqrt{m\ell}}{\varepsilon n}
  \right)
\]
queries
(\Cref{thm:ell-dependent-twosided}).  Finally, writing
$s=n_0^\alpha$ and $D=n_0^\beta$ gives
\[
  c=\frac{1+\alpha}{1+\beta},
  \qquad
  d=\frac{1+\alpha+2\beta}{1+\beta}.
\]
Approximating any $0\le\beta\le\alpha<1/2$ by the integer parameters of the construction covers
the entire interior $c+3d<6$ of the red region in \Cref{fig:monotonicity-tester}, up to an
arbitrarily small exponent loss (\Cref{cor:red-region}).

\paragraph{Completing the phase diagram by deterministic padding.}
Adjoin a disjoint DAG whose number of vertices is at most a constant times the number already
present.  Use the same fixed monotone function on this DAG in the YES and NO distributions.  This
preserves the distributions of the observed transcripts, while normalized distance decreases by at most a constant factor
(\Cref{lem:copy-planting-padding}).  For $c\le3/2$ and $c+3d\ge6$, choose the size of each cloud so that the
part built from the PMRS family has $m=n^{c+o(1)}$ and lies near the boundary between the red and blue regions, then
add a chain whose closure has $n^{d+o(1)}$ edges.  For $c\ge3/2$, start from a graph obtained from a PMRS family whose
parameters lie near the point where the three regions meet, add a DAG of height two with
$n^{c+o(1)}$ cover edges, and add a chain with $n^{d+o(1)}$ comparable pairs.  This yields the lower
bound for every fixed pair in \Cref{thm:query-complexity-phase-diagram}\textup{(ii)}.

\subsection{Related work}
\label{sec:related-work}
\paragraph{Monotonicity testing.}
Monotonicity testing on the Boolean hypercube was initiated by Goldreich et al.~\cite{GGLRS00}.
For product domains, subsequent work gave improved testers, near-optimal bounds for hypercubes and
hypergrids, and lower bounds over product domains and hypergrids~\cite{CS13a,CS13b,DGLRRS99,BB16,BRY14,CS14,CDST15,CST14};
connections to isoperimetric inequalities for the Boolean cube were developed in~\cite{KMS15}.
For general posets, testing was initiated by Fischer et al.~\cite{fischer2002monotonicity}, and a
complementary line used transitive-closure spanners and reachability sparsification to design testers
on structured families of graphs~\cite{BGJRW12,Raskhodnikova10}.
Distance estimation and tolerant testing were studied in~\cite{ACCL07,PRR06}; for Boolean functions
on general posets, tolerant-testing consequences also follow from the local-correction and
proper-learning framework of Lange, Rubinfeld, and Vasilyan~\cite{LRV22}.
Local reconstruction of monotone functions and its limitations appear in~\cite{BGJJRW10,SS10}.
For real-valued functions on product domains, monotonicity has been connected to isoperimetric
inequalities for ordered ranges~\cite{BKR23,CS13b}.
\paragraph{Graph-theoretic background.}
Graphs whose edges can be partitioned into induced matchings (Ruzsa--Szemer\'edi graphs) have been
extensively studied since~\cite{RS78}.  In the linear regime $r=cn$, the number of induced matchings
is tightly constrained in several ranges: Fox, Huang, and Sudakov show that it is $O(1)$ for $c>1/4$,
$\Theta(\log n)$ for $c=1/4$, and $O(n/\log n)$ for every fixed $c>1/5$ with $c<1/4$
~\cite{fox_huang_sudakov_induced_2015}.  For every fixed $c<1/4$, Fischer et al. construct graphs with
$n^{\Omega(1/\log\log n)}$ induced matchings of size $cn$~\cite{fischer2002monotonicity}.
If slightly sublinear matchings are allowed, Alon, Moitra, and Sudakov construct nearly complete
graphs whose edges decompose into induced matchings of size $n^{1-o(1)}$~\cite{alon_moitra_sudakov_2011}.

Positive matching decompositions have been studied independently
~\cite{farrokhi_gharakhloo_yazdanpour_pmd_2021}.
Positive matchings are characterized by the absence of alternating closed walks; in bipartite graphs
this is equivalent to being alternating-cycle-free, or uniquely restricted
~\cite{farrokhi_gharakhloo_yazdanpour_pmd_2021,golumbic_hirst_lewenstein_2001}.
For real-valued functions, PMRS families play for positive matchings the role that RS graphs play
for induced matchings.  Bounded label exposure is the additional property that makes the lower bounds in
terms of $m$ and $\ell$ nearly attain the corresponding upper bounds.

\subsection{Organization}
\Cref{sec:preliminaries} introduces notation and basic facts.
\Cref{sec:upper-bounds} establishes the upper bound on query complexity for every fixed pair $(c,d)$.
\Cref{sec:pmrs} then develops the PMRS lower-bound framework, its explicit construction, and the
near-$\sqrt{n/\varepsilon}$ lower bound.
\Cref{sec:bounded-label-exposure-lower-bounds} strengthens the PMRS framework with bounded label exposure and proves a lower bound with the
same exponent for every such pair.
 \section{Preliminaries}
\label{sec:preliminaries}

\subsection{Graph notation}
Let $G=(V,E)$ be a DAG with $n:=|V|$.
We write $u\leadsto v$ if either $u=v$ or there is a directed path of positive length from $u$ to $v$
in $G$.  Thus $\leadsto$ denotes reflexive reachability.  Throughout, transitive closures contain only
nontrivial comparable pairs: the \emph{transitive closure} $\mathrm{TC}(G)$ has edge set
\[
E(\mathrm{TC}(G)) := \{(u,v)\in V\times V : u\neq v\text{ and }u\leadsto v\},
\]
and we write $\ell := |E(\mathrm{TC}(G))|$.
For a DAG $G$, the \emph{transitive reduction} (Hasse diagram) is the unique DAG
$\mathrm{TR}(G)=(V,E_R)$ with $\mathrm{TC}(\mathrm{TR}(G))=\mathrm{TC}(G)$ and minimal edge set.
Since monotonicity depends only on reachability, we may replace $G$ by $\mathrm{TR}(G)$ without loss of
generality and hence assume $G$ is transitively reduced; we write $m:=|E|$.

\subsection{Positive matchings}
Let $\Gamma=(V,E)$ be an undirected graph and let $M\subseteq E$ be a matching.

\begin{definition}[Positive matching~\cite{farrokhi_gharakhloo_yazdanpour_pmd_2021}]\label{def:positive-matching}
The matching $M$ is \emph{positive} (with respect to $\Gamma$) if there exists
a weight function $w:V\to\mathbb{R}$ such that for every edge $e=\{u,v\}\in E$,
\[
w(u)+w(v) > 0 \quad\Longleftrightarrow\quad e\in M,
\]
equivalently: $w(u)+w(v)>0$ for $e\in M$ and $w(u)+w(v)\le 0$ for $e\notin M$.\footnote{Some references require $w(u)+w(v)<0$ for $e\notin M$. For finite graphs, the two formulations are equivalent: a sufficiently small perturbation of a witness $w$ makes all non-matching sums strictly negative without changing the signs on matching edges.}
\end{definition}

We recall alternating walks and the characterization of positive matchings via alternating closed walks,
and its bipartite specialization to alternating-cycle-free (uniquely restricted) matchings.
Let $V(M):=\{v\in V : v \text{ is incident to an edge of } M\}$.

\begin{definition}[Alternating walks and cycles]
Let $\Gamma=(V,E)$ be a graph and $M\subseteq E$ a matching.
An \emph{$M$-alternating walk} is a walk whose edges alternate between $M$ and
$E\setminus M$.
An \emph{$M$-alternating closed walk} is an alternating walk that starts and ends at
the same vertex.
An \emph{$M$-alternating cycle} is an alternating walk that is a (simple) cycle.
\end{definition}

\begin{theorem}[Characterization of positive matchings~\cite{farrokhi_gharakhloo_yazdanpour_pmd_2021,golumbic_hirst_lewenstein_2001}]
\label{thm:positive_alt_walk}
A matching $M$ in $\Gamma$ is positive if and only if the induced subgraph $\Gamma[V(M)]$
contains no $M$-alternating closed walk. In bipartite graphs this is equivalent to the
absence of an $M$-alternating cycle, and such matchings are also known as
\emph{uniquely restricted}.
\end{theorem}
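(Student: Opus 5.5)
The plan is to prove the first equivalence by LP duality (Farkas' lemma) and then obtain the bipartite refinement by a shortest-walk argument. Throughout, an $M$-alternating closed walk is read cyclically: it has even length and alternation also holds across the closing vertex.

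\emph{Reduction to $\Gamma[V(M)]$.} Suppose $w$ works on $\Gamma[V(M)]$. Extend it by $w(y):=-K$ for every $y\notin V(M)$, with $K$ larger than $\max|w|$. Then every edge with an endpoint outside $V(M)$ has nonpositive sum and lies outside $M$. Conversely, restricting a witness to $V(M)$ gives a witness on $\Gamma[V(M)]$. So it suffices to work in $\Gamma[V(M)]$, where every vertex is matched.

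\emph{Walks obstruct positivity.} Let $W=(v_0,v_1,\dots,v_{2k}=v_0)$ be an $M$-alternating closed walk in $\Gamma[V(M)]$, and suppose $w$ witnesses positivity. In cyclic order, each occurrence of a vertex on $W$ is incident to exactly one $M$-edge of $W$ and one non-$M$-edge of $W$. Hence
\[
\sum_{e\in W\cap M}\bigl(w(u_e)+w(v_e)\bigr)=\sum_{f\in W\setminus M}\bigl(w(u_f)+w(v_f)\bigr),
\]
with edges counted with multiplicity. The left side is positive because $k\ge1$. The right side is $\le0$. This is a contradiction.

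\emph{No walks gives positivity.} After rescaling, positivity is feasibility of the linear system
\[
w(u)+w(v)\ge1 \text{ for } \{u,v\}\in M,\qquad w(x)+w(y)\le0 \text{ for non-}M \text{ edges in } \Gamma[V(M)].
\]
If the system is infeasible, Farkas' lemma gives rational, and after scaling integral, multipliers $\lambda_e\ge0$ on $M$ and $\mu_f\ge0$ on non-$M$ edges with $\sum\lambda_e>0$. They satisfy the vertex balance condition: for every $v$, $\lambda_{e(v)}=\sum_{f\ni v}\mu_f$, where $e(v)$ is the matching edge at $v$.

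Form the multigraph with $\lambda_e$ copies of each $e\in M$ and $\mu_f$ copies of each $f\notin M$. At every vertex, the number of incident $M$-copies equals the number of incident non-$M$-copies. Pair them arbitrarily at each vertex. Following these pairings yields an Euler-type decomposition of all copies into closed walks that alternate, including at the closing vertex. Since $\sum\lambda>0$, at least one such walk exists. Every walk lies in $\Gamma[V(M)]$, so this contradicts the hypothesis.

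\emph{Bipartite case.} A cycle is a closed walk, so one direction is immediate. For the other, take a shortest $M$-alternating closed walk $W$ and suppose some vertex $v$ repeats, at positions $i<j$. Since $\Gamma$ is bipartite, $j-i$ is even. The $M$-edge at $v$ is unique, so at each visit exactly one of the two incident walk-edges is $e(v)$.

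\begin{itemize}
\item If at position $i$ the walk leaves $v$ along a non-$M$ edge and at position $j$ it arrives along $e(v)$, or symmetrically, then the segment from $i$ to $j$ is itself cyclically alternating.
\item If the walk leaves along $e(v)$ at both positions, the segment from $i$ to $j$ is not cyclically alternating, and we examine the complementary segment instead. It begins at position $j$ by leaving along $e(v)$. It ends at position $i$ arriving along the edge before position $i$, which is non-$M$ because the walk leaves $v$ along $e(v)$ at $i$. So the complementary segment is cyclically alternating.
\item The remaining configurations are analogous.
\end{itemize}

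In every case one of the two segments is a strictly shorter $M$-alternating closed walk, contradicting minimality. Hence the shortest walk is a simple alternating cycle.

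I expect this final case analysis of how $e(v)$ sits at the two visits to be the most delicate step. It is where bipartiteness, through the even length of the segment, is genuinely used.
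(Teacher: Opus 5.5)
The paper does not prove this theorem; it is imported from the cited references, so your proposal has to stand on its own. Most of it does. The reduction to $\Gamma[V(M)]$, the double-counting obstruction, and the Farkas-plus-Euler-decomposition argument are correct, and together they prove the first equivalence. Your cyclic reading of ``alternating closed walk'' is also necessary, not just convenient. Under the paper's literal wording, a triangle $xyz$ with $yz\in M$ plus a pendant edge $xx'\in M$ contains the alternating closed walk $x,y,z,x$, yet $M$ is positive (take $w(x)=-1$, $w(y)=w(z)=1$, $w(x')=2$).

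The gap is in the bipartite step, exactly where you expected trouble; your case analysis is wrong as written. Write the edges of $W$ as $e_1,\dots,e_{2k}$, with $e_p$ joining $v_{p-1}$ and $v_p$. Along $W$ the type of $e_p$ depends only on the parity of $p$, so whether the walk arrives at position $p$ along $e(v_p)$ or leaves along it also depends only on that parity. The segment from $i$ to $j$ closes up with first edge $e_{i+1}$ and last edge $e_j$. The complementary segment closes up with first edge $e_{j+1}$ and last edge $e_i$. In both cases the two edges have opposite types iff $j-i$ is even. This has two consequences:
\begin{itemize}
\item Your second bullet is false. If the walk leaves along $e(v)$ at both visits, the segment from $i$ to $j$ starts with $e(v)$ and ends with a non-$M$ edge, so it \emph{is} cyclically alternating.
\item The ``remaining configurations'' are not analogous at all. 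If the walk arrives along $e(v)$ at one visit and leaves along it at the other, \emph{neither} segment is cyclically alternating.
\end{itemize}
These mixed configurations are precisely those with $j-i$ odd, and only bipartiteness excludes them. You record that $j-i$ is even but never use it in the case split, so your argument would apply verbatim to non-bipartite graphs, where the conclusion fails. For example, take triangles $v_1v_2v_3$ and $uw_1w_2$ with $M=\{v_2v_3,w_1w_2,v_1u\}$. This graph has the cyclically alternating closed walk $v_1,v_2,v_3,v_1,u,w_1,w_2,u,v_1$, whose repeated vertices are all in mixed configurations, but it has no even alternating cycle. The repair is one line and needs no cases: since $j-i$ is even, $e_{i+1}$ and $e_j$ have opposite types, so $v_i,\dots,v_j$ is always a strictly shorter cyclically alternating closed walk. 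Separately, if the ``uniquely restricted'' clause is read as a claim (that $M$ is the only perfect matching of $\Gamma[V(M)]$), it needs the standard symmetric-difference argument, which you do not give.
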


\subsection{Monotonicity and violations}
A function $f:V\to\mathbb{R}$ is \emph{monotone} (order-preserving) if
$f(u)\le f(v)$ for every $(u,v)\in E(\mathrm{TC}(G))$.
Define the distance to monotonicity
\[
d_{\mathrm{mon}}(f) := \min_{g:V\to\mathbb{R}\ \text{monotone}} |\{v\in V:\ f(v)\neq g(v)\}|.
\]
For $\varepsilon>0$, we say that $f$ is \emph{$\varepsilon$-far} from monotonicity if
$d_{\mathrm{mon}}(f)>\varepsilon n$.

A \emph{violating pair} for $f$ is an edge $(u,v)\in E(\mathrm{TC}(G))$ with $f(u)>f(v)$; when
$(u,v)\in E$ we call it a \emph{violating edge}.
The \emph{violation graph} $G_f$ is the bipartite graph with left part $V_L$ (a copy of $V$),
right part $V_R$ (another copy of $V$), and an edge $\{u_L,v_R\}$ whenever $u\leadsto v$ in $G$
and $f(u)>f(v)$.
A set $S\subseteq V$ can be left unchanged by a monotone correction if and only if it is an
antichain in the violation poset defined by $u\prec_f v$ iff $u\leadsto v$ and $f(u)>f(v)$.
Consequently, by Dilworth's theorem and K\H{o}nig's theorem, $d_{\mathrm{mon}}(f)$ equals both the
maximum matching size and the minimum vertex cover size of
$G_f$~\cite{DGLRRS99,fischer2002monotonicity}.
We refer to this as the characterization of distance to monotonicity via matchings in the violation graph.

\subsection{Testers and query complexity}
We consider randomized algorithms with oracle access to $f:V\to\mathbb{R}$.
A tester queries $f$ at selected vertices and, based on the answers to those queries (and its internal randomness),
outputs \emph{accept} or \emph{reject}.
We require \emph{completeness} and \emph{soundness}: for every monotone $f$, the tester accepts with probability at
least $2/3$ (and with probability $1$ for testers with one-sided error), and for every $f$ that is $\varepsilon$-far
from monotone, it rejects with probability at least $2/3$.
A tester is \emph{non-adaptive} if its query set depends only on its internal randomness,
and has \emph{one-sided error} if it always accepts every monotone $f$.
We work in the massively parameterized model in its standard formulation: the input DAG $G$ is given explicitly and only oracle
queries to $f$ are counted. Hence any computation on $G$ (including computing $\mathrm{TR}(G)$,
$\mathrm{TC}(G)$, $\ell$, reachability tests, and sampling uniformly from $E$ or $E(\mathrm{TC}(G))$) is free.
 \section{Upper Bounds Parameterized by the Transitive Reduction and Closure}
\label{sec:upper-bounds}

This section establishes the two new upper bounds underlying the phase diagram.  The testers are
parameterized by the numbers of edges in the transitive reduction and closure; both are non-adaptive
and have one-sided error.  Together with the classical $O(\sqrt{n/\varepsilon})$ tester, they give
the upper bound in \Cref{thm:query-complexity-phase-diagram}\textup{(i)} for every fixed pair of
reachability exponents.  The bounds $O(m^{1/3}/\varepsilon^{2/3})$ and
$O(\sqrt{m\ell}/(\varepsilon n))$ are proved in \Cref{sec:m13-tester,sec:tr-tester}, respectively.

\subsection{\texorpdfstring{An $O(m^{1/3}/\varepsilon^{2/3})$-query tester}{An O(m one-third over epsilon two-thirds)-query tester}}
\label{sec:m13-tester}

In this subsection, we describe a simple tester with one-sided error whose query complexity depends
only on the number of edges in the input DAG and $\varepsilon$. Throughout this subsection, we
assume that the input DAG $G=(V,E)$ is transitively reduced and write $m:=|E|$.

\begin{algorithm}[t]
\caption{Tester combining edge and vertex sampling}
\label{alg:edge-vertex-sampling}
  \textbf{Input:} a transitively reduced DAG $G=(V,E)$ and a proximity parameter $\varepsilon$\;
  $q \gets \Theta\!\left(m^{1/3}/\varepsilon^{2/3}\right)$\;
  Sample a multiset $S_E$ of $q$ edges independently and uniformly from $E$\;
  Independently of each other and of $S_E$, sample two multisets $Q_L,Q_R$ of $q$ vertices uniformly with replacement\;
  Query $f(v)$ for every vertex $v$ in $Q_L\cup Q_R$ or incident to an edge in $S_E$\;
  \lIf{some $(u,v)\in S_E$ satisfies $f(u)>f(v)$}{reject}
  \lIf{there exist $x\in Q_L$ and $y\in Q_R$ such that $(x,y)\in E(\mathrm{TC}(G))$ and $f(x)>f(y)$}{reject}
  Accept\;
\end{algorithm}

\begin{theorem}
\label{thm:m13}
\Cref{alg:edge-vertex-sampling} is a tester for monotonicity with one-sided error and query complexity
\[
O\!\left(m^{1/3}/\varepsilon^{2/3}\right).
\]
\end{theorem}

\begin{proof}
All samples are drawn before any query is made, so \Cref{alg:edge-vertex-sampling} is non-adaptive.
It queries at most $4q$ vertices.
It has one-sided error: if $f$ is monotone, then
there are no violating pairs in $\mathrm{TC}(G)$, and in particular there are no violating edges in $E$,
so the algorithm always accepts.

Fix a function $f:V\to\mathbb{R}$ that is $\varepsilon$-far from monotonicity. By the
characterization via matchings in the violation graph recalled in
\Cref{sec:preliminaries}~\cite{DGLRRS99,fischer2002monotonicity}, there exists a matching $M$ of
violating pairs in $\mathrm{TC}(G)$ such that
\begin{equation}
\label{eq:matching-size}
|M| > \varepsilon n.
\end{equation}
Let
\[
F_f := \{(x,y)\in E : f(x)>f(y)\}
\]
be the set of \emph{violating edges} in $G$.

We use the following dichotomy, governed by a threshold parameter $\tau>0$ to be chosen later.

\paragraph{Case 1: $|F_f|\ge \tau$.}
Each edge in $S_E$ is violating with probability
at least $|F_f|/m \ge \tau/m$. Hence,
\[
\Pr[S_E\cap F_f=\emptyset]
\le \left(1-\frac{\tau}{m}\right)^q
\le \exp\!\left(-q\frac{\tau}{m}\right).
\]
Choosing the hidden constant in $q=\Theta(m/\tau)$ sufficiently large, the right-hand side becomes
at most $1/3$, so $S_E$ contains a violating edge with probability at least $2/3$.

\paragraph{Case 2: $|F_f|< \tau$.}
We analyze rejection based on $Q_L$ and $Q_R$ using the partition obtained by assigning each pair
to its first violating edge, together with a Poisson approximation argument.

For each violating pair $(u,v)\in M$, fix an arbitrary directed path $P_{uv}$ from $u$ to $v$ in $G$.
Since $f(u)>f(v)$, along $P_{uv}$ there must exist at least one edge
$(x,y)\in E$ with $f(x)>f(y)$; define $\psi(u,v)$ to be the \emph{first} such edge
on $P_{uv}$. This yields a partition
\[
M=\biguplus_{e\in F_f} M_e,
\qquad
M_e:=\{(u,v)\in M:\ \psi(u,v)=e\}.
\]
For each $e\in F_f$, let $A_e$ be the set of left endpoints of pairs in $M_e$ and let $B_e$ be the set of
right endpoints. Since $M$ is a matching, the families $\{A_e\}_{e\in F_f}$ and $\{B_e\}_{e\in F_f}$ are
pairwise disjoint, and $|A_e|=|B_e|=|M_e|$.

Fix $e=(x,y)\in F_f$ with $M_e\ne\emptyset$. Let $t_e$ be a median of the multiset $\{f(a):(a,b)\in M_e\}$ and define
\[
A_e^+\ :=\ \{a\in A_e:\ f(a)\ge t_e\},
\qquad
B_e^-\ :=\ \{b\in B_e:\ f(b)< t_e\}.
\]
Then $|A_e^+|,|B_e^-|\ge |M_e|/2$ (since each $(a,b)\in M_e$ has $f(a)>f(b)$).
For any $a\in A_e^+$ and $b\in B_e^-$, let $(a,b_a)$ and $(a_b,b)$ be the unique pairs in $M_e$
containing $a$ and $b$, respectively.  Concatenating the prefix of $P_{a,b_a}$ from $a$ to $x$,
the edge $(x,y)$, and the suffix of $P_{a_b,b}$ from $y$ to $b$ shows that $a\leadsto b$.
Moreover, $f(a)\ge t_e>f(b)$, so $(a,b)\in E(\mathrm{TC}(G))$ is a violating pair.
Consequently, if the sampled multiset $Q_L$ contains at least one vertex in $A_e^+$ and $Q_R$
contains at least one vertex in $B_e^-$ for some $e\in F_f$, then the algorithm rejects.

Let $X_v^L$ (resp., $X_v^R$) be the number of times $v\in V$ is sampled when constructing $Q_L$ (resp., $Q_R$).
We bound the failure event
\[
\mathcal{E} := \bigl[\ \forall e\in F_f,\ \sum_{v\in A_e^+}X_v^L=0\ \ \vee\ \ \sum_{v\in B_e^-}X_v^R=0\ \bigr].
\]
Introduce independent Poisson variables $Y_v^L,Y_v^R\sim \mathrm{Pois}(\lambda)$ with mean $\lambda:=q/n$.
The event $\mathcal E$ is decreasing in each of the two sampled multisets.  Applying the standard
de-Poissonization inequality separately to the samples used to form $Q_L$ and $Q_R$ gives
\begin{equation}
\label{eq:poisson-approx}
\Pr[\mathcal{E}] \le 4\Pr\Bigl[\ \forall e\in F_f,\ \sum_{v\in A_e^+}Y_v^L=0\ \ \vee\ \ \sum_{v\in B_e^-}Y_v^R=0\ \Bigr].
\end{equation}
Indeed, conditional on $\sum_vY_v^L=q$ (respectively, $\sum_vY_v^R=q$), the corresponding
occupancy vector has the multinomial law of $q$ samples drawn independently and uniformly from $V$, and
$\Pr[\mathrm{Pois}(q)\le q]\ge1/2$; see, e.g.,~\cite{barbour1992poisson}.
Since the sets $\{A_e^+\}_e$ are disjoint and the sets $\{B_e^-\}_e$ are disjoint, and the samples
used to form $Q_L$ and $Q_R$ are independent, the events inside the probability in
\eqref{eq:poisson-approx} are independent, and thus
\[
\Pr\Bigl[\ \forall e\in F_f,\ \sum_{v\in A_e^+}Y_v^L=0\ \ \vee\ \ \sum_{v\in B_e^-}Y_v^R=0\ \Bigr]
= \prod_{e\in F_f}\left(1-\bigl(1-e^{-\lambda|A_e^+|}\bigr)\bigl(1-e^{-\lambda|B_e^-|}\bigr)\right).
\]
For each $e\in F_f$, the probability in the Poisson model that both $A_e^+$ and $B_e^-$ are hit is
\[
\bigl(1-e^{-\lambda|A_e^+|}\bigr)\bigl(1-e^{-\lambda|B_e^-|}\bigr)
\ge \bigl(1-e^{-\lambda|M_e|/2}\bigr)^2,
\]
where the inequality follows from $|A_e^+|,|B_e^-|\ge |M_e|/2$.

If there exists $e$ with $\lambda|M_e|\ge 8$, then
$\left(1-e^{-\lambda|M_e|/2}\right)^2\ge (1-e^{-4})^2$ and hence the rejection probability is at least
$1-4(1-(1-e^{-4})^2)>2/3$.

Otherwise, $\lambda|M_e|\le 8$ for all $e\in F_f$. For $x\in[0,4]$ we have $1-e^{-x}\ge x/5$,
so
\[
1-\left(1-e^{-\lambda|M_e|/2}\right)^2 \le 1-\frac{\lambda^2|M_e|^2}{100}.
\]
Therefore,
\[
\Pr[\mathcal{E}]
\le 4\exp\!\left(-\frac{\lambda^2}{100}\sum_{e\in F_f}|M_e|^2\right).
\]
By Cauchy--Schwarz and \eqref{eq:matching-size},
\[
\sum_{e\in F_f}|M_e|^2 \ \ge\ \frac{\left(\sum_{e\in F_f}|M_e|\right)^2}{|F_f|}
\ =\ \frac{|M|^2}{|F_f|}
\ \ge\ \frac{\varepsilon^2 n^2}{|F_f|}
\ >\ \frac{\varepsilon^2 n^2}{\tau},
\]
where we used the assumption $|F_f|<\tau$ in the last inequality.
Plugging this in and using $\lambda=q/n$, we obtain
\[
\Pr[\mathcal{E}] \le 4\exp\!\left(-\frac{\varepsilon^2 q^2}{100\tau}\right).
\]
Thus, by setting $q=\Theta(\sqrt{\tau}/\varepsilon)$ with a sufficiently large hidden constant, the
above probability becomes at most $1/3$, and the algorithm rejects based on $Q_L$ and $Q_R$ with
probability at least $2/3$.

It remains to balance the two cases.  Setting
\[
\tau := (\varepsilon m)^{2/3}
\]
makes $m/\tau$ and $\sqrt{\tau}/\varepsilon$ equal, and both are
$m^{1/3}/\varepsilon^{2/3}$.

In both cases, \Cref{alg:edge-vertex-sampling} rejects $\varepsilon$-far functions with probability at least $2/3$,
and it uses $O(q)=O(m^{1/3}/\varepsilon^{2/3})$ queries.
\end{proof}

\subsection{\texorpdfstring{An $O(\sqrt{m\ell}/(\varepsilon n))$-query tester}{An O(sqrt(m ell)/(epsilon n))-query tester}}
\label{sec:tr-tester}

In this subsection, we give a simple tester with one-sided error whose query complexity depends on
the number of edges in the transitive closure and the number of edges in the transitive reduction.
As in the preceding subsection, the proof assigns each violating pair in a large matching to the
first violating edge on a fixed path.  Direct sampling from the transitive closure then makes the
detection argument simpler.

Throughout this subsection we assume that the input DAG is already transitively reduced and write
$m:=|E|$.

\begin{theorem}
\label{thm:sqrt-ml}
There exists a tester for monotonicity with one-sided error and query complexity
\[
  O\!\left(\frac{\sqrt{m\,\ell}}{\varepsilon n}\right),
\]
where $\ell=|E(\mathrm{TC}(G))|$.
\end{theorem}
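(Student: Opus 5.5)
The tester (\Cref{alg:tr-hybrid}) samples a multiset $S_E$ of $q$ edges uniformly from $E$ and, independently, a multiset $S_C$ of $q$ pairs uniformly from $E(\mathrm{TC}(G))$. It queries both endpoints of every sampled edge and pair, and rejects if any sampled edge or pair $(u,v)$ has $f(u)>f(v)$. We will take $q=\Theta(\sqrt{m\ell}/(\varepsilon n))$. The tester makes at most $4q$ queries and is clearly non-adaptive. It has one-sided error, since a monotone $f$ has no violating pair in $\mathrm{TC}(G)$.

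For soundness, fix an $\varepsilon$-far $f$ and, as in the proof of \Cref{thm:m13}, a matching $M$ of violating pairs in $\mathrm{TC}(G)$ with $|M|>\varepsilon n$. Split into cases by a threshold $\tau$.

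\textbf{Case 1: $|F_f|\ge\tau$.} Each sampled edge is violating with probability at least $\tau/m$. Hence $q=\Theta(m/\tau)$ samples find one with probability at least $2/3$.

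\textbf{Case 2: $|F_f|<\tau$.} Reuse the first-violating-edge partition $M=\biguplus_{e\in F_f}M_e$ and the median split into $A_e^+,B_e^-$ from the proof of \Cref{thm:m13}. That argument already shows that every pair in $A_e^+\times B_e^-$ is a violating pair of $\mathrm{TC}(G)$, with $|A_e^+|,|B_e^-|\ge|M_e|/2$. The sets $A_e^+$ are pairwise disjoint across $e$, so the rectangles $A_e^+\times B_e^-$ are pairwise disjoint sets of ordered pairs. Each such pair $(a,b)$ has $a\ne b$, because $f(a)>f(b)$. The number of violating pairs in the closure is therefore at least
\[
\sum_{e}|A_e^+||B_e^-|\ge\frac14\sum_e|M_e|^2\ge\frac{|M|^2}{4|F_f|}>\frac{\varepsilon^2n^2}{4\tau},
\]
using Cauchy--Schwarz. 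Consequently each closure sample is violating with probability greater than $\varepsilon^2n^2/(4\tau\ell)$, and $q=\Theta(\tau\ell/(\varepsilon^2n^2))$ samples suffice for success probability at least $2/3$.

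Finally, balance the two cases with $\tau:=\varepsilon n\sqrt{m/\ell}$. Both $m/\tau$ and $\tau\ell/(\varepsilon^2n^2)$ then equal $\sqrt{m\ell}/(\varepsilon n)$, which gives the claimed bound.

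The only point needing care is the disjointness of the rectangles, so that their sizes can be summed. This follows from the disjointness of the $A_e$, which holds because $M$ is a matching. No Poisson or birthday argument is needed, which makes this easier than \Cref{thm:m13}.
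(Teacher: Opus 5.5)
Your proposal is correct and follows the paper's proof essentially step for step. It uses the same two-sample tester, the same dichotomy on $|F_f|$ with threshold $\tau$, the same first-violating-edge rectangles plus Cauchy--Schwarz to get more than $\varepsilon^2n^2/(4\tau)$ violating closure pairs, and the same balancing choice $\tau=\varepsilon n\sqrt{m/\ell}$. Your remarks that the rectangles are disjoint because their left sides are, and that $a\neq b$ so the pairs lie in $E(\mathrm{TC}(G))$, are small clarifications the paper leaves implicit.
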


\begin{algorithm}[t]
\caption{Tester sampling from the transitive reduction and closure}
\label{alg:tr-hybrid}
  \textbf{Input:} a transitively reduced DAG $G=(V,E)$ and a proximity parameter $\varepsilon$\;
  Set $\ell\gets |E(\mathrm{TC}(G))|$\;
  Set $q \gets \Theta\!\left(\frac{\sqrt{m\,\ell}}{\varepsilon n}\right)$\;
  Sample a multiset $S_E$ of $q$ edges independently and uniformly from $E$\;
  Sample a multiset $S_{\mathrm{TC}}$ of $q$ pairs independently and uniformly from $E(\mathrm{TC}(G))$\;
  Query $f(v)$ for every vertex $v$ incident to a pair in $S_E\cup S_{\mathrm{TC}}$\;
  \lIf{some $(u,v)\in S_E\cup S_{\mathrm{TC}}$ satisfies $f(u)>f(v)$}{reject}
  Accept\;
\end{algorithm}

\begin{proof}
All samples are drawn before any query is made, so the tester is non-adaptive.
It queries at most $4q$ vertices.
It has one-sided error because it rejects only after finding a violating pair.

Fix a function $f:V\to\mathbb{R}$ that is $\varepsilon$-far from monotonicity.
By the characterization via matchings in the violation graph recalled in
\Cref{sec:preliminaries}~\cite{DGLRRS99,fischer2002monotonicity},
there exists a matching $M$ of violating pairs in $\mathrm{TC}(G)$ with
\[
|M| > \varepsilon n.
\]
Let
\[
F_f := \{(x,y)\in E : f(x)>f(y)\}
\]
be the set of \emph{violating edges} in $G$.

We use the following dichotomy, governed by a threshold parameter $\tau>0$ to be chosen later.

\paragraph{Case 1: $|F_f|\ge \tau$.}
Each edge sampled into $S_E$ is violating with
probability at least $\tau/m$.
Thus, by choosing the hidden constant in $q=\Theta(m/\tau)$ large enough, the rejection
probability due to $S_E$ is at least
\[
1-\left(1-\frac{\tau}{m}\right)^q \ge \frac{2}{3}.
\]

\paragraph{Case 2: $|F_f|< \tau$.}
As in Case~2 of the proof of \Cref{thm:m13}, assign each pair in $M$ to the first violating edge on
a fixed path between its endpoints.  This gives a partition
\[
M=\biguplus_{e\in F_f} M_e.
\]
For each $e$, the same argument gives sets $A_e^+$ and $B_e^-$ of left and right endpoints,
respectively, such that
\[
|A_e^+|,|B_e^-|\ge \frac{|M_e|}{2}
\]
and every pair in $A_e^+\times B_e^-$ is violating.  The sets $A_e^+$ are pairwise disjoint across $e$, as are the
sets $B_e^-$.  Hence the rectangles $A_e^+\times B_e^-$ are pairwise disjoint and contain at least
$|M_e|^2/4$ violating pairs each.  Applying Cauchy--Schwarz, the total number of violating pairs in
$\mathrm{TC}(G)$ is therefore at least
\[
\frac{1}{4}\sum_{e\in F_f} |M_e|^2
\ \ge\
\frac{1}{4}\cdot\frac{\left(\sum_{e\in F_f} |M_e|\right)^2}{|F_f|}
\ =\
\frac{|M|^2}{4|F_f|}
\ >\
\frac{\varepsilon^2 n^2}{4\tau}.
\]
Hence, each pair sampled into $S_{\mathrm{TC}}$ is violating with probability at least
$\varepsilon^2 n^2/(4\tau\ell)$.  By choosing the hidden constant in
$q=\Theta(\tau\ell/(\varepsilon^2 n^2))$ large enough, the rejection probability due to
$S_{\mathrm{TC}}$ is at least
\[
1-\left(1-\frac{\varepsilon^2 n^2}{4\tau\ell}\right)^q \ge \frac{2}{3}.
\]

It remains to balance the two cases.
Setting
\[
\tau := \varepsilon n \sqrt{\frac{m}{\ell}}
\]
makes $m/\tau$ and $\tau\ell/(\varepsilon^2 n^2)$ equal up to constants, and thus
\Cref{alg:tr-hybrid} has query complexity
\[
q = \Theta\!\left(\frac{\sqrt{m\,\ell}}{\varepsilon n}\right).
\]
This completes the proof.
\end{proof}

\begin{proof}[Proof of \Cref{thm:query-complexity-phase-diagram}\textup{(i)}]
Fix $c,d,\varepsilon$ and a family of DAGs as in the statement.
Run the least expensive of the classical $O(\sqrt{n/\varepsilon})$-query
tester~\cite{fischer2002monotonicity} and the testers from
\Cref{thm:m13,thm:sqrt-ml}.  Since $\varepsilon$ is fixed, their query complexities are,
respectively,
\[
  n^{1/2+o(1)},
  \qquad n^{c/3+o(1)},
  \qquad n^{(c+d)/2-1+o(1)}.
\]
Taking the minimum proves the claim.
\end{proof}
 \section{\texorpdfstring{A Near-$\sqrt{n/\varepsilon}$ Lower Bound from Positive-Matching RS Families}{A Near-sqrt(n/epsilon) Lower Bound from Positive-Matching RS Families}}\label{sec:pmrs}

In this section, we develop Ruzsa--Szemer\'edi families of positive matchings (PMRS) as a source of
lower bounds for non-adaptive monotonicity testing.  An explicit PMRS construction yields a
near-$\sqrt{n/\varepsilon}$ lower bound even for testers with two-sided error.

We proceed as follows.
We first relate positive matchings to violation patterns in bipartite DAGs and define PMRS families
(\Cref{sec:pmrs-pos-viol,sec:pmrs-analogue-rs}).
We then prove lower bounds from PMRS for non-adaptive testers with two-sided error
(\Cref{sec:pmrs-twosided-lb}), construct PMRS graphs explicitly (\Cref{sec:explicit_pmrs}), and
combine these ingredients to obtain
the main near-$\sqrt{n/\varepsilon}$ lower bound (\Cref{sec:pmrs-main-lb}).

\subsection{Positive matchings and violations}\label{sec:pmrs-pos-viol}
We recall the basic definitions and known characterizations of positive matchings in
\Cref{sec:preliminaries}. Here we relate them to violation patterns in bipartite DAGs.

In the bipartite setting, let $U=(L,R;E)$ be an undirected bipartite graph, and let
$G:=P(U)$ be the bipartite DAG obtained by orienting each edge from $L$ to $R$.

\begin{lemma}[Violations exactly on a matching]
\label{lem:pm_equiv}
Let $U=(L,R;E)$ be bipartite and let $M\subseteq E$ be a matching.
The following are equivalent:
\begin{enumerate}
  \item $M$ is a positive matching in $U$.
  \item There exists a function $f:L\cup R\to\mathbb{R}$ such that for every edge
        $(\ell,r)\in E$,
        \[
        f(\ell) > f(r) \quad\Longleftrightarrow\quad (\ell,r)\in M.
        \]
        In other words, the \emph{violating edges} of $f$ on the bipartite DAG $G$
        are \emph{exactly} the matching edges $M$.
\end{enumerate}
\end{lemma}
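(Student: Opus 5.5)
The plan is a direct change of variables between weights and function values: negate on one side of the bipartition. The constraint for a positive matching involves the sum $w(\ell)+w(r)$. The constraint for a violation involves the difference $f(\ell)-f(r)$. Setting $f(\ell):=w(\ell)$ for $\ell\in L$ and $f(r):=-w(r)$ for $r\in R$ turns one into the other, because then $f(\ell)-f(r)=w(\ell)+w(r)$ on every edge $(\ell,r)\in E$.

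For (1)$\Rightarrow$(2), fix a witness $w:L\cup R\to\mathbb{R}$ of positivity as in \Cref{def:positive-matching}. Define $f$ by the substitution above. For every edge $(\ell,r)\in E$ we then have
\[
f(\ell)>f(r)\iff w(\ell)+w(r)>0\iff (\ell,r)\in M,
\]
and the second equivalence is exactly the defining property of $w$. In the bipartite DAG $G=P(U)$ every edge is oriented from $L$ to $R$, so the violating edges of $f$ are exactly the edges of $M$.

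For (2)$\Rightarrow$(1), run the substitution in reverse. Given $f$, set $w(\ell):=f(\ell)$ on $L$ and $w(r):=-f(r)$ on $R$. Then $w(\ell)+w(r)=f(\ell)-f(r)$ for every edge, so $w(\ell)+w(r)>0$ holds precisely on $M$. Hence $w$ witnesses positivity under the non-strict convention of \Cref{def:positive-matching}, where non-matching sums are only required to be at most $0$. The footnote there covers the strict convention, but no perturbation is needed with the definition as stated.

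I expect no real obstacle. The one point to state explicitly is that "violating edges" refers only to edges of $E$, i.e., to the transitive reduction, and not to pairs in the transitive closure. For the bipartite DAG $P(U)$ this is harmless: $G$ has height two, so every edge of $E$ is a cover edge and the transitive closure adds no new pairs. I would note this in one line so the "in other words" clause of the statement is fully justified.
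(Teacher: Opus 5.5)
Your proposal is correct and matches the paper's proof: both directions use the substitution $f=w$ on $L$ and $f=-w$ on $R$, so that $f(\ell)-f(r)=w(\ell)+w(r)$ on every edge of $E$. Your two extra remarks are accurate but go beyond what the paper states: under the non-strict convention no perturbation is needed, and $P(U)$ has no directed paths of length two, so its closure pairs are exactly $E$.
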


\begin{proof}
(1)$\Rightarrow$(2): Let $w$ certify positivity of $M$, i.e.,
$w(\ell)+w(r)>0$ iff $(\ell,r)\in M$ for edges $(\ell,r)\in E$.
Define
\[
f(\ell)=w(\ell)\quad(\ell\in L), \qquad f(r)=-w(r)\quad(r\in R).
\]
Then for $(\ell,r)\in E$,
\[
f(\ell)-f(r)=w(\ell)+w(r),
\]
so $f(\ell)>f(r)$ iff $(\ell,r)\in M$.

(2)$\Rightarrow$(1): Given such $f$, define $w(\ell)=f(\ell)$ for $\ell\in L$
and $w(r)=-f(r)$ for $r\in R$. Then $w(\ell)+w(r)=f(\ell)-f(r)$, hence the sign
pattern on $E$ matches $M$.
\end{proof}

\begin{lemma}[Submatchings preserve positivity]
\label{lem:positive_submatching}
Let $\Gamma=(V,E)$ be a graph and let $M\subseteq E$ be a positive matching in $\Gamma$.
Then every submatching $M'\subseteq M$ is also a positive matching in $\Gamma$.
\end{lemma}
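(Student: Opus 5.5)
The plan is to modify a witness weight function for $M$ so that the edges of $M\setminus M'$ become nonpositive while every other edge keeps its sign. Let $w:V\to\mathbb{R}$ certify positivity of $M$, so that $w(u)+w(v)>0$ for $\{u,v\}\in M$ and $w(u)+w(v)\le 0$ for $\{u,v\}\in E\setminus M$. By the footnote to \Cref{def:positive-matching}, we may assume the non-matching sums are strictly negative. Since $E$ is finite, set
\[
\delta:=\min_{e=\{u,v\}\in E}|w(u)+w(v)|>0,\qquad
\Delta:=\max_{\{u,v\}\in E}|w(u)+w(v)|.
\]

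For each edge $\{u,v\}\in M\setminus M'$, we lower the weights of both endpoints. Concretely, define $w'(x):=w(x)-K$ for every $x\in V(M)\setminus V(M')$ and $w'(x):=w(x)$ otherwise, with a constant $K>\Delta$. We then check the three kinds of edges in turn.

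\begin{enumerate}
\item An edge of $M'$ has both endpoints in $V(M')$, so its weight sum is unchanged and remains positive.
\item An edge of $M\setminus M'$ has both endpoints lowered, because $M$ is a matching and these vertices do not lie in $V(M')$. Its new sum is $w(u)+w(v)-2K<0$.
\item An edge of $E\setminus M$ has each endpoint's weight either unchanged or decreased, so its sum only decreases and stays $\le 0$.
\end{enumerate}

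This shows that $w'$ certifies that $M'$ is positive.

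The only point requiring care is the third case. We must confirm that lowering weights never turns a nonpositive sum positive. This holds because we only ever subtract, which is why the modification subtracts from the removed vertices rather than, say, adding to the kept ones. With that choice the argument is immediate, and $\delta$ is not even needed.

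Alternatively, one could invoke \Cref{thm:positive_alt_walk}. An $M'$-alternating closed walk in $\Gamma[V(M')]$ uses only $M'$ edges and non-$M'$ edges between vertices of $V(M')$. Those non-$M'$ edges are not in $M$, since $M$ is a matching and edges of $M\setminus M'$ have no endpoints in $V(M')$. Such a walk would therefore be an $M$-alternating closed walk in $\Gamma[V(M)]$, which is impossible. I would present the direct weight argument as the main proof.
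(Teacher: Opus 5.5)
Your direct weight argument is correct, and it takes a different route from the paper; your ``alternative'' paragraph is essentially the paper's proof. The paper argues by contradiction through \Cref{thm:positive_alt_walk}. It observes that $M\cap E(\Gamma[V(M')])=M'$, so an $M'$-alternating closed walk in $\Gamma[V(M')]$ would also be an $M$-alternating closed walk in $\Gamma[V(M)]$.

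That route is two lines, but it depends on the imported characterization, and in particular on its harder direction: that the absence of alternating closed walks forces a witness to exist. This is a duality-type result and does not exhibit a witness.

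Your route uses only \Cref{def:positive-matching} and is constructive. The new witness $w'$ agrees with $w$ except on $V(M)\setminus V(M')$, and the sums on $M'$ are untouched, so any margin $w$ has on $M$ carries over to $M'$. The only combinatorial fact you need is that endpoints of edges in $M\setminus M'$ avoid $V(M')$, which is the same fact the paper's proof uses.

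As you note yourself, neither the strictness from the footnote nor $\delta$ is needed, and $K>\Delta/2$ already suffices. You can even avoid a global constant by lowering both endpoints of each $e=\{u,v\}\in M\setminus M'$ by $(w(u)+w(v))/2$. This is exactly the endpoint shift used to build $g_i$ in \Cref{sec:pmrs-margin-slack}, rewritten in terms of $w$.
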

\begin{proof}
Let $M'\subseteq M$ and set $V':=V(M')\subseteq V(M)$.
Suppose for contradiction that $M'$ is not positive.
By \Cref{thm:positive_alt_walk}, the induced subgraph $\Gamma[V']$
contains an $M'$-alternating closed walk $W$.
Since $M$ is a matching and $M'\subseteq M$, no edge of $M\setminus M'$ has an endpoint in $V'$,
and thus $M\cap E(\Gamma[V']) = M'$.
Therefore $W$ is also an $M$-alternating closed walk contained in $\Gamma[V(M)]$,
contradicting the positivity of $M$ (again by \Cref{thm:positive_alt_walk}).
\end{proof}

\subsection{An analogue of RS graphs using positive matchings}\label{sec:pmrs-analogue-rs}
We now formalize PMRS families, a relaxation of RS graphs obtained by replacing induced matchings
with positive matchings. This notion underlies the constructions used below for lower bounds.

\begin{definition}[Ruzsa--Szemer\'edi families of positive matchings (PMRS)]\label{def:pmrs}
Fix $\varepsilon\in(0,1/2]$ and $n_0\in\mathbb{N}$.
A bipartite graph $U=(L,R;E)$ with $|L|=|R|=n_0$ is \emph{$(s,\varepsilon)$-PMRS}
if there exist matchings $M_1,\dots,M_s\subseteq E$ such that:
\begin{enumerate}
  \item (Edge-disjoint) $M_i\cap M_j=\emptyset$ for all $i\neq j$.
  \item (Linear size) $|M_i|\ge \varepsilon n_0$ for all $i$.
  \item (Positive) Each $M_i$ is a positive matching in $U$.
\end{enumerate}
\end{definition}

If $U=(L,R;E)$ is $(s,\varepsilon)$-PMRS with $|L|=|R|=n_0$, then
$
|E|\ge \sum_i |M_i|\ge \varepsilon n_0 s,
$
hence $s\le |E|/(\varepsilon n_0)\le n_0/\varepsilon$.
Our explicit construction combined with refinement attains
$s=n_0^{1-o(1)}/\varepsilon$, approaching this ceiling up to subpolynomial factors.

\subsection{Lower bounds from PMRS for non-adaptive testers with two-sided error}\label{sec:pmrs-twosided-lb}
We begin with a lower bound from PMRS: for any $(s,4\varepsilon)$-PMRS graph, every tester that is
non-adaptive and allowed \emph{two-sided} error requires $\Omega(\sqrt{s})$ queries.
The proof first normalizes the witness for a positive matching to have a constant margin, and then adds
correlated noise to obtain YES/NO distributions.

\subsubsection{Margin and a monotone function with slack}\label{sec:pmrs-margin-slack}
Fix an index $i\in[s]$ and let $M_i$ be the corresponding matching in an $(s,4\varepsilon)$-PMRS graph
$U=(L,R;E)$.
Let $w_i:L\cup R\to\mathbb{R}$ be any weight function certifying that $M_i$ is positive, i.e.,
$w_i(\ell)+w_i(r)>0$ for $(\ell,r)\in M_i$ and $w_i(\ell)+w_i(r)\le 0$ for $(\ell,r)\notin M_i$.
Define
\[
\delta_i:=\min_{e=(\ell,r)\in M_i}\bigl(w_i(\ell)+w_i(r)\bigr).
\]
Since $M_i$ is finite and each term is positive, we have $\delta_i>0$.
Define the normalized witness
\[
\widetilde{w}_i(v):=w_i(v)/\delta_i-\tfrac14 \qquad\text{for all }v\in L\cup R.
\]
Then for every edge $(\ell,r)\in E$,
\begin{equation}\label{eq:margin-w}
\widetilde{w}_i(\ell)+\widetilde{w}_i(r)\ge+\tfrac12 \quad\text{if }(\ell,r)\in M_i,
\qquad
\widetilde{w}_i(\ell)+\widetilde{w}_i(r)\le-\tfrac12 \quad\text{if }(\ell,r)\notin M_i.
\end{equation}
Define $f_i:L\cup R\to\mathbb{R}$ by $f_i(\ell)=\widetilde{w}_i(\ell)$ for $\ell\in L$ and
$f_i(r)=-\widetilde{w}_i(r)$ for $r\in R$. Then, for every $(\ell,r)\in E$,
\[
f_i(\ell)-f_i(r)=\widetilde{w}_i(\ell)+\widetilde{w}_i(r),
\]
so by~\eqref{eq:margin-w} we have $f_i(\ell)-f_i(r)\ge+\tfrac12$ on $M_i$ and $\le -\tfrac12$ on
$E\setminus M_i$.

Let $L_i\subseteq L$ and $R_i\subseteq R$ be the sets of endpoints of $M_i$.
For each matching edge $e=(\ell,r)\in M_i$, define $\Delta_e:=f_i(\ell)-f_i(r)$, so by~\eqref{eq:margin-w}
we have $\Delta_e\ge\tfrac12$.
We convert $f_i$ into a \emph{monotone} function with slack by shifting matching endpoints:
\begin{equation}\label{eq:gi-def}
\begin{aligned}
g_i(\ell) &=
\begin{cases}
f_i(\ell)-\Delta_e/2, & \ell\in L_i,\\
f_i(\ell), & \ell\notin L_i,
\end{cases}\\
g_i(r) &=
\begin{cases}
f_i(r)+\Delta_e/2, & r\in R_i,\\
f_i(r), & r\notin R_i.
\end{cases}
\end{aligned}
\end{equation}
Here, for $\ell\in L_i$ (resp.\ $r\in R_i$), $e$ denotes the unique edge of $M_i$ incident to
$\ell$ (resp.\ $r$); this is well-defined since $M_i$ is a matching.
Then for every edge $(\ell,r)\in E$ we have
\begin{equation}\label{eq:gi-slack}
(\ell,r)\in M_i \implies g_i(\ell)=g_i(r),
\qquad
(\ell,r)\notin M_i \implies g_i(\ell)\le g_i(r)-\tfrac12.
\end{equation}
Indeed, for non-matching edges we have $f_i(\ell)-f_i(r)\le -\tfrac12$ by~\eqref{eq:margin-w}, and the
shifts in~\eqref{eq:gi-def} only decrease the left endpoint (if any) and increase the right endpoint
(if any), so the difference can only decrease.
In particular, $g_i$ is monotone on the bipartite DAG $G$, and every non-matching edge has
slack at least $1/2$.

\subsubsection{YES/NO distributions}
We now define two distributions over functions.
Fix the step size $h:=1/8$ and use the noise alphabet $\{0,1,2,3\}$.
For each matching edge $e=(\ell,r)\in M_i$, sample $Z_e\sim\Unif(\{0,1,2,3\})$
independently, and define the noise value $\eta_e:=h Z_e\in\{0,h,2h,3h\}$.

\smallskip
\noindent\textbf{YES distribution $\mathcal{D}_{+}$.}
Sample $i\sim\Unif([s])$, build $g_i$ as in~\eqref{eq:gi-def}, sample $\{Z_e\}_{e\in M_i}$,
and define $F_i^{+}:L\cup R\to\mathbb{R}$ by
\[
F_i^{+}(v)=g_i(v)\quad\text{for }v\notin L_i\cup R_i,
\qquad
F_i^{+}(\ell)=g_i(\ell)+\eta_e,\quad F_i^{+}(r)=g_i(r)+\eta_e\quad\text{for }e=(\ell,r)\in M_i.
\]

\smallskip
\noindent\textbf{NO distribution $\mathcal{D}_{-}$.}
Sample $i$ and $\{Z_e\}_{e\in M_i}$ as above, and define $F_i^{-}$ by
\[
F_i^{-}(v)=g_i(v)\quad\text{for }v\notin L_i\cup R_i,
\qquad
F_i^{-}(\ell)=g_i(\ell)+\eta_e,\quad F_i^{-}(r)=g_i(r)+\eta'_e\quad\text{for }e=(\ell,r)\in M_i,
\]
where $\eta'_e := h\cdot((Z_e-1)\bmod 4)$ (i.e., we cyclically shift the noise on the right endpoint).

\smallskip
\noindent The values of $g_i$ may reveal information about $i$; the construction does not require
the index itself to remain hidden.  Conditional on $i$, the marginal distribution at each vertex is
the same under $F_i^+$ and $F_i^-$: outside $L_i\cup R_i$ the value is $g_i(v)$ in both, and at a
matching endpoint the noise is uniform over $\{0,h,2h,3h\}$.  The two distributions differ only in
the correlation between the endpoints of an edge in $M_i$.  Consequently, if a
non-adaptive query set does not contain both endpoints of such an edge, its transcript has the same
distribution under $F_i^+$ and $F_i^-$, as formalized below.

\subsubsection{Completeness, soundness, and indistinguishability}

\begin{lemma}[Completeness]\label{lem:twosided-completeness}
Every function in the support of $\mathcal{D}_{+}$ is monotone on $G$.
\end{lemma}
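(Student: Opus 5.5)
The plan is to reduce everything to the slack property \eqref{eq:gi-slack} of $g_i$ and check that the noise $\eta_e\in\{0,h,2h,3h\}$ with $h=1/8$ can never close the slack. Since $G=P(U)$ is a bipartite DAG whose edges all go from $L$ to $R$, its transitive closure equals its edge set: there are no paths of length two. So a function is monotone on $G$ if and only if $F(\ell)\le F(r)$ for every edge $(\ell,r)\in E$. I will fix an arbitrary $i\in[s]$ and an arbitrary noise realization $\{Z_e\}_{e\in M_i}$, and verify this inequality for every edge of $F_i^+$.

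For an edge $(\ell,r)\in M_i$, both endpoints receive the same noise $\eta_e$. By \eqref{eq:gi-slack}, $g_i(\ell)=g_i(r)$, so $F_i^+(\ell)=g_i(\ell)+\eta_e=g_i(r)+\eta_e=F_i^+(r)$, which is monotone with equality.

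For an edge $(\ell,r)\in E\setminus M_i$, the endpoints may be perturbed by noise from different matching edges, or not perturbed at all. In every case $F_i^+(\ell)\le g_i(\ell)+3h$ and $F_i^+(r)\ge g_i(r)$, because noise is nonnegative and at most $3h$. By \eqref{eq:gi-slack},
\[
F_i^+(\ell)-F_i^+(r)\le g_i(\ell)-g_i(r)+3h\le -\tfrac12+\tfrac38<0 .
\]
Hence every edge is non-violating, and since $i$ and $Z$ were arbitrary, every function in the support of $\mathcal D_+$ is monotone.

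There is no real obstacle here. The only points to state carefully are that monotonicity on $P(U)$ reduces to the edge condition, and that the maximum total noise perturbation $3h=3/8$ is strictly less than the slack $1/2$.
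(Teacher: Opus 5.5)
Your proof is correct and follows the paper's argument essentially verbatim: equality on edges of $M_i$ comes from the shared noise, and the bound $-\tfrac12+3h=-\tfrac18<0$ on non-matching edges comes from the slack in \eqref{eq:gi-slack}. You also state explicitly that $P(U)$ has no directed paths of length two, so monotonicity reduces to the edge condition; the paper leaves this point implicit.
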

\begin{proof}
Fix $i$ and consider any edge $(\ell,r)\in E$.

If $(\ell,r)\in M_i$, then $g_i(\ell)=g_i(r)$ by~\eqref{eq:gi-slack} and the same noise $\eta_e$
is added to both endpoints, so $F_i^{+}(\ell)=F_i^{+}(r)$.

If $(\ell,r)\notin M_i$, then $g_i(\ell)\le g_i(r)-1/2$ by~\eqref{eq:gi-slack}.
The noise values lie in $[0,3h]$, hence
\[
F_i^{+}(\ell)-F_i^{+}(r)
\le (g_i(\ell)-g_i(r)) + 3h
\le -\tfrac12 + \tfrac{3}{8}
= -\tfrac18 < 0.
\]
Thus $F_i^{+}(\ell)\le F_i^{+}(r)$ for every edge, i.e., $F_i^{+}$ is monotone.
\end{proof}

\begin{lemma}[Many violations in the NO distribution]\label{lem:twosided-soundness-mass}
Fix $i$ and draw $F_i^{-}\sim\mathcal{D}_{-}$.
Then an edge $e=(\ell,r)\in M_i$ is violated by $F_i^{-}$ iff $Z_e\neq 0$.
Consequently, the violating edges contain a matching of size
\[
X_i := |\{e\in M_i : Z_e\neq 0\}|\sim \mathrm{Bin}(|M_i|,3/4).
\]
In particular, if $X_i>2\varepsilon n_0$, then $F_i^{-}$ is $\varepsilon$-far from monotone.
Moreover, if $|M_i|\ge 4\varepsilon n_0$ (e.g., when $U$ is $(s,4\varepsilon)$-PMRS),
then for all sufficiently large $\varepsilon n_0$,
\[
\Pr[X_i>2\varepsilon n_0]\ \ge\ 1-\exp(-\Omega(\varepsilon n_0))\ \ge\ 9/10.
\]
\end{lemma}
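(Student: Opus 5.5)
The plan is to fix $i$, work edge by edge on $M_i$, and then combine the per-edge conclusions through the matching characterization of distance and a Chernoff bound.

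\textbf{Step 1: which matching edges are violated.} For $e=(\ell,r)\in M_i$ we have $g_i(\ell)=g_i(r)$ by \eqref{eq:gi-slack}. Hence
\[
F_i^-(\ell)-F_i^-(r)=\eta_e-\eta'_e=h\bigl(Z_e-((Z_e-1)\bmod 4)\bigr).
\]
If $Z_e\in\{1,2,3\}$, this equals $h>0$, so $e$ is violated. If $Z_e=0$, it equals $-3h<0$, so $e$ is not violated.

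\textbf{Step 2: the violated edges form a large matching, so $F_i^-$ is far.} The violated edges of $M_i$ are exactly $\{e: Z_e\neq0\}$. They form a submatching of $M_i$, and each is an edge of $\mathrm{TC}(G)$, so they give a matching of size $X_i$ in the violation graph $G_f$. Since the $Z_e$ are independent and uniform, each event $Z_e\neq 0$ has probability $3/4$ independently, so $X_i\sim\mathrm{Bin}(|M_i|,3/4)$.

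By the characterization of distance via matchings in the violation graph (\Cref{sec:preliminaries}), $d_{\mathrm{mon}}(F_i^-)\ge X_i$. The DAG $G$ has $n=|L|+|R|=2n_0$ vertices. So if $X_i>2\varepsilon n_0=\varepsilon n$, then $F_i^-$ is $\varepsilon$-far. The only care needed here is this normalization $n=2n_0$, which is why the threshold is $2\varepsilon n_0$.

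\textbf{Step 3: concentration.} Assume $|M_i|\ge4\varepsilon n_0$. Then $\mathbb E X_i=\tfrac34|M_i|\ge3\varepsilon n_0$. The threshold $2\varepsilon n_0$ is at most $\tfrac23\mathbb E X_i$, so it lies a constant factor below the mean. The multiplicative Chernoff bound with deviation $1/3$ gives
\[
\Pr[X_i\le 2\varepsilon n_0]\le\exp\bigl(-\mathbb E X_i/18\bigr)\le\exp(-\varepsilon n_0/6).
\]
This is $\exp(-\Omega(\varepsilon n_0))$, and it is at most $1/10$ once $\varepsilon n_0$ exceeds a suitable absolute constant.

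No step is a real obstacle; the proof is a direct computation.
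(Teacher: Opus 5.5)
Your proposal is correct and follows essentially the same route as the paper's proof. Both use the per-edge computation $\eta_e-\eta'_e\in\{h,-3h\}$, the matching characterization of distance with the normalization $|V|=2n_0$, and the multiplicative Chernoff bound with deviation $1/3$ yielding $\exp(-\varepsilon n_0/6)$.
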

\begin{proof}
For $e=(\ell,r)\in M_i$, we have $g_i(\ell)=g_i(r)$, hence
\[
F_i^{-}(\ell)-F_i^{-}(r)=\eta_e-\eta'_e.
\]
If $Z_e\in\{1,2,3\}$, then $\eta_e-\eta'_e=h>0$ so $e$ is violated.
If $Z_e=0$, then $\eta_e-\eta'_e=0-3h<0$ so $e$ is not violated.
Thus the violated edges in $M_i$ are exactly $\{e\in M_i:Z_e\neq 0\}$, which is a submatching of
$M_i$ of size $X_i$.

If $X_i>2\varepsilon n_0$, then
$X_i> \varepsilon|V|$ because $|V|=2n_0$.  The characterization of distance via matchings in the
violation graph from \Cref{sec:preliminaries} therefore implies $\varepsilon$-farness.

Finally, if $|M_i|\ge 4\varepsilon n_0$, then
$\mu:=\mathbb{E}[X_i]=(3/4)|M_i|\ge 3\varepsilon n_0$.
A Chernoff bound gives
\[
\Pr[X_i\le 2\varepsilon n_0]
\le
\Pr\!\left[X_i\le\left(1-\tfrac13\right)\mu\right]
\le \exp\!\left(-\frac{(1/3)^2\mu}{2}\right)
\le \exp\!\left(-\frac{\varepsilon n_0}{6}\right),
\]
which is at most $1/10$ for all sufficiently large $\varepsilon n_0$.
\end{proof}

The next lemma states that the full transcript of oracle answers has exactly the same distribution
under the YES and NO constructions unless the query set contains both endpoints of a hidden matching
edge.
\begin{lemma}[Identical transcripts when no edge of $M_i$ has both endpoints queried]\label{lem:twosided-indist}
Fix $i\in[s]$ and a query set $Q\subseteq L\cup R$ chosen non-adaptively.
If $Q$ does not contain both endpoints of any edge in $M_i$, then the joint distribution of the
answers returned by the oracle on $Q$ is identical under $F_i^{+}$ and $F_i^{-}$.
\end{lemma}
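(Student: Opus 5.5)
The plan is to describe the transcript on $Q$ explicitly as a deterministic function of independent noise variables, and to show that this family of variables has the same joint law under $F_i^{+}$ and $F_i^{-}$. Fix $i$ and $Q$. Split $Q$ into three parts: $Q_0:=Q\setminus(L_i\cup R_i)$, $Q_L:=Q\cap L_i$, and $Q_R:=Q\cap R_i$. On $Q_0$ both functions equal $g_i$, which is deterministic once $i$ is fixed. So these answers are identical constants in both cases.

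For each $\ell\in Q_L$, let $e(\ell)\in M_i$ be its matching edge, and for each $r\in Q_R$ let $e(r)$ be its matching edge. Because $M_i$ is a matching and $Q$ contains no edge of $M_i$ with both endpoints queried, the edges $\{e(v):v\in Q_L\cup Q_R\}$ are pairwise distinct. Distinctness within $Q_L$ and within $Q_R$ follows from the matching property. Distinctness across the two sides follows from the hypothesis: if $e(\ell)=e(r)$, then both endpoints of that edge would lie in $Q$.

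Now write down the answers.
\begin{itemize}
\item Under $F_i^{+}$, the answer at $v\in Q_L\cup Q_R$ is $g_i(v)+hZ_{e(v)}$.
\item Under $F_i^{-}$, the answer is $g_i(\ell)+hZ_{e(\ell)}$ for $\ell\in Q_L$ and $g_i(r)+h\,\pi(Z_{e(r)})$ for $r\in Q_R$, where $\pi(z):=(z-1)\bmod 4$.
\end{itemize}
The map $\pi$ is a bijection of $\{0,1,2,3\}$, so $\pi(Z_e)$ is uniform whenever $Z_e$ is. The variables $\{Z_{e(v)}\}_{v\in Q_L\cup Q_R}$ are indexed by distinct edges, so they are mutually independent. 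Hence the vector $\bigl(Z_{e(v)}\bigr)_{v}$ under the YES construction and the vector $\bigl(Z_{e(\ell)}\bigr)_{\ell\in Q_L}\cup\bigl(\pi(Z_{e(r)})\bigr)_{r\in Q_R}$ under the NO construction are both i.i.d.\ uniform over $\{0,1,2,3\}$ with the same index set. Applying the same deterministic map $v\mapsto g_i(v)+h\cdot(\text{coordinate})$ to these two vectors gives identically distributed transcripts.

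Non-adaptivity enters only in that $Q$ is fixed independently of the answers. If $Q$ is drawn from the tester's internal randomness, that randomness is independent of the $Z_e$, so the statement holds by conditioning on $Q$.

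The only step needing care is the distinctness of the edges $e(v)$ across $Q_L$ and $Q_R$. This is exactly where the hypothesis is used; without it, the pair $(Z_e,\pi(Z_e))$ would be perfectly correlated in a way that differs from the YES case $(Z_e,Z_e)$. Everything else is bookkeeping.
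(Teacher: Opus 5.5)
Your proof is correct and follows the same argument as the paper: each edge of $M_i$ contributes at most one queried endpoint, the shift $z\mapsto(z-1)\bmod 4$ is a permutation so every observed noise is uniform in both distributions, and noises on distinct edges are independent. Your version only makes explicit the bookkeeping that the paper leaves implicit, namely the three-way split of $Q$ and the distinctness of the edges $e(v)$.
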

\begin{proof}
Under the assumption, each matching edge contributes to the transcript through \emph{at most one}
queried endpoint.
For any queried endpoint of a matching edge $e$, the observed noise is a uniform element of
$\{0,h,2h,3h\}$ both in $F_i^{+}$ and in $F_i^{-}$ (because $Z_e$ is uniform and the map
$z\mapsto (z-1)\bmod 4$ is a permutation).
Across distinct edges, the noises are independent in both distributions.
Hence the resulting joint distribution of all answers on $Q$ is identical.
\end{proof}

\begin{theorem}[$\Omega(\sqrt{s})$ lower bound for non-adaptive testers with two-sided error]\label{thm:twosided-lb}
Let $\varepsilon\in(0,1/8]$ and let $U=(L,R;E)$ be any $(s,4\varepsilon)$-PMRS graph with $|L|=|R|=n_0$.
Set $G:=P(U)$.
Suppose that $\varepsilon n_0$ is sufficiently large.  Then any $\varepsilon$-tester for monotonicity on the bipartite DAG $G$ that is non-adaptive and
allowed two-sided error must make $q = \Omega(\sqrt{s})$ queries.
\end{theorem}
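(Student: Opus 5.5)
We use Yao's minimax principle with the YES/NO distributions $\mathcal D_+$ and $\mathcal D_-$ from \Cref{sec:pmrs-twosided-lb}. Since a randomized non-adaptive tester is a distribution over pairs (query set $Q$, decision rule on the transcript), it suffices to show the following. For every fixed $Q$ with $|Q|\le q$ and $q=o(\sqrt s)$, the transcript distributions $\Tr(\mathcal D_+,Q)$ and $\Tr(\mathcal D_-,Q)$ are close in total variation. At the same time, a correct tester must separate them by a constant.

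\textbf{Step 1: the tester must separate the two distributions.} By \Cref{lem:twosided-completeness}, every function in the support of $\mathcal D_+$ is monotone, so the tester accepts with probability at least $2/3$ on each. By \Cref{lem:twosided-soundness-mass}, for each fixed $i$ the function $F_i^-$ is $\varepsilon$-far with probability at least $9/10$, since $|M_i|\ge4\varepsilon n_0$ and $\varepsilon n_0$ is large. On such inputs the tester rejects with probability at least $2/3$. Hence
\[
\Pr_{\mathcal D_-}[\text{reject}]\ge \tfrac{9}{10}\cdot\tfrac23=\tfrac35,
\qquad
\Pr_{\mathcal D_+}[\text{reject}]\le\tfrac13 .
\]
Averaging over the tester's randomness, this gap of at least $4/15$ forces $\mathbb E_Q\,\TV(\Tr(\mathcal D_+,Q),\Tr(\mathcal D_-,Q))\ge 4/15$.

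\textbf{Step 2: coupling bound on total variation.} Fix $Q$ and write $X=Q\cap L$ and $Y=Q\cap R$. Let $\mathrm{Bad}(Q)$ be the set of indices $i$ such that $Q$ contains both endpoints of some edge of $M_i$. Condition on $i$. If $i\notin\mathrm{Bad}(Q)$, then \Cref{lem:twosided-indist} says the transcripts under $F_i^+$ and $F_i^-$ are identically distributed. Both $\mathcal D_\pm$ are mixtures over the same uniform $i$, so convexity of total variation gives
\[
\TV\le \frac{|\mathrm{Bad}(Q)|}{s}.
\]
Now each pair $(x,y)\in X\times Y$ is an edge of at most one $M_i$, because the matchings are edge-disjoint. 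Every $i\in\mathrm{Bad}(Q)$ is witnessed by such a pair, so
\[
|\mathrm{Bad}(Q)|\le |X||Y|\le |Q|^2/4 .
\]
Therefore $\TV\le q^2/(4s)$.

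\textbf{Step 3: conclude.} Combining the two steps gives $4/15\le q^2/(4s)$, so $q\ge\sqrt{16s/15}=\Omega(\sqrt s)$.

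The step needing the most care is Step 1. The NO distribution is only $\varepsilon$-far with high probability, not always. We handle this by conditioning, which costs the factor $9/10$, and the resulting constant gap survives. Note also that \Cref{lem:twosided-indist} is applied conditionally on $i$ before mixing, so any information $g_i$ reveals about $i$ does no harm: the two mixtures use the same distribution on $i$.
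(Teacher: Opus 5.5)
Your proposal is correct and follows essentially the same route as the paper. You bound the transcript distance by $|E(Q)|/s\le |Q\cap L|\,|Q\cap R|/s\le q^2/(4s)$ via edge-disjointness and \Cref{lem:twosided-indist}, and you use the same completeness/soundness gap of $4/15$ obtained from the $9/10$ far-probability. Your appeal to Yao's principle is just the paper's conditioning on the random seed followed by averaging.
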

\begin{proof}
Suppose that a randomized non-adaptive tester $\mathcal A$ makes at most $q$ queries.  Condition on
a fixed value $\omega$ of its internal random seed.  The resulting ordered query list
$Q_\omega=(v_1,\ldots,v_{q_\omega})$ is fixed and has length $q_\omega\le q$.
For $\sigma\in\{+,-\}$, write
\[
\Tr(\mathcal D_\sigma,Q_\omega):=(F^\sigma(v_1),\ldots,F^\sigma(v_{q_\omega})),
\]
where the remaining randomness is over the hidden index and the noise variables in
$\mathcal D_\sigma$.  We also use $Q_\omega$ for the underlying set of queried vertices, and define
\[
E(Q_\omega):=E\cap (Q_\omega\cap L)\times(Q_\omega\cap R),
\]
the set of edges for which both endpoints are queried.

Condition on the choice of $i\in[s]$.
If $E(Q_\omega)\cap M_i=\emptyset$, then $Q_\omega$ does not contain both endpoints of any edge in $M_i$, so by
\Cref{lem:twosided-indist} the transcript has the same distribution under $F_i^{+}$ and $F_i^{-}$.
Therefore, for this fixed seed $\omega$,
\[
\TV\big(\Tr(\mathcal{D}_{+},Q_\omega),\Tr(\mathcal{D}_{-},Q_\omega)\big)
\ \le\
\Pr_{i\sim\Unif([s])}\big[E(Q_\omega)\cap M_i\neq\emptyset\big].
\]
Since the matchings $M_1,\dots,M_s$ are edge-disjoint, each edge in $E(Q_\omega)$ belongs to at most one
matching. Hence the above event can occur for at most $|E(Q_\omega)|$ indices $i$, implying
\[
\TV\big(\Tr(\mathcal{D}_{+},Q_\omega),\Tr(\mathcal{D}_{-},Q_\omega)\big)
\ \le\ \frac{|E(Q_\omega)|}{s}
\ \le\ \frac{|Q_\omega\cap L|\cdot |Q_\omega\cap R|}{s}
\ \le\ \frac{q^2}{4s}.
\]
Let $\Tr_{\mathcal A}(\mathcal D_\sigma)$ denote the full transcript consisting of the internal
random seed and the oracle answers under $\mathcal D_\sigma$.  Averaging over the seed gives
\begin{equation}\label{eq:tv-upper}
\TV\bigl(\Tr_{\mathcal A}(\mathcal D_+),\Tr_{\mathcal A}(\mathcal D_-)\bigr)
\le \frac{q^2}{4s}.
\end{equation}

By completeness and \Cref{lem:twosided-completeness}, the acceptance probability under
$\mathcal D_+$ is at least $2/3$.  By \Cref{lem:twosided-soundness-mass} and soundness, the
acceptance probability under $\mathcal D_-$ is at most
$(9/10)(1/3)+1/10=2/5$.  Thus $\mathcal A$ distinguishes the two distributions with bias at least
\[
  \Pr[\mathcal{A}\text{ accepts }\mathcal{D}_{+}]
  -\Pr[\mathcal{A}\text{ accepts }\mathcal{D}_{-}]
  \ge \frac{2}{3}-\frac{2}{5}
  = \frac{4}{15}.
\]
The output of $\mathcal A$ is a function of its full transcript, so this bias is upper bounded by
the total variation distance in~\eqref{eq:tv-upper}.  Hence
$q^2/(4s)\ge 4/15$, which yields $q=\Omega(\sqrt{s})$.
\end{proof}

\subsection{An explicit construction of PMRS graphs}
\label{sec:explicit_pmrs}

Fix an integer $k\ge 2$ and a parameter $N\in\mathbb{N}$.
We denote vectors in $[N]^k$ by bold letters.
Let $\|\mathbf{a}\|^2=\sum_{i=1}^k a_i^2$ and $\mathbf{a}\cdot \mathbf{x}=\sum_{i=1}^k a_i x_i$.

\subsubsection{Vertex sets}
Let $L$ and $R$ be two \emph{disjoint copies} of $[N]^k\times [N^2]$:
\[
L=\{(\mathbf{x},z)_L:\mathbf{x}\in[N]^k,\ z\in[N^2]\},
\qquad
R=\{(\mathbf{y},t)_R:\mathbf{y}\in[N]^k,\ t\in[N^2]\}.
\]
Then
\[
n_0:=|L|=|R|=N^k\cdot N^2 = N^{k+2}.
\]

\subsubsection{A family of shift vectors}
Fix a constant $\alpha\in(0,1)$ (to be chosen as a function of $\varepsilon$ and $k$)
and set
\[
P:=\lfloor \alpha N\rfloor,
\qquad
A:=\{0,1,\dots,P\}^k\setminus\{\mathbf{0}\},
\qquad
s:=|A|=(P+1)^k-1.
\]

\subsubsection{Matchings and the bipartite graph they form}
For each $\mathbf{a}\in A$, define a set of edges $M_{\mathbf{a}}\subseteq L\times R$ by
\[
M_{\mathbf{a}}
:=
\Bigl\{
\bigl( (\mathbf{x},z)_L,\ (\mathbf{x}+\mathbf{a},\ z+\|\mathbf{a}\|^2)_R \bigr)
\;:\;
\mathbf{x}\in[N]^k,\ z\in[N^2],\ \mathbf{x}+\mathbf{a}\in[N]^k,\ z+\|\mathbf{a}\|^2\in[N^2]
\Bigr\}.
\]
Vertices near the boundary of the box may be incident to fewer shifts, since edges leaving
$[N]^k\times[N^2]$ are omitted.  This causes no issue here: the proof below only uses the
linear size of each matching.
Let
\[
E:=\bigcup_{\mathbf{a}\in A} M_{\mathbf{a}}
\quad\text{and}\quad
U:=(L,R;E).
\]

\begin{theorem}[PMRS families with polynomially many matchings]
\label{thm:explicit_pmrs}
Fix $\varepsilon\in(0,1/2]$ and an integer $k\ge 2$.
There exists a choice of $\alpha=\alpha(\varepsilon,k)>0$ such that for all sufficiently
large $N$ (so that $P\ge 1$), the bipartite graph $U$ above is an $(s,\varepsilon)$-PMRS
with $n_0=N^{k+2}$ and
\[
s=\Omega\!\bigl(n_0^{k/(k+2)}\bigr).
\]
In particular, taking $k=2$ yields $s=\Omega(\sqrt{n_0})$.
\end{theorem}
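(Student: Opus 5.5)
We verify the three conditions of \Cref{def:pmrs} in turn: each $M_{\mathbf a}$ is a matching, the matchings are pairwise edge-disjoint, each has linear size, and each is positive. We then count $s$.

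\emph{Matchings and edge-disjointness.} $M_{\mathbf a}$ is the graph of the injective partial map $(\mathbf x,z)\mapsto(\mathbf x+\mathbf a,z+\|\mathbf a\|^2)$, so it is a matching. An edge $((\mathbf x,z)_L,(\mathbf y,t)_R)$ determines its shift as $\mathbf a=\mathbf y-\mathbf x$. Hence distinct shifts give disjoint edge sets, and every edge of $E$ belongs to exactly one $M_{\mathbf b}$.

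\emph{Linear size.} The number of valid left endpoints of $M_{\mathbf a}$ is
\[
\prod_{j=1}^k (N-a_j)\cdot (N^2-\|\mathbf a\|^2)\ \ge\ (1-\alpha)^k N^k\cdot(1-k\alpha^2)N^2 .
\]
This is at least $\varepsilon n_0$ once $\alpha=\alpha(\varepsilon,k)$ is small enough, say with $(1-\alpha)^k(1-k\alpha^2)\ge 1/2\ge\varepsilon$. We also have
\[
s=(P+1)^k-1\ \ge\ (\alpha N/2)^k\ =\ \Omega_{\varepsilon,k}\bigl(N^k\bigr)\ =\ \Omega\bigl(n_0^{k/(k+2)}\bigr)
\]
for $N$ large.

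\emph{Positivity.} This is the main step. We exhibit an explicit witness $w_{\mathbf a}$ such that, on an edge of shift $\mathbf b$ from $(\mathbf x,z)_L$ to $(\mathbf x+\mathbf b,z+\|\mathbf b\|^2)_R$,
\[
w_{\mathbf a}(\ell)+w_{\mathbf a}(r)=\tfrac12-\|\mathbf b-\mathbf a\|^2 .
\]
Expanding gives
\[
\tfrac12-\|\mathbf b-\mathbf a\|^2=\tfrac12-\|\mathbf b\|^2+2\mathbf a\cdot\mathbf b-\|\mathbf a\|^2 .
\]
We express $\|\mathbf b\|^2=t-z$ and $\mathbf b=\mathbf y-\mathbf x$ in terms of the endpoint coordinates. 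This suggests the choice
\[
w_{\mathbf a}((\mathbf x,z)_L)=z-2\mathbf a\cdot\mathbf x-\|\mathbf a\|^2+\tfrac12,
\qquad
w_{\mathbf a}((\mathbf y,t)_R)=-t+2\mathbf a\cdot\mathbf y .
\]
The sum is then $(z-t)+2\mathbf a\cdot(\mathbf y-\mathbf x)-\|\mathbf a\|^2+\tfrac12=\tfrac12-\|\mathbf b-\mathbf a\|^2$, as required. Because all shifts are integral, this quantity equals $\tfrac12>0$ when $\mathbf b=\mathbf a$ and is at most $-\tfrac12$ otherwise. Thus $w_{\mathbf a}$ certifies in \Cref{def:positive-matching} that $M_{\mathbf a}$ is positive in $U$.

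The only delicate point is getting the linear terms to telescope correctly, namely that $\|\mathbf b\|^2$ is encoded exactly by the $z$-displacement. Once the witness above is written down, the rest is routine. For $k=2$ the bound reads $s=\Omega(N^2)=\Omega(\sqrt{n_0})$.
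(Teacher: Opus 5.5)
Your proposal is correct and follows the paper's proof essentially step for step: the same injective translation map gives matchings and edge-disjointness, the same product count gives linear size, and the same quadratic witness yields $\tfrac12-\|\mathbf b-\mathbf a\|^2$ on an edge of shift $\mathbf b$. The only differences are immaterial: you place the constant $-\|\mathbf a\|^2+\tfrac12$ on the left vertex rather than the right, and you choose $\alpha$ depending only on $k$ by using the threshold $1/2\ge\varepsilon$.
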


\begin{proof}
We verify the three PMRS conditions.

\noindent\textbf{(1) Each $M_{\mathbf{a}}$ is a matching.}
Fix $\mathbf{a}\in A$.
Each left vertex $(\mathbf{x},z)_L$ is incident to \emph{at most one} edge of $M_{\mathbf{a}}$,
namely to $(\mathbf{x}+\mathbf{a},z+\|\mathbf{a}\|^2)_R$ if that vertex lies in $R$.
Conversely, each right vertex has at most one preimage under this translation map.
Hence $M_{\mathbf{a}}$ is a matching.

\noindent\textbf{(2) Edge-disjointness.}
Every edge $e\in E$ belongs to a unique $M_{\mathbf{b}}$: indeed, if
\[
e=\bigl((\mathbf{x},z)_L,(\mathbf{y},t)_R\bigr)\in E,
\]
then necessarily $\mathbf{y}-\mathbf{x}=\mathbf{b}$ for some $\mathbf{b}\in A$, and this $\mathbf{b}$
is uniquely determined by the endpoints. Thus the matchings $\{M_{\mathbf{a}}\}_{\mathbf{a}\in A}$
are pairwise edge-disjoint.

\noindent\textbf{(3) Linear size.}
For $\mathbf{a}\in A$,
\[
|M_{\mathbf{a}}|
=
\Bigl(\prod_{i=1}^k (N-a_i)\Bigr)\,(N^2-\|\mathbf{a}\|^2).
\]
Since $0\le a_i\le P\le \alpha N$ and $\|\mathbf{a}\|^2\le kP^2\le k\alpha^2 N^2$, we have
\[
|M_{\mathbf{a}}|
\ge
((1-\alpha)N)^k\cdot (1-k\alpha^2)N^2
=
n_0\,(1-\alpha)^k(1-k\alpha^2).
\]
Choose $\alpha=\alpha(\varepsilon,k)>0$ small enough so that
\[
(1-\alpha)^k(1-k\alpha^2)\ \ge\ \varepsilon,
\]
which is possible because the left-hand side tends to $1$ as $\alpha\to 0$.
Then $|M_{\mathbf{a}}|\ge \varepsilon n_0$ for all $\mathbf{a}\in A$.

\noindent\textbf{(4) Positivity of each $M_{\mathbf{a}}$.}
Fix $\mathbf{a}\in A$. Define a weight function $w_{\mathbf{a}}:L\cup R\to\mathbb{R}$ by
\[
w_{\mathbf{a}}\bigl((\mathbf{x},z)_L\bigr):= z-2\,\mathbf{a}\cdot\mathbf{x},
\qquad
w_{\mathbf{a}}\bigl((\mathbf{y},t)_R\bigr):= -t+2\,\mathbf{a}\cdot\mathbf{y}-\|\mathbf{a}\|^2+\tfrac12.
\]
Consider any edge $e\in E$. By construction, $e$ lies in a unique $M_{\mathbf{b}}$ for some
$\mathbf{b}\in A$, and has the form
\[
e=\bigl( (\mathbf{x},z)_L,\ (\mathbf{x}+\mathbf{b},z+\|\mathbf{b}\|^2)_R \bigr).
\]
A direct calculation gives
\begin{align*}
w_{\mathbf{a}}\bigl((\mathbf{x},z)_L\bigr)+w_{\mathbf{a}}\bigl((\mathbf{x}+\mathbf{b},z+\|\mathbf{b}\|^2)_R\bigr)
&=
\bigl(z-2\mathbf{a}\cdot\mathbf{x}\bigr)
+\bigl(-(z+\|\mathbf{b}\|^2)+2\mathbf{a}\cdot(\mathbf{x}+\mathbf{b})-\|\mathbf{a}\|^2+\tfrac12\bigr)\\
&=
\tfrac12-\|\mathbf{b}-\mathbf{a}\|^2.
\end{align*}
Since $\mathbf{a},\mathbf{b}$ have integer coordinates, $\|\mathbf{b}-\mathbf{a}\|^2=0$ iff $\mathbf{b}=\mathbf{a}$,
and otherwise $\|\mathbf{b}-\mathbf{a}\|^2\ge 1$. Hence
\[
w_{\mathbf{a}}(u)+w_{\mathbf{a}}(v)>0 \iff \mathbf{b}=\mathbf{a} \iff e\in M_{\mathbf{a}},
\]
and for every edge $e\notin M_{\mathbf{a}}$ we have $w_{\mathbf{a}}(u)+w_{\mathbf{a}}(v)\le -\tfrac12\le 0$.
This is exactly the definition of $M_{\mathbf{a}}$ being a positive matching in $U$.

\noindent\textbf{(5) Counting matchings.}
Finally,
\[
s=|A|=(P+1)^k-1=\Theta(N^k).
\]
Since $n_0=N^{k+2}$, this yields
\[
s=\Theta\!\bigl(n_0^{k/(k+2)}\bigr),
\]
as claimed.
\end{proof}

\subsection{Main lower bound}\label{sec:pmrs-main-lb}
We now combine the PMRS lower-bound argument with the explicit construction.
As a first step, we state a refinement lemma that converts larger positive matchings
into many smaller ones while preserving positivity.
\begin{lemma}[Refining a PMRS family to smaller $\varepsilon$]
\label{lem:pmrs_refinement}
Let $\varepsilon_0\in(0,1/2]$, let $U=(L,R;E)$ be an $(s,\varepsilon_0)$-PMRS graph with
$|L|=|R|=n_0$, and fix witnessing matchings $M_1,\ldots,M_s$.
Then for every $\varepsilon\in(0,\varepsilon_0]$, the same graph contains an $(s',\varepsilon)$-PMRS family with
\[
s' \;=\; \sum_{i=1}^s \Big\lfloor \frac{|M_i|}{\lceil\varepsilon n_0\rceil}\Big\rfloor
\;\ge\; s\Big\lfloor \frac{\varepsilon_0 n_0}{\lceil\varepsilon n_0\rceil}\Big\rfloor.
\]
If $\varepsilon n_0\ge1$, then in particular
\[
s'\ge \frac{s\varepsilon_0}{4\varepsilon}.
\]
\end{lemma}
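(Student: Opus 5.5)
The plan is to partition each witnessing matching into blocks of a fixed size and check the three PMRS conditions of \Cref{def:pmrs} for the resulting family. Set $b:=\lceil\varepsilon n_0\rceil$. For each $i\in[s]$, split $M_i$ arbitrarily into $\lfloor |M_i|/b\rfloor$ disjoint submatchings $M_{i,1},\ldots,M_{i,\lfloor|M_i|/b\rfloor}$, each of exactly $b$ edges, and discard the fewer than $b$ leftover edges. The refined family is $\{M_{i,j}\}$, and its size is $s'=\sum_i\lfloor|M_i|/b\rfloor$, as in the statement.

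The conditions are checked in the order of the definition.

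\emph{Edge-disjointness.} Blocks taken from the same $M_i$ are disjoint by construction. Blocks taken from different $M_i,M_{i'}$ are disjoint because $M_i\cap M_{i'}=\emptyset$.

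\emph{Linear size.} Each block satisfies $|M_{i,j}|=b\ge\varepsilon n_0$.

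\emph{Positivity.} Each $M_{i,j}$ is a submatching of the positive matching $M_i$ in the same graph $U$. \Cref{lem:positive_submatching} then says $M_{i,j}$ is positive in $U$. This is the only place where anything beyond counting is needed, and it is supplied directly by the earlier lemma.

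\emph{Counting.} The first lower bound on $s'$ follows from $|M_i|\ge\varepsilon_0 n_0$ and the monotonicity of $x\mapsto\lfloor x/b\rfloor$. For the bound $s'\ge s\varepsilon_0/(4\varepsilon)$ under $\varepsilon n_0\ge1$, use $b\le\varepsilon n_0+1\le 2\varepsilon n_0$. This gives $x:=\varepsilon_0 n_0/b\ge\varepsilon_0/(2\varepsilon)\ge 1$, using $\varepsilon\le\varepsilon_0$. Since $\lfloor x\rfloor\ge x/2$ for every $x\ge1$, we get $\lfloor x\rfloor\ge\varepsilon_0/(4\varepsilon)$. Multiplying by $s$ gives the claim.

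The only delicate point is this last floor/ceiling bookkeeping: one must make sure $x\ge1$ so that the floor loses at most a factor of $2$. The hypotheses $\varepsilon\le\varepsilon_0$ and $\varepsilon n_0\ge1$ are exactly what this requires.
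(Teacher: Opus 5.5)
Your construction, and your checks of edge-disjointness, block size $b\ge\varepsilon n_0$, and positivity via \Cref{lem:positive_submatching}, are correct and match the paper. The first displayed lower bound is also fine.

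The gap is in the final estimate. The step $\varepsilon_0/(2\varepsilon)\ge 1$ ``using $\varepsilon\le\varepsilon_0$'' is false: $\varepsilon\le\varepsilon_0$ only gives $\varepsilon_0/(2\varepsilon)\ge 1/2$. Moreover, the quantity $x=\varepsilon_0 n_0/b$ genuinely can drop below $1$, because $\varepsilon_0 n_0$ need not be an integer. For example, take $n_0=10$, $\varepsilon_0=1/4$ and $\varepsilon=0.22$. Then $\varepsilon n_0=2.2\ge 1$ and $b=3$, so $x=5/6$ and $s\lfloor x\rfloor=0$. So the bound $s'\ge s\varepsilon_0/(4\varepsilon)$ cannot be derived from the intermediate bound $s\lfloor\varepsilon_0 n_0/b\rfloor$ at all, because that bound is too lossy. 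Your argument only works when $\varepsilon\le\varepsilon_0/2$, or when $\varepsilon_0 n_0\ge b$ happens to hold. The lemma, however, is claimed for all $\varepsilon\le\varepsilon_0$.

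The missing idea, which the paper uses, is to go back to $s'=\sum_i\lfloor |M_i|/b\rfloor$ and exploit the integrality of each $|M_i|$. From $|M_i|\ge\varepsilon_0 n_0\ge\varepsilon n_0$ and $|M_i|\in\mathbb{Z}$ you get $|M_i|\ge\lceil\varepsilon n_0\rceil=b$, so $|M_i|/b\ge 1$. Hence
\[
\Big\lfloor \frac{|M_i|}{b}\Big\rfloor\ \ge\ \frac{|M_i|}{2b}\ \ge\ \frac{\varepsilon_0 n_0}{4\varepsilon n_0}\ =\ \frac{\varepsilon_0}{4\varepsilon},
\]
using $b\le 2\varepsilon n_0$. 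Summing over $i$ gives the claim.
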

\begin{proof}
The matchings $M_1,\dots,M_s$ are positive and have pairwise disjoint edge sets by
\Cref{def:pmrs}.
Fix $i$ and define
\[
b:=\lceil\varepsilon n_0\rceil,
\qquad
t_i:=\big\lfloor |M_i|/b\big\rfloor.
\]
Partition $M_i$ arbitrarily into $t_i$ pairwise edge-disjoint submatchings $M_{i,1},\dots,M_{i,t_i}$,
each of size exactly $b$ (discarding a remainder of size $<b$).
Each $M_{i,j}$ is a matching, and it is positive by \Cref{lem:positive_submatching}.
Since the original $M_i$'s are edge-disjoint across different $i$, all $M_{i,j}$ are edge-disjoint as well.
Moreover, $|M_i|\ge \varepsilon_0 n_0$ implies
$t_i\ge \lfloor \varepsilon_0 n_0/b\rfloor$.
Thus the collection $\{M_{i,j}\}$ forms an $(s',\varepsilon)$-PMRS family with $s'=\sum_i t_i$
and the first stated lower bound on $s'$.

It remains to prove the final estimate.  Suppose $\varepsilon n_0\ge1$ and set
$x:=\varepsilon_0/\varepsilon\ge1$.  Since $b\le 2\varepsilon n_0$, we have
$|M_i|/b\ge x/2$.  Also, $|M_i|\ge\varepsilon_0n_0\ge\varepsilon n_0$ and integrality give
$|M_i|\ge b$.  Hence $|M_i|/b\ge1$, and therefore
\[
t_i=\left\lfloor\frac{|M_i|}{b}\right\rfloor
\ge \frac{|M_i|}{2b}
\ge \frac{x}{4}.
\]
Summing over $i$ proves $s'\ge s\varepsilon_0/(4\varepsilon)$.
\end{proof}

\begin{corollary}[PMRS lower bound with two-sided error and dependence on $\varepsilon$]
\label{cor:pmrs_eps_lb}
Let $\varepsilon_0\in(0,1/2]$ and let $U$ be an $(s,\varepsilon_0)$-PMRS graph with $|L|=|R|=n_0$.
Set $G:=P(U)$.
Then for every $\varepsilon\in(0,\varepsilon_0/4]$ such that $\varepsilon n_0$ is sufficiently large, every $\varepsilon$-tester for monotonicity on
$G$ that is non-adaptive and allowed two-sided error requires
\[
q \;=\; \Omega\!\left(\sqrt{\frac{s\varepsilon_0}{\varepsilon}}\right)
\]
queries.
\end{corollary}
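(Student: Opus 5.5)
The plan is to refine the given family with \Cref{lem:pmrs_refinement} and then apply \Cref{thm:twosided-lb} to the refined family.

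Fix $\varepsilon\in(0,\varepsilon_0/4]$ with $\varepsilon n_0$ large, and set $\varepsilon':=4\varepsilon\le\varepsilon_0$. Applying \Cref{lem:pmrs_refinement} with target parameter $\varepsilon'$ (legitimate because $\varepsilon'\le\varepsilon_0$ and $\varepsilon' n_0\ge1$) shows that $U$ is $(s',4\varepsilon)$-PMRS with
\[
  s'\ \ge\ \frac{s\varepsilon_0}{4\varepsilon'}\ =\ \frac{s\varepsilon_0}{16\varepsilon}.
\]

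Next we check the hypotheses of \Cref{thm:twosided-lb}. We need $\varepsilon\le 1/8$, which holds because $\varepsilon\le\varepsilon_0/4\le1/8$. We also need $\varepsilon n_0$ sufficiently large, which is assumed. The theorem then says every non-adaptive two-sided $\varepsilon$-tester on $G=P(U)$ makes $\Omega(\sqrt{s'})=\Omega(\sqrt{s\varepsilon_0/\varepsilon})$ queries. The constant $16$ is absorbed into the $\Omega$.

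The only point needing care is that \Cref{thm:twosided-lb} uses only the existence of an $(s',4\varepsilon)$-PMRS family of matchings in $U$, not the original family. Its YES/NO distributions, completeness, soundness (\Cref{lem:twosided-soundness-mass}, which needs $|M_i|\ge4\varepsilon n_0$) and indistinguishability all refer only to the chosen witnessing matchings. The graph $G$ is unchanged, so the lower bound applies to $G$ directly.
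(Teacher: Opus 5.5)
Your proof is correct and follows the same approach as the paper: refine via \Cref{lem:pmrs_refinement} with parameter $4\varepsilon$ to obtain $s'\ge s\varepsilon_0/(16\varepsilon)$, then apply \Cref{thm:twosided-lb} to the refined family. Your explicit check that $\varepsilon\le\varepsilon_0/4\le 1/8$, which \Cref{thm:twosided-lb} requires, is a point the paper leaves implicit.
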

\begin{proof}
By \Cref{lem:pmrs_refinement} with parameter $4\varepsilon$, $U$ contains an $(s',4\varepsilon)$-PMRS
family with $s'\ge s\varepsilon_0/(16\varepsilon)$.
Applying \Cref{thm:twosided-lb} to this family yields $q=\Omega(\sqrt{s'})$,
which gives the claimed bound.
\end{proof}

\begin{theorem}[Near-$\sqrt{n/\varepsilon}$ lower bounds with two-sided error]
\label{thm:pmrs_near_sqrt}
Fix $\varepsilon\in(0,1/8]$. For every $\delta>0$, there exist infinitely many $n$ and bipartite DAGs on
$n$ vertices for which any $\varepsilon$-tester that is non-adaptive and allowed two-sided error needs
\[
q \;=\; \Omega_\delta\!\left(\frac{n^{1/2-\delta}}{\sqrt{\varepsilon}}\right)
\]
queries.
\end{theorem}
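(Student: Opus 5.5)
The plan is to combine the explicit construction of \Cref{thm:explicit_pmrs} with \Cref{cor:pmrs_eps_lb}, choosing the dimension $k$ large enough in terms of $\delta$. Given $\delta>0$, fix an integer $k\ge2$ with $1/(k+2)\le\delta$, so that $k/(2(k+2))=1/2-1/(k+2)\ge 1/2-\delta$. Fix a constant $\varepsilon_0:=1/2$ (any constant with $\varepsilon\le\varepsilon_0/4$ works, and $\varepsilon\le 1/8=\varepsilon_0/4$ is exactly the hypothesis). Apply \Cref{thm:explicit_pmrs} with parameters $(\varepsilon_0,k)$. This yields $\alpha=\alpha(1/2,k)$ and, for all sufficiently large $N$, an $(s,1/2)$-PMRS graph $U_N$ with $n_0=N^{k+2}$ and $s=\Theta_k(n_0^{k/(k+2)})$. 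Let $G_N:=P(U_N)$, a bipartite DAG on $n=2n_0$ vertices; these give infinitely many $n$.

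Next, invoke \Cref{cor:pmrs_eps_lb} on $U_N$ with this $\varepsilon_0$ and the given $\varepsilon\in(0,1/8]$. Since $\varepsilon$ is fixed, $\varepsilon n_0\to\infty$ as $N\to\infty$, so the requirement that $\varepsilon n_0$ be sufficiently large holds for all large $N$. The corollary gives
\[
q=\Omega\!\left(\sqrt{\frac{s\varepsilon_0}{\varepsilon}}\right)=\Omega_k\!\left(\frac{n_0^{k/(2k+4)}}{\sqrt{\varepsilon}}\right)=\Omega_k\!\left(\frac{n^{1/2-1/(k+2)}}{\sqrt{\varepsilon}}\right),
\]
using $n_0=n/2$. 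Because $1/(k+2)\le\delta$ and $k$ depends only on $\delta$, this is $\Omega_\delta(n^{1/2-\delta}/\sqrt\varepsilon)$, as claimed.

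The only delicate point is the bookkeeping of constants and uniformity. The implicit constant in $s=\Theta(N^k)$ depends on $\alpha$, hence on $k$ (that is, on $\delta$), and is absorbed into $\Omega_\delta$. The ``sufficiently large'' threshold on $\varepsilon n_0$ in \Cref{thm:twosided-lb} depends on $\varepsilon$, but since we only need infinitely many $n$ for each fixed $\varepsilon$, we simply take $N$ large. I would also check that using $\varepsilon_0=1/2$ is permitted by \Cref{thm:explicit_pmrs}, which allows $\varepsilon\in(0,1/2]$; that is the step I would double-check.
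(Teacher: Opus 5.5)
Your proposal is correct and follows the paper's proof essentially line for line: fix $\varepsilon_0=1/2$, choose $k$ with $1/(k+2)\le\delta$, apply \Cref{thm:explicit_pmrs} and then \Cref{cor:pmrs_eps_lb} (whose range $\varepsilon\le\varepsilon_0/4=1/8$ is exactly the hypothesis), and convert using $n=2n_0$. The step you wanted to double-check is fine, since \Cref{thm:explicit_pmrs} is stated for every $\varepsilon\in(0,1/2]$, so $\varepsilon_0=1/2$ is allowed.
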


\begin{proof}
Let $\varepsilon_0:=1/2$. Choose an integer $k\ge 2$ so that $1/(k+2)\le \delta$.
By \Cref{thm:explicit_pmrs}, for all sufficiently large $N$ there exists an
$(s_0,\varepsilon_0)$-PMRS graph with
$n_0=N^{k+2}$ and $s_0=\Omega_k\!\bigl(n_0^{k/(k+2)}\bigr)$.
Since $\varepsilon$ is fixed, the condition $\varepsilon n_0$ being sufficiently large holds for all
sufficiently large $N$.  Applying \Cref{cor:pmrs_eps_lb} yields
\[
q=\Omega_k\!\left(\sqrt{\frac{s_0\varepsilon_0}{\varepsilon}}\right)
\;=\;\Omega_k\!\left(\frac{n_0^{k/(2k+4)}}{\sqrt{\varepsilon}}\right)
\;=\;\Omega_k\!\left(\frac{n_0^{1/2-1/(k+2)}}{\sqrt{\varepsilon}}\right).
\]
Now let $n:=|V|=2n_0$. Since $n$ and $n_0$ differ only by a factor of $2$,
this is
\[
q=\Omega_k\!\left(\frac{n^{1/2-1/(k+2)}}{\sqrt{\varepsilon}}\right)
=\Omega_\delta\!\left(\frac{n^{1/2-\delta}}{\sqrt{\varepsilon}}\right).
\]
Because this construction exists for all sufficiently large $N$, it gives infinitely many values of $n$.
\end{proof}
 \section[Parameterized Lower Bounds from PMRS Families with Bounded Label Exposure]{Parameterized Lower Bounds from PMRS Families\\with Bounded Label Exposure}
\label{sec:bounded-label-exposure-lower-bounds}

We next strengthen the PMRS framework from \Cref{sec:pmrs} by requiring bounded label exposure.
Our goal is to establish the parameterized lower bounds in
\Cref{thm:query-complexity-phase-diagram}\textup{(ii)}, matching the exponent in
\Cref{thm:query-complexity-phase-diagram}\textup{(i)} for every fixed pair of reachability exponents.
All lower bounds hold even for randomized non-adaptive testers with two-sided error.

\Cref{subsec:bounded-label-exposure} formulates bounded label exposure and derives the resulting
generic lower bound, and \Cref{subsec:random-height-pmrs} constructs PMRS families satisfying this
property.  \Cref{subsec:random-copy-planting,subsec:cloud-lift} establish the quantitative lower
bounds in terms of $m$ and $\ell$, respectively.  Finally,
\Cref{subsec:phase-diagram-padding} completes the proof throughout the phase diagram.

\subsection{A PMRS family with bounded label exposure}
\label{subsec:bounded-label-exposure}

We first isolate an additional property of PMRS families that allows us to strengthen the generic
lower bound from \Cref{sec:pmrs-twosided-lb} and thereby obtain parameterized lower bounds in terms
of $m$ and $\ell$.
Let $U=(L,R;E)$ be a bipartite graph and let $M_1,\ldots,M_s\subseteq E$ be edge-disjoint matchings.  For a query set $Q\subseteq L\cup R$, define
\[
  E(Q):=E\cap ((Q\cap L)\times (Q\cap R))
\]
and
\[
  \Expose(Q):=\{i\in[s]: E(Q)\cap M_i\neq\emptyset\}.
\]
Thus $i\in\Expose(Q)$ precisely when $Q$ queries both endpoints of at least one edge of the $i$th matching.

\begin{definition}[PMRS family with bounded label exposure]
\label{def:bounded-label-exposure-pmrs}
Fix parameters $s,n_0,q_0\in\mathbb N$ and constants $\eta,\rho>0$.  A tuple
$\mathcal U=(U;M_1,\ldots,M_s)$, where $U=(L,R;E)$ is a bipartite graph with
$|L|=|R|=n_0$ and $M_1,\ldots,M_s\subseteq E$ are matchings, is an
$(s,\eta,q_0,\rho)$-PMRS family with bounded label exposure if
\begin{enumerate}[label=(\roman*)]
  \item the matchings are pairwise edge-disjoint;
  \item $|M_i|\ge \eta n_0$ for every $i\in[s]$;
  \item each $M_i$ is positive in $U$;
  \item for every $Q\subseteq L\cup R$ with $|Q|\le q_0$, we have
  \[
    |\Expose(Q)|\le \rho s.
  \]
\end{enumerate}
\end{definition}

The next lemma reuses the information-theoretic argument from
\Cref{sec:pmrs-twosided-lb}, with the crude bound
$|\Expose(Q)|\le |Q\cap L|\,|Q\cap R|$ replaced by the bounded-label-exposure condition.

\begin{lemma}[A lower bound from bounded label exposure]
\label{lem:bounded-label-exposure-to-twosided}
Let $\mathcal U=(U;M_1,\ldots,M_s)$ be a PMRS family with parameters
$(s,\eta,q_0,\rho)$ as in \Cref{def:bounded-label-exposure-pmrs}, where $U=(L,R;E)$ and
$\rho\le 1/20$.  Let $G=P(U)$ be the bipartite DAG obtained by
orienting every edge from $L$ to $R$, and set
\[
  \varepsilon_\star:=\eta/8.
\]
For all sufficiently large $n_0$, there are two distributions $\mathcal D^+$ and $\mathcal D^-$ over functions on $G$ such that:
\begin{enumerate}[label=(\roman*)]
  \item every function in the support of $\mathcal D^+$ is monotone;
  \item a function drawn from $\mathcal D^-$ is $\varepsilon_\star$-far from monotonicity with probability at least $9/10$;
  \item for every query set $Q\subseteq V(G)$ chosen non-adaptively with $|Q|\le q_0$,
  \[
    \TV\bigl(\Tr(\mathcal D^+,Q),\Tr(\mathcal D^-,Q)\bigr)\le \rho .
  \]
\end{enumerate}
Consequently, every $\varepsilon_\star$-tester for monotonicity on $G$ that is randomized, non-adaptive,
and allowed two-sided error makes more than $q_0$ queries.
\end{lemma}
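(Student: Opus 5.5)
The plan is to reuse the YES/NO construction of \Cref{sec:pmrs-twosided-lb} verbatim: $\mathcal D^+$ and $\mathcal D^-$ sample a uniform index $i\in[s]$, build the slack function $g_i$ from \eqref{eq:gi-def}, and add correlated noise on the endpoints of $M_i$. In $\mathcal D^+$ the two endpoints receive the same noise; in $\mathcal D^-$ the right endpoint's noise is cyclically shifted. The construction of $g_i$ only uses that $M_i$ is a positive matching in $U$, which is condition (iii) of \Cref{def:bounded-label-exposure-pmrs}. So \Cref{lem:twosided-completeness} applies unchanged and gives item (i).

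For item (ii), I would apply \Cref{lem:twosided-soundness-mass} with $\varepsilon_\star=\eta/8$ in place of $\varepsilon$. Condition (ii) gives $|M_i|\ge\eta n_0=8\varepsilon_\star n_0\ge4\varepsilon_\star n_0$, so the hypothesis of that lemma holds. The number of violated matching edges is $X_i\sim\mathrm{Bin}(|M_i|,3/4)$, with mean at least $6\varepsilon_\star n_0$. A Chernoff bound then gives $X_i>2\varepsilon_\star n_0$ with probability at least $9/10$ once $\eta n_0$ is large, which holds for $n_0$ sufficiently large since $\eta$ is a constant. By the violation-graph matching characterization, this makes the function $\varepsilon_\star$-far on the $2n_0$ vertices.

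For item (iii), fix a non-adaptive query set $Q$ with $|Q|\le q_0$. I would condition on $i$. If $i\notin\Expose(Q)$, then $Q$ contains no edge of $M_i$ with both endpoints queried, so \Cref{lem:twosided-indist} says the transcripts under $F_i^+$ and $F_i^-$ are identically distributed. Since both mixtures use the same uniform $i$, the total variation distance is at most $\Pr_i[i\in\Expose(Q)]=|\Expose(Q)|/s\le\rho$ by condition (iv). This replaces the bound $|E(Q)|/s\le|Q|^2/(4s)$ used in \Cref{thm:twosided-lb}.

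For the consequence, take a tester with at most $q_0$ queries and average over its seed, exactly as in the proof of \Cref{thm:twosided-lb}. For each fixed seed the query set has size at most $q_0$, so the full-transcript distance is at most $\rho\le1/20$. Completeness gives acceptance probability at least $2/3$ under $\mathcal D^+$. Soundness together with (ii) gives acceptance probability at most $(9/10)(1/3)+1/10=2/5$ under $\mathcal D^-$. The resulting bias $4/15$ exceeds $1/20$, a contradiction, so the tester must make more than $q_0$ queries.

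No step here is a real obstacle, since everything is inherited from \Cref{sec:pmrs-twosided-lb}. The only care needed is in the soundness constants and in requiring $n_0$ large enough for the Chernoff bound.
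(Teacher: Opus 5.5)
Your proposal is correct and matches the paper's proof: it reuses the YES/NO construction of \Cref{sec:pmrs-twosided-lb}, replaces the bound $|Q|^2/(4s)$ with $|\Expose(Q)|/s\le\rho$, and concludes from the $4/15$ bias versus $\rho\le 1/20$. The only difference is cosmetic and lies in item (ii): you invoke \Cref{lem:twosided-soundness-mass} with $\varepsilon_\star$ to get $X_i>2\varepsilon_\star n_0$, whereas the paper applies Chernoff directly to get at least $|M_i|/2\ge\eta n_0/2$ violated edges; both yield $\varepsilon_\star$-farness.
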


\begin{proof}
We use the YES/NO distributions from \Cref{sec:pmrs-twosided-lb}.  The hidden index $I$ is
uniform over $[s]$.  Conditional on $I=i$, the construction uses positivity of $M_i$ to build a
monotone function $g_i$ with slack and then adds noise from a constant-size alphabet independently on the edges of
$M_i$.  In the YES distribution, the two endpoints of a matching edge receive the same noise;
in the NO distribution, the right endpoint receives a cyclically shifted copy of that noise.

We recall the three properties of that construction.  First, every function in the support of the YES distribution, denoted $\mathcal D^+$, is monotone.  Second, if $F_i^-$ denotes the NO distribution conditioned on $I=i$, then each edge of $M_i$ is violated independently with probability $3/4$.  Hence, by a Chernoff bound, with probability at least $9/10$ the violating edges contain a submatching of $M_i$ of size at least $|M_i|/2\ge \eta n_0/2$.  Since $|V(G)|=2n_0$, the characterization via matchings in the violation graph implies
\[
  d_{\mathrm{mon}}(f)\ge \eta n_0/2 = (\eta/4)|V(G)| > \varepsilon_\star |V(G)|
\]
for such an $f$.  Thus a function drawn from $\mathcal D^-$ is $\varepsilon_\star$-far with probability at least $9/10$.  The use of $\varepsilon_\star=\eta/8$ is only a safety margin for the strict definition of $\varepsilon$-farness.

Third, fix an index $i$ and a query set $Q$ chosen non-adaptively.  If $Q$ does not contain both endpoints of any edge of
$M_i$, then the transcript of answers on $Q$ has exactly the same distribution under $F_i^+$ and
$F_i^-$.  Indeed, each matching edge then contributes either no queried endpoint or one queried
endpoint; in the latter case the observed noise is uniform over the same constant alphabet in both
distributions, and noises on distinct matching edges remain independent.

Now fix a deterministic algorithm that chooses its queries non-adaptively, let $Q$ be its query set,
and suppose that $|Q|\le q_0$.  Let $\Tr(\mathcal D^+,Q)$ and $\Tr(\mathcal D^-,Q)$ be the
transcripts of the answers.  By the preceding paragraph,
\[
  \TV\bigl(\Tr(\mathcal D^+,Q),\Tr(\mathcal D^-,Q)\bigr)
  \le \Pr_{I\sim\Unif([s])}[I\in\Expose(Q)]
  = \frac{|\Expose(Q)|}{s}
  \le \rho .
\]
The same bound holds for a randomized algorithm whose queries are chosen non-adaptively: condition
on its internal randomness and average, noting that the query set is chosen before any answer from the oracle
is observed.

Suppose, for contradiction, that an $\varepsilon_\star$-tester $A$ is randomized, non-adaptive, and allowed
two-sided error, but uses at most $q_0$ queries.  Completeness gives
\[
  \Pr[A\text{ accepts }f\sim\mathcal D^+]\ge 2/3.
\]
For $\mathcal D^-$, with probability at least $9/10$ the input is $\varepsilon_\star$-far, and on such inputs soundness gives acceptance probability at most $1/3$.  On the remaining probability mass we use the trivial upper bound $1$.  Hence
\[
  \Pr[A\text{ accepts }f\sim\mathcal D^-]
  \le \frac{9}{10}\cdot\frac13 + \frac{1}{10}=\frac25.
\]
Thus $A$ distinguishes $\mathcal D^+$ from $\mathcal D^-$ with bias at least $2/3-2/5=4/15$, contradicting the total-variation bound $\rho\le 1/20$.  Therefore more than $q_0$ queries are necessary.
\end{proof}

\subsection{Constructing PMRS families with random heights}
\label{subsec:random-height-pmrs}

We next construct such families by randomly perturbing the heights in the construction from
\Cref{sec:explicit_pmrs}.  The perturbation preserves the separating inequalities needed for
positivity while preventing any query set of small size from aligning with many
shifts $a\in A$.

\begin{lemma}[Existence of PMRS families with bounded label exposure]
\label{lem:random-height-pmrs}
There is an absolute constant $\eta_0>0$ with the following property.  Fix integers $r\ge t\ge1$.
There exists a constant $c_{r,t}>0$ such that, for all
sufficiently large $P$, there is an $(s,\eta_0,q_0,1/20)$-PMRS family with bounded label exposure
$\mathcal U=(U;M_1,\ldots,M_s)$ in the sense of \Cref{def:bounded-label-exposure-pmrs}, where
$U=(L,R;E)$.  Writing $|L|=|R|=n_0$, we have
\[
  s=P^t,
  \qquad
  q_0=\left\lfloor c_{r,t}\frac{s}{\log n_0}\right\rfloor,
\]
and
\[
  n_0=\Theta_{r,t}(P^{r+t+2}),
  \qquad
  |E|=\Theta_{r,t}(n_0s).
\]
\end{lemma}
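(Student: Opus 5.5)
We follow the construction sketched in the technical overview.

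\textbf{Construction.} Let $A:=\{0,\ldots,P-1\}^t\times\{0\}^{r-t}\subseteq\mathbb Z^r$, so $s=|A|=P^t$. Set $C:=4s$ and draw independent $\xi_a\sim\Unif\{0,\ldots,s-1\}$ for $a\in A$. Define the height $q(a):=C\|a\|^2+\xi_a$, a nonnegative integer of order $\Theta_{r,t}(sP^2)$. Take $L,R$ to be copies of $[2P]^r\times[H]$ with $H:=\lceil 2\max_a q(a)\rceil=\Theta_{r,t}(P^{t+2})$, so $n_0=\Theta_{r,t}(P^{r+t+2})$. Let
\[
M_a:=\{((x,z)_L,(x+a,z+q(a))_R)\},
\]
keeping only the edges that stay inside the box, and let $E:=\bigcup_a M_a$.

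\textbf{Deterministic properties.} These hold for every choice of $\xi$.
\begin{itemize}
\item Each $M_a$ is a translation, hence a matching.
\item Distinct $a$ give distinct $x$-differences, so the matchings are edge-disjoint.
\item Each coordinate range and the height range lose at most a factor $1/2$, so $|M_a|\ge 2^{-(r+1)}n_0$. This loses $r$-dependence, whereas $\eta_0$ must be absolute. I would fix this by making each side $KP$ and the height range $K\max q$ with $K=K(r)$ large, so that $|M_a|\ge n_0/2$ and $\eta_0=1/2$.
\item $|E|=\Theta(n_0s)$ follows.
\end{itemize}

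\textbf{Positivity.} For shift $a$, use the weights
\[
w_a((x,z)_L)=z-2C\,a\cdot x,\qquad w_a((y,u)_R)=-u+2C\,a\cdot y-C\|a\|^2-\xi_a+\tfrac12.
\]
On an edge of $M_b$ this gives
\[
\tfrac12-\bigl(q(b)-q(a)-2Ca\cdot(b-a)\bigr)=\tfrac12-C\|b-a\|^2-\xi_b+\xi_a.
\]
This equals $1/2$ when $b=a$. For $b\neq a$, the bound $\|b-a\|^2\ge1$ together with $|\xi_b-\xi_a|<s\le C-1$ makes it at most $-1/2$. So each $M_a$ is positive.

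\textbf{Bounded exposure.} Fix $Q$ with $X=Q\cap L$ and $Y=Q\cap R$. The event $a\in\Expose(Q)$ requires some pair in $X\times Y$ with $x$-difference $a$ and $z$-difference exactly $q(a)$. That is, $\xi_a$ must equal one of at most $r_a(Q)$ determined values, where $r_a(Q)$ is the number of pairs in $X\times Y$ with $x$-difference $a$. Hence $\Pr[a\in\Expose(Q)]\le r_a(Q)/s$. These events are independent across $a$, since event $a$ depends only on $\xi_a$. Because $\sum_a r_a\le|Q|^2/4$, we get
\[
\mathbb E|\Expose(Q)|\le |Q|^2/(4s)\le q_0^2/(4s).
\]
Taking $q_0=c\,s/\log n_0$ makes this mean at most $s/400$, far below $\rho s=s/20$.

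The main obstacle is the union bound. There are about $(2n_0)^{q_0}=\exp(q_0\log(2n_0))$ query sets of size at most $q_0$, so the tail needs to be $\exp(-\Omega(s))$ with a constant that beats $c$. The Chernoff bound for independent indicators with mean $\mu\le s/400$ gives
\[
\Pr[\text{sum}\ge s/20]\le (e\mu/(s/20))^{s/20}\le (e/20)^{s/20}=\exp(-\Omega(s)).
\]
Choosing $c_{r,t}$ small enough that $q_0\log(2n_0)\le s\cdot\Omega(1)/2$ makes the union bound succeed, which fixes a single good $\xi$.

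\textbf{Largeness.} The argument needs $s/\log n_0\to\infty$, which holds since $n_0$ is polynomial in $P$, as well as $q_0\ge1$ and the Chernoff bound being nontrivial; both hold for large $P$. Finally, it is convenient to verify monotonicity of the exposure bound by allowing $|Q|<q_0$ directly, since sets of every smaller size are counted in the union bound.
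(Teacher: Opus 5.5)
Your proposal is correct and is essentially the paper's proof: the same perturbed heights $q(a)=4s\|a\|_2^2+\xi_a$, the same weights built from $p_a=2Ca$, the bound $\Pr[a\in\Expose(Q)]\le r_a(Q)/s$ with independence over $a$, and Chernoff plus a union bound, with your box-scaling fix matching the paper's choice $N=10tP$, $Z=20rHP^2$ that gives the absolute $\eta_0=27/40$. The one difference in route is minor: your cruder tail $(e/20)^{s/20}$ suffices once $c_{r,t}$ is below an absolute constant, whereas the paper uses $\mu_Q=O(c_{r,t}^2 s/\log^2 n_0)$ to get the stronger tail $\exp(-\Omega(s\log\log n_0))$.

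Two slips need fixing.
\begin{enumerate}
\item The right-hand weight must contain $+\xi_a$, because $q(a)-p_a\cdot a=-C\|a\|_2^2+\xi_a$. As written, with $-\xi_a$, an edge of $M_a$ gets weight sum $\tfrac12-2\xi_a$, which is not positive once $\xi_a\ge1$. Your displayed identity after the weights is the correct one.
\item The height range should be a deterministic bound such as $K(4rsP^2+s)$ rather than $K\max_a q(a)$. Otherwise the vertex set, and hence the family of query sets in the union bound, depends on $\xi$.
\end{enumerate}
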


\begin{proof}
Let
\[
  A:=\{0,1,\ldots,P-1\}^t\times\{0\}^{r-t}
  \subseteq \{0,1,\ldots,P-1\}^r,
  \qquad s:=|A|=P^t.
\]
Set $H:=s$ and $C:=4H$.  For each $a\in A$, choose independently
\[
  \xi_a\sim \Unif\{0,1,\ldots,H-1\}
\]
and define the height
\[
  q(a):=C\|a\|_2^2+\xi_a.
\]
Let $N:=10tP$ and $Z:=20rHP^2$.  Since $\|a\|_2^2\le rP^2$ and $C=4H$, every height satisfies $0\le q(a)\le 5rHP^2\le Z/4$ for all sufficiently large $P$.

Let $L$ and $R$ be two disjoint copies of $[N]^r\times[Z]$.  For $a\in A$, define
\[
  M_a:=\bigl\{((x,z)_L,(x+a,z+q(a))_R):
    x,x+a\in[N]^r,\ z,z+q(a)\in[Z]
  \bigr\}.
\]
Let $E:=\bigcup_{a\in A}M_a$ and $U=(L,R;E)$.

Each $M_a$ is a matching.  Also the matchings are edge-disjoint, because the differences between the first $r$ coordinates of the endpoints of an edge determine $a$ uniquely.  The size of each matching is linear in $n_0:=|L|=|R|=N^rZ$.  In each of the first $t$ coordinates of $x$, the retained fraction is at least $1-1/(10t)$; the remaining $r-t$ coordinates do not reduce the number of choices; and at least a $3/4$ fraction of the choices of $z$ remain.  Bernoulli's inequality therefore gives
\[
  |M_a|
  \ge \left(1-\frac1{10t}\right)^t\frac34n_0
  \ge \frac9{10}\cdot\frac34n_0
  =:\eta_0n_0,
  \qquad \eta_0=\frac{27}{40}.
\]
This also implies $|E|=\Theta_{r,t}(n_0s)$.  Since $N^rZ=\Theta_{r,t}(P^r\cdot P^t\cdot P^2)$, we have $n_0=\Theta_{r,t}(P^{r+t+2})$.

We verify positivity.  Fix $a\in A$ and set $p_a:=2Ca\in\mathbb R^r$.  For $b\neq a$ in $A$,
\[
  q(b)-q(a)-p_a\cdot(b-a)
  = C\|b-a\|_2^2+\xi_b-
    \xi_a
  \ge C-(H-1)
  \ge 1.
\]
The following weights certify positivity:
\[
  w_a((x,z)_L):=z-p_a\cdot x,
\]
\[
  w_a((y,t)_R):=-t+p_a\cdot y+q(a)-p_a\cdot a+\frac12.
\]
For an edge of label $b$, namely $(y,t)=(x+b,z+q(b))$, we have
\[
  w_a((x,z)_L)+w_a((y,t)_R)
  =\frac12-\bigl(q(b)-q(a)-p_a\cdot(b-a)\bigr),
\]
which is positive for $b=a$ and at most $-1/2$ for $b\neq a$.  Hence $M_a$ is positive.

It remains to prove bounded label exposure for a suitable deterministic choice of the perturbations $\{\xi_a\}_{a\in A}$.  Fix a query set $Q=X\cup Y$, where $X\subseteq L$, $Y\subseteq R$, and $|Q|=q$.  For $a\in A$, let $I_a(Q)$ be the indicator that $a\in\Expose(Q)$.  If $a$ is exposed, then for some $((x,z)_L,(y,t)_R)\in X\times Y$ we have
\[
  y-x=a,
  \qquad
  t-z=q(a)=C\|a\|_2^2+\xi_a.
\]
For fixed $Q$ and $a$, let
\[
  r_a(Q):=\bigl|\{((x,z)_L,(y,t)_R)\in X\times Y: y-x=a\}\bigr|.
\]
Then
\[
  \Pr[I_a(Q)=1]\le \frac{r_a(Q)}{H}.
\]
The random variables $I_a(Q)$ are independent over $a$, because $I_a(Q)$ depends only on $\xi_a$.  Moreover,
\[
  \mu_Q:=\mathbb E\sum_{a\in A}I_a(Q)
  \le \frac{1}{H}\sum_{a\in A}r_a(Q)
  \le \frac{|X||Y|}{H}
  \le \frac{q^2}{4s}.
\]
Let $\rho_0:=1/20$ and set
\[
  q_0:=\left\lfloor c_{r,t}\frac{s}{\log n_0}\right\rfloor
\]
for a constant $c_{r,t}>0$ to be fixed.  For any fixed $Q$ with $|Q|\le q_0$,
\[
  \mu_Q\le O\left(\frac{c_{r,t}^2s}{\log^2 n_0}\right).
\]
A Chernoff bound for independent Bernoulli variables with total mean $\mu_Q$ gives
\[
  \Pr\left[\sum_{a\in A}I_a(Q)\ge \rho_0s\right]
  \le
  \left(\frac{e\mu_Q}{\rho_0s}\right)^{\rho_0s}
  \le \exp(-\Omega_{\rho_0}(s\log\log n_0))
\]
for all sufficiently large $P$.

On the other hand, the number of possible query sets of size at most $q_0$ is at most
\[
  \sum_{j\le q_0}\binom{2n_0}{j}
  \le \exp(O_{r,t}(c_{r,t}s)).
\]
Multiplying the failure probability for a fixed $Q$ by the number of possible query sets gives
\[
  \exp\bigl(O_{r,t}(c_{r,t}s)-\Omega(s\log\log n_0)\bigr)=o(1)
\]
for all sufficiently large $P$.  Therefore, by a union bound, with positive probability over the
choice of the perturbations $\{\xi_a\}$, every $Q\subseteq L\cup R$ of size at most $q_0$ exposes
at most $\rho_0s=s/20$ labels.  Fix such a choice of perturbations.
\end{proof}

\subsection{\texorpdfstring{Planting a NO instance in a uniformly random copy for the $m^{1/3}$ lower bound}{Planting a NO instance in a uniformly random copy for the m one-third lower bound}}
\label{subsec:random-copy-planting}

We now add the dependence on the proximity parameter by sampling a NO instance in one uniformly
random copy among many disjoint copies.

\begin{lemma}[Planting in a uniformly random copy]
\label{lem:copy-planting}
Let $G_0$ be a DAG on $n_{\mathrm{base}}$ vertices, and suppose that there are distributions $\mathcal D^+$ and $\mathcal D^-$ over functions on $G_0$ with the following properties for some $\varepsilon_\star>0$ and $q_0\ge1$:
\begin{enumerate}[label=(\roman*)]
  \item $\mathcal D^+$ is supported on monotone functions;
  \item $f\sim\mathcal D^-$ is $\varepsilon_\star$-far from monotonicity with probability at least $9/10$;
  \item for every query set $Q\subseteq V(G_0)$ chosen non-adaptively with $|Q|\le q_0$,
  \[
    \TV\bigl(\Tr(\mathcal D^+,Q),\Tr(\mathcal D^-,Q)\bigr)
    \le \frac1{20}.
  \]
\end{enumerate}
Let $G^{(T)}$ be the disjoint union of $T$ copies of $G_0$.  For every $\varepsilon>0$ and every integer
\[
  1\le T\le \frac{\varepsilon_\star}{2\varepsilon},
\]
every $\varepsilon$-tester for monotonicity on $G^{(T)}$ that is randomized, non-adaptive, and allowed
two-sided error requires
\[
  \Omega(Tq_0)
\]
queries, where the hidden constant is absolute.
\end{lemma}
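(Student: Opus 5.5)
Define distributions on $G^{(T)}$. $\widehat{\mathcal D}^+$ draws independent $f_j\sim\mathcal D^+$ on each copy $j\in[T]$. $\widehat{\mathcal D}^-$ draws a uniform $J\in[T]$, sets $f_J\sim\mathcal D^-$, and draws $f_j\sim\mathcal D^+$ independently for every $j\ne J$.

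\textbf{Completeness and farness.} Every function in the support of $\widehat{\mathcal D}^+$ is monotone, since copies are disjoint and incomparable. For $\widehat{\mathcal D}^-$, with probability at least $9/10$ the planted copy has $d_{\mathrm{mon}}(f_J)>\varepsilon_\star n_{\mathrm{base}}$. Distance is additive over disjoint components, so $d_{\mathrm{mon}}\ge d_{\mathrm{mon}}(f_J)>\varepsilon_\star n_{\mathrm{base}}$. The hypothesis $T\le \varepsilon_\star/(2\varepsilon)$ gives $\varepsilon_\star n_{\mathrm{base}}\ge 2\varepsilon T n_{\mathrm{base}}>\varepsilon|V(G^{(T)})|$, so the whole function is $\varepsilon$-far. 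As in the proof of \Cref{lem:bounded-label-exposure-to-twosided}, any tester therefore has acceptance probability at least $2/3$ on $\widehat{\mathcal D}^+$ and at most $2/5$ on $\widehat{\mathcal D}^-$, a bias of at least $4/15$.

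\textbf{Indistinguishability.} Fix the seed, so the query set $Q$ is fixed with $|Q|\le q$. Write $Q_j$ for the part of $Q$ in copy $j$. Conditional on $J=j$, the transcripts on the other copies are identically distributed in both cases and independent of copy $j$. Hence, by the product structure, conditional total variation equals $\TV(\Tr(\mathcal D^+,Q_j),\Tr(\mathcal D^-,Q_j))$. This is at most $1/20$ whenever $|Q_j|\le q_0$, and at most $1$ otherwise. Averaging over uniform $J$ gives a total variation of at most $1/20+|\{j:|Q_j|>q_0\}|/T$.

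\textbf{Counting heavy copies.} Since $\sum_j|Q_j|\le q$, at most $q/q_0$ copies are heavy. This gives a bound of $1/20+q/(q_0T)$, and averaging over seeds preserves it. Requiring this to be at least $4/15$ forces $q\ge (4/15-1/20)Tq_0=\Omega(Tq_0)$ with an absolute constant.

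\textbf{Expected obstacle.} The main delicacy is the conditional-TV step. The transcript on non-planted copies must have the same law under both distributions, so that only the planted copy contributes. This holds because non-planted copies always use $\mathcal D^+$ independently, and the index $J$ is not revealed beyond what the transcript shows. Stating it as a mixture, $\TV(\text{mixture}^+,\text{mixture}^-)\le \mathbb E_J[\TV(\cdot\mid J)]$ with both mixtures using the same $J$-weights (writing $\widehat{\mathcal D}^+$ trivially as a mixture over $J$), makes this rigorous.
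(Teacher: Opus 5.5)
Your proof is correct and follows the paper's argument: you plant $\mathcal D^-$ in a uniformly random copy, bound the transcript distance by $1/20$ plus the fraction $|Q|/(Tq_0)$ of heavy copies, and compare this with the $4/15$ completeness/soundness gap. The only difference is cosmetic: the paper fills the unplanted copies with a fixed monotone function $h$, so their answers are deterministic and trivially identical under YES and NO, whereas you fill them with independent $\mathcal D^+$ samples, which needs your product/mixture step $\TV(\mu\otimes\rho,\nu\otimes\rho)=\TV(\mu,\nu)$ but is equally valid.
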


\begin{proof}
Use the following YES/NO distributions on $G^{(T)}$.  First fix any monotone function $h$ on $G_0$
and choose a copy $J\sim\Unif([T])$.  Under the YES distribution, put an independent sample from
$\mathcal D^+$ in copy $J$ and put $h$ in every other copy.  Under the NO distribution, put the
corresponding sample from $\mathcal D^-$ in copy $J$ and again put $h$ in every other copy.

Every function in the support of the YES distribution is monotone.  With probability at least
$9/10$, the restriction of a function drawn from the NO distribution to copy $J$ is
$\varepsilon_\star$-far on that copy.  Since the graph is a disjoint union, distances add over
connected components, and therefore the whole function has distance more than
\[
  \varepsilon_\star n_{\mathrm{base}}
  \ge 2\varepsilon T n_{\mathrm{base}}
  = 2\varepsilon |V(G^{(T)})|
  > \varepsilon |V(G^{(T)})|.
\]
Thus a function drawn from the NO distribution is $\varepsilon$-far with probability at least $9/10$.

Fix a deterministic algorithm whose queries are chosen non-adaptively, let $Q$ be its query set in
$G^{(T)}$, and write $Q_j$ for the restriction of $Q$ to copy $j$.  Call copy $j$ heavy if
$|Q_j|>q_0$.  The number of heavy copies is at most $|Q|/q_0$, so
\[
  \Pr[J\text{ is heavy}]
  \le \frac{|Q|}{Tq_0}.
\]
Conditioned on $J$ not being heavy, assumption (iii) bounds by $1/20$ the total variation distance
between the transcripts obtained from copy $J$, and all other copies have identical fixed monotone
values under the YES and NO distributions.  Therefore
\[
  \TV(\Tr(\mathcal D^+_T,Q),\Tr(\mathcal D^-_T,Q))
  \le \frac{|Q|}{Tq_0}+\frac{1}{20}.
\]
If $|Q|\le cTq_0$ for an absolute constant $c>0$ chosen sufficiently small, the right-hand side is
less than $4/15$, while completeness and soundness create a distinguishing gap of at least $4/15$
as in \Cref{lem:bounded-label-exposure-to-twosided}.  This contradiction proves the claim.  For a
randomized algorithm that chooses its queries non-adaptively, condition on the internal seed and
average.
\end{proof}

\begin{lemma}[Planting in a uniformly random copy with deterministic padding]
\label{lem:copy-planting-padding}
Assume the hypotheses of \Cref{lem:copy-planting}.  Fix a constant $A\ge0$, let $H$ be any DAG on at
most $ATn_{\mathrm{base}}$ vertices, and let $h_H$ be any fixed monotone function on $H$.  Let $G$ be
the disjoint union of $G^{(T)}$ and $H$.  For every $\varepsilon>0$ and every integer
\[
  1\le T\le \frac{\varepsilon_\star}{2(A+1)\varepsilon},
\]
every $\varepsilon$-tester for monotonicity on $G$ that is randomized, non-adaptive, and allowed
two-sided error requires $\Omega(Tq_0)$ queries.  The hidden constant is absolute and independent of
$A$.
\end{lemma}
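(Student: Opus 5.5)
The plan is to rerun the proof of \Cref{lem:copy-planting} verbatim, with the padding graph $H$ carrying the fixed monotone function $h_H$ in both the YES and the NO distributions. Fix any monotone $h$ on $G_0$ and draw $J\sim\Unif([T])$. Under the YES (resp.\ NO) distribution, copy $J$ receives an independent sample from $\mathcal D^+$ (resp.\ $\mathcal D^-$), every other copy receives $h$, and $H$ receives $h_H$. Completeness is immediate: a disjoint union of monotone functions is monotone, since no comparable pairs cross components.

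The only new point is soundness, where the larger vertex count must be absorbed. We have $|V(G)|\le Tn_{\mathrm{base}}+ATn_{\mathrm{base}}=(A+1)Tn_{\mathrm{base}}$. With probability at least $9/10$, the restriction to copy $J$ is $\varepsilon_\star$-far on $G_0$. Because distance to monotonicity is additive over connected components (equivalently, a violation-graph matching inside copy $J$ is a matching in $G$'s violation graph), the whole function has distance more than
\[
\varepsilon_\star n_{\mathrm{base}}\ge 2(A+1)\varepsilon Tn_{\mathrm{base}}\ge 2\varepsilon|V(G)|>\varepsilon|V(G)|,
\]
using the hypothesis $T\le\varepsilon_\star/(2(A+1)\varepsilon)$. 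Thus a NO sample is $\varepsilon$-far with probability at least $9/10$. The acceptance gap of a tester is therefore at least $2/3-2/5=4/15$, exactly as in \Cref{lem:bounded-label-exposure-to-twosided}.

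For indistinguishability, fix a deterministic non-adaptive query set $Q$ and split it into $Q_j$ (copy $j$) and $Q_H$ (the padding). Answers on $Q_H$ are the deterministic values of $h_H$ in both distributions, and answers on copies $j\ne J$ are the deterministic values of $h$. Hence the transcripts differ only through copy $J$. Call copy $j$ heavy if $|Q_j|>q_0$. There are at most $|Q|/q_0$ heavy copies, so $\Pr[J\text{ heavy}]\le|Q|/(Tq_0)$, and on non-heavy $J$ assumption (iii) gives total variation at most $1/20$. Therefore
\[
\TV\le \frac{|Q|}{Tq_0}+\frac1{20}.
\]
Taking $|Q|\le cTq_0$ with the same absolute $c$ as before makes this less than $4/15$, a contradiction; randomized testers are handled by conditioning on the seed. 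Note that $A$ enters only through the admissible range of $T$ and never through the total-variation bound, which is why the hidden constant is independent of $A$.

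There is no real obstacle here. The one step needing care is the distance accounting, namely checking that the padding does not dilute normalized distance below $\varepsilon$; the factor $(A+1)$ in the upper bound on $T$ is designed precisely to absorb that dilution.
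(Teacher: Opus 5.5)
Your proposal is correct and follows essentially the same route as the paper: you reuse the planted-copy YES/NO distributions with $h_H$ fixed on $H$, absorb the extra vertices via $|V(G)|\le (A+1)Tn_{\mathrm{base}}$ and the tightened range of $T$, and rerun the heavy-copy total-variation bound because queries in $H$ receive identical deterministic answers. The only cosmetic difference is that you justify the distance lower bound by embedding copy $J$'s violation-graph matching into that of $G$, whereas the paper restricts a monotone repair of $G$ to the selected component; the two justifications are equivalent.
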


\begin{proof}
Use the YES/NO distributions from the proof of \Cref{lem:copy-planting}, which sample from
$\mathcal D^+$ or $\mathcal D^-$ in a uniformly random copy of $G_0$, and assign $h_H$ to $H$
under both distributions.  Every function in the support of the YES distribution remains monotone.
With probability at least $9/10$, the restriction of a function drawn from the NO distribution to
the selected copy requires changing more than $\varepsilon_\star n_{\mathrm{base}}$ values.  Indeed,
the restriction of any monotone repair on $G$ to the selected component is a monotone repair there,
so changes in the deterministic padding cannot reduce
this requirement.  Since the components are disjoint and
\[
  |V(G)|\le (A+1)Tn_{\mathrm{base}},
\]
the assumed upper bound on $T$ gives
\[
  \varepsilon_\star n_{\mathrm{base}}
  \ge 2\varepsilon(A+1)Tn_{\mathrm{base}}
  \ge 2\varepsilon |V(G)|,
\]
so the extended NO function is $\varepsilon$-far.

Queries in $H$ receive the same deterministic answers under YES and NO.  Thus, if $Q_j$ is the
restriction of the query set to the $j$th copy of $G_0$, the proof of
\Cref{lem:copy-planting} applies unchanged and gives
\[
  \TV(\Tr(\mathcal D^+_{T,H},Q),\Tr(\mathcal D^-_{T,H},Q))
  \le \frac{|Q|}{Tq_0}+\frac1{20}.
\]
The same completeness/soundness separation therefore yields the claimed $\Omega(Tq_0)$ lower bound.
\end{proof}

\begin{theorem}[Lower bound matching the $m^{1/3}$ tester, even with two-sided error]
\label{thm:m-third-twosided}
For every constant $\delta>0$, there exist constants $c_\delta,\varepsilon_\delta>0$ such that, for every
$\varepsilon\in(0,\varepsilon_\delta)$ and infinitely many values of $m$, there is a transitively reduced DAG $G$
with $m$ edges such that every $\varepsilon$-tester for monotonicity on $G$ that is randomized,
non-adaptive, and allowed two-sided error makes at least
\[
  c_\delta\frac{m^{1/3-\delta}}{\varepsilon^{2/3}}
\]
queries.
\end{theorem}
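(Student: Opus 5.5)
We combine \Cref{lem:random-height-pmrs} with $r=t=k$, \Cref{lem:bounded-label-exposure-to-twosided}, and \Cref{lem:copy-planting}. Fix $\delta>0$ and choose an integer $k\ge1$ large enough that $\alpha_k:=k/(3k+2)\ge 1/3-\delta/2$; note also $1-\alpha_k\ge 2/3$. For each sufficiently large $P$, \Cref{lem:random-height-pmrs} gives an $(s,\eta_0,q_0,1/20)$-PMRS family with bounded label exposure, where $s=P^k$, $n_0=\Theta_k(P^{2k+2})$, $|E|=m_0=\Theta_k(P^{3k+2})$, and $q_0=\lfloor c_{k,k}s/\log n_0\rfloor$. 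Then \Cref{lem:bounded-label-exposure-to-twosided} applied to $G_0:=P(U)$ yields distributions $\mathcal D^\pm$ on $G_0$ with $\varepsilon_\star=\eta_0/8$ and TV at most $1/20$ for all query sets of size at most $q_0$. These are exactly the hypotheses of \Cref{lem:copy-planting}. Since $s=\Theta_k(m_0^{k/(3k+2)})$ and $\log n_0=\Theta_k(\log m_0)$, we get $q_0=\Omega_k(m_0^{\alpha_k}/\log m_0)$.

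Next, set $\varepsilon_\delta:=\varepsilon_\star/2=\eta_0/16$, an absolute constant. Given $\varepsilon<\varepsilon_\delta$, take $T:=\lfloor\varepsilon_\star/(2\varepsilon)\rfloor\ge1$, so $T=\Theta(1/\varepsilon)$. Let $G:=G^{(T)}$ be the disjoint union of $T$ copies of $G_0$. We must check that $G$ is transitively reduced and has $m=Tm_0$ edges. This holds because $G_0$ is bipartite with all edges from $L$ to $R$: there are no paths of length two, so no edge is implied by others, and disjoint unions preserve this. By \Cref{lem:copy-planting}, every randomized non-adaptive two-sided $\varepsilon$-tester needs
\[
  \Omega(Tq_0)=\Omega_k\!\left(\frac{T\,m_0^{\alpha_k}}{\log m_0}\right)
  =\Omega_k\!\left(\frac{m^{\alpha_k}T^{1-\alpha_k}}{\log m}\right)
\]
queries, using $m_0=m/T$ and $\log m_0\le\log m$.

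Finally, we convert to the target form. Since $T=\Theta(1/\varepsilon)$ and $1-\alpha_k\ge2/3$, we have $T^{1-\alpha_k}\ge\Omega(\varepsilon^{-2/3})$ whenever $\varepsilon\le\varepsilon_\star/2$, because $T\ge1$ and $T\ge\varepsilon_\star/(4\varepsilon)$. Moreover $m^{\alpha_k}/\log m\ge m^{1/3-\delta/2}/\log m\ge \Omega_\delta(m^{1/3-\delta})$. Together these give the bound $c_\delta m^{1/3-\delta}/\varepsilon^{2/3}$. Letting $P$ range over all sufficiently large integers gives infinitely many values of $m$ for each fixed $\varepsilon$.

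The main point of care is the bookkeeping rather than any new idea. First, the "sufficiently large $n_0$" conditions in \Cref{lem:bounded-label-exposure-to-twosided} and \Cref{lem:random-height-pmrs} must hold uniformly in $\varepsilon$; they do, because they depend only on $P$ and $k$. Second, the exponent of $1/\varepsilon$ must come out at least $2/3$ and not $1-\alpha_k$ in the wrong direction. Since $1-\alpha_k>2/3$ and $1/\varepsilon$ exceeds a constant, using $T^{1-\alpha_k}\ge T^{2/3}$ gives the claimed dependence (indeed a slightly stronger one). All implied constants depend only on $k$, hence only on $\delta$, which yields $c_\delta$.
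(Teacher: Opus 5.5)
Your proposal is correct and follows the paper's proof essentially step for step: \Cref{lem:random-height-pmrs} with $r=t=k$, \Cref{lem:bounded-label-exposure-to-twosided} for the YES/NO distributions, \Cref{lem:copy-planting} with $T=\lfloor\varepsilon_\star/(2\varepsilon)\rfloor$ copies, and the same bookkeeping $Tq_0=\Omega_k\bigl(m^{\alpha_k}T^{1-\alpha_k}/\log n_0\bigr)$ with $1-\alpha_k\ge 2/3$. Your explicit choice $\varepsilon_\delta=\eta_0/16$ and your replacement of $\log n_0$ by $\log m$ are only cosmetic refinements of the paper's ``$\varepsilon_\delta$ small enough'' and ``$\log n_0\le m^{\delta/2}$ for large $m$''.
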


\begin{proof}
Choose an integer $k$ large enough that
\[
  \alpha_k:=\frac{k}{3k+2}\ge \frac13-\frac{\delta}{2}.
\]
Apply \Cref{lem:random-height-pmrs} with $r=t=k$.  Then
\[
  s=P^k,
  \qquad
  n_0=\Theta_k(P^{2k+2}),
  \qquad
  m_0:=|E|=\Theta_k(n_0s)=\Theta_k(P^{3k+2}),
\]
and
\[
  q_0=\Theta_k\left(\frac{s}{\log n_0}\right).
\]
Let $\varepsilon_\star=\eta_0/8$ and set
\[
  T:=\left\lfloor \frac{\varepsilon_\star}{2\varepsilon}\right\rfloor.
\]
For $\varepsilon_\delta>0$ small enough, $T=\Theta_k(1/\varepsilon)$ and $T\ge1$.  Let $G$ be the disjoint union of $T$ copies of the bipartite DAG obtained from this graph.  Since all edges in each copy are directed from one bipartition class to the other, $G$ is transitively reduced and has
\[
  m=Tm_0.
\]
By \Cref{lem:bounded-label-exposure-to-twosided}, the associated bipartite DAG has the required YES/NO
distributions.
Applying \Cref{lem:copy-planting}, every $\varepsilon$-tester that is randomized, non-adaptive, and allowed
two-sided error therefore requires
\[
  \Omega_k(Tq_0)
  =\Omega_k\left(\frac{Ts}{\log n_0}\right)
  =\Omega_k\left(\frac{T\,m_0^{\alpha_k}}{\log n_0}\right).
\]
Using $m_0=m/T$, this becomes
\[
  \Omega_k\left(\frac{m^{\alpha_k}T^{1-\alpha_k}}{\log n_0}\right).
\]
Since $T=\Theta_k(1/\varepsilon)$ and $1-\alpha_k\ge 2/3$, for $\varepsilon\in(0,1)$ this is at least
\[
  \Omega_k\left(\frac{m^{\alpha_k}}{\varepsilon^{2/3}\log n_0}\right).
\]
For every fixed $\varepsilon$, taking $P$ arbitrarily large gives infinitely many $m$, and for all sufficiently large such $m$ we have $\log n_0\le m^{\delta/2}$.  Since $\alpha_k\ge 1/3-\delta/2$, the lower bound is
\[
  \Omega_\delta\left(\frac{m^{1/3-\delta}}{\varepsilon^{2/3}}\right),
\]
as claimed.
\end{proof}

\subsection{\texorpdfstring{A cloud lift and the $\ell$-dependent lower bound}{A cloud lift and the ell-dependent lower bound}}
\label{subsec:cloud-lift}

We now prove a matching lower bound for the tester with query complexity $O(\sqrt{m\ell}/(\varepsilon n))$.
The construction applies a cloud lift to the underlying graph of the same PMRS family.  The
family of matchings has bounded label exposure, and the lift increases the number of comparable pairs while
preserving indistinguishability for every query set of small size.

\begin{definition}[Cloud lift]
\label{def:cloud-lift}
Let $U=(L,R;E)$ be bipartite and let $D\ge1$ be an integer.  The $D$-cloud lift $\mathsf{Cl}_D(U)$ is the four-layer DAG with vertex set
\[
  L^-\cup L\cup R\cup R^+,
\]
where
\[
  L^-:=\{(u,a):u\in L,\ a\in[D]\},
  \qquad
  R^+:=\{(v,a):v\in R,\ a\in[D]\}.
\]
Its edges are
\[
  (u,a)\to u\quad (u\in L,a\in[D]),
\]
\[
  u\to v\quad ((u,v)\in E),
\]
and
\[
  v\to (v,a)
  \quad (v\in R,a\in[D]).
\]
For $x\in L\cup R$, let $C(x)$ be its cloud, namely $\{x\}\cup\{(x,a):a\in[D]\}$, where the auxiliary vertices are in $L^-$ if $x\in L$ and in $R^+$ if $x\in R$.  Let
\[
  \pi:L^-\cup L\cup R\cup R^+\to L\cup R
\]
map each vertex $y$ to the unique $x\in L\cup R$ such that $y\in C(x)$.
\end{definition}

\begin{lemma}[Hardness is preserved by a cloud lift]
\label{lem:cloud-lift-hardness}
Let $\mathcal U=(U;M_1,\ldots,M_s)$ be a PMRS family as in
\Cref{def:bounded-label-exposure-pmrs}, with parameters $(s,\eta,q_0,1/20)$, where
$U=(L,R;E)$, $|L|=|R|=n_0$, and $|E|=\Theta(n_0s)$.  Let
$\widetilde G:=\mathsf{Cl}_D(U)$ for some $1\le D\le s$.  Then:
\begin{enumerate}[label=(\roman*)]
  \item $\widetilde G$ is transitively reduced;
  \item with $n_{\mathrm{cl}}:=|V(\widetilde G)|$,
  $m_{\mathrm{cl}}:=|E(\mathrm{TR}(\widetilde G))|$, and
  $\ell_{\mathrm{cl}}:=|E(\mathrm{TC}(\widetilde G))|$,
  \[
    n_{\mathrm{cl}}=2n_0(D+1),
    \qquad
    m_{\mathrm{cl}}=|E|+2n_0D=\Theta(n_0s),
  \]
  and
  \[
    \ell_{\mathrm{cl}}=|E|(D+1)^2+2n_0D=\Theta(n_0sD^2);
  \]
  \item for $\varepsilon_\star:=\eta/8$, there are distributions $\widetilde{\mathcal D}^+$ and $\widetilde{\mathcal D}^-$ over functions on $\widetilde G$ such that $\widetilde{\mathcal D}^+$ is supported on monotone functions, a function drawn from $\widetilde{\mathcal D}^-$ is $\varepsilon_\star$-far with probability at least $9/10$, and every query set $\widetilde Q\subseteq V(\widetilde G)$ chosen non-adaptively with $|\widetilde Q|\le q_0$ satisfies
  \[
    \TV\bigl(\Tr(\widetilde{\mathcal D}^+,\widetilde Q),
              \Tr(\widetilde{\mathcal D}^-,\widetilde Q)\bigr)
    \le \frac1{20}.
  \]
  Consequently, every $\varepsilon_\star$-tester on $\widetilde G$ that is randomized, non-adaptive, and
  allowed two-sided error makes more than $q_0$ queries, for all sufficiently large $n_0$.
\end{enumerate}
\end{lemma}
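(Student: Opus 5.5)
We verify (i) and (ii) by direct structural inspection, and obtain (iii) by lifting the YES/NO distributions of \Cref{lem:bounded-label-exposure-to-twosided} along the projection $\pi$.

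\textbf{Parts (i) and (ii).} Every edge of $\widetilde G$ goes between consecutive layers $L^-\to L\to R\to R^+$, so any path of length at least $2$ between the endpoints of an edge would have to skip a layer. Hence no edge is implied by a longer path, and $\widetilde G$ is transitively reduced. This gives $m_{\mathrm{cl}}=|E|+2n_0D$ directly.

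For the closure, the comparable pairs are of the following kinds:
\begin{itemize}
\item[] $(u,a)\to u$ and $v\to(v,a)$, contributing $2n_0D$ pairs;
\item[] the pairs of $C(u)\times C(v)$ for each $(u,v)\in E$, contributing $(D+1)^2$ pairs per edge.
\end{itemize}
These are distinct: a pair $(x,y)$ with $x\in C(u)$ and $y\in C(v)$ determines $u=\pi(x)$ and $v=\pi(y)$. There are no other reachabilities. Counting vertices is immediate. The $\Theta$ bounds use $D\le s$, so that $2n_0D=O(n_0s)$, together with $|E|=\Theta(n_0s)$.

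\textbf{Part (iii): the distributions.} Take $\mathcal D^\pm$ on $P(U)$ from \Cref{lem:bounded-label-exposure-to-twosided} and define $\widetilde{\mathcal D}^\pm$ by $\widetilde F:=F\circ\pi$, i.e., constant on clouds.

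\emph{Monotonicity under YES.} Each edge of $\widetilde G$ is either inside a cloud, where $\widetilde F$ is constant, or is an edge $u\to v$ with $(u,v)\in E$, which is monotone since $F$ is. So YES functions are monotone.

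\emph{Farness under NO.} Suppose the violating edges of $F$ contain a submatching $M'\subseteq M_i$ with $|M'|\ge\eta n_0/2$ (probability at least $9/10$). For each $(u,v)\in M'$, pair the $j$-th element of $C(u)$ with the $j$-th element of $C(v)$. This gives $D+1$ vertex-disjoint violating pairs per edge, since every pair in $C(u)\times C(v)$ is comparable and $\widetilde F(x)=F(u)>F(v)=\widetilde F(y)$ for such pairs. Disjointness across different edges holds because $M'$ is a matching and clouds are disjoint. The violation graph therefore has a matching of size at least $(D+1)\eta n_0/2=(\eta/4)n_{\mathrm{cl}}>\varepsilon_\star n_{\mathrm{cl}}$.

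\textbf{Part (iii): indistinguishability.} Given $\widetilde Q$, set $Q:=\pi(\widetilde Q)$, so $|Q|\le|\widetilde Q|\le q_0$. The transcript on $\widetilde Q$ is a deterministic function of the transcript on $Q$, since each answer $\widetilde F(y)$ equals $F(\pi(y))$. By the data-processing inequality,
\[
\TV\bigl(\Tr(\widetilde{\mathcal D}^+,\widetilde Q),\Tr(\widetilde{\mathcal D}^-,\widetilde Q)\bigr)\le \TV\bigl(\Tr(\mathcal D^+,Q),\Tr(\mathcal D^-,Q)\bigr)\le \tfrac1{20}.
\]
The query lower bound then follows by the same accept-probability gap argument ($2/3$ versus $2/5$), averaging over the tester's seed.

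The main point requiring care is the farness count, specifically checking that the cloud pairs really are comparable pairs in $\mathrm{TC}(\widetilde G)$. This holds via $(u,a)\to u\to v\to(v,b)$.
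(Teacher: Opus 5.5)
Your proposal is correct and follows essentially the same route as the paper: layer inspection for (i) and (ii), pulling $\mathcal D^\pm$ back along $\pi$, lifting each violated edge of $M_i$ to $D+1$ vertex-disjoint violating pairs in $C(u)\times C(v)$, and reducing $\widetilde Q$ to $Q=\pi(\widetilde Q)$. The only cosmetic difference is that you bound the total variation by applying data processing to the conclusion of \Cref{lem:bounded-label-exposure-to-twosided}, whereas the paper argues again via $\Pr[I\in\Expose(Q)]\le 1/20$; the two are equivalent.
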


\begin{proof}
The graph is layered as $L^-\to L\to R\to R^+$.  No edge has an alternative directed path between
the same endpoints.  A vertex $(u,a)\in L^-$ has only one outgoing edge, namely $(u,a)\to u$; an
edge $u\to v$ with $(u,v)\in E$ has no intermediate layer through which a path could pass; and a
vertex $(v,a)\in R^+$ has only one incoming edge, namely $v\to(v,a)$.  Hence the graph is
transitively reduced.

The numbers of vertices and edges in the transitive reduction are immediate from the definition.  For the
transitive closure, every $(u,v)\in E$ contributes all pairs in $C(u)\times C(v)$, namely $(D+1)^2$
comparable pairs.  In addition, the pairs $(u,a)\leadsto u$ and $v\leadsto(v,a)$ contribute
$2n_0D$.  There are no other nontrivial comparable pairs.  Since $|E|=\Theta(n_0s)$ and
$1\le D\le s$, this gives the stated asymptotics.

We now prove the hardness statement.  Use the YES/NO distributions from
\Cref{lem:bounded-label-exposure-to-twosided} on $P(U)$ and pull them back to $\widetilde G$ by defining
\[
  \widetilde F^\pm(x):=F^\pm(\pi(x)).
\]
If $F^+$ is monotone on $P(U)$, then $\widetilde F^+$ is monotone on $\widetilde G$: the endpoints of
$(u,a)\to u$ and $v\to(v,a)$ have equal values, and for every $(u,v)\in E$, all pairs in
$C(u)\times C(v)$ have values $F^+(u)\le F^+(v)$.

For the NO distribution, whenever an edge $e=(u,v)\in M_i$ is violated, every comparable pair in
$C(u)\times C(v)$ is a violating pair in $\widetilde G$.  From each such edge we may choose $D+1$
vertex-disjoint violating pairs, and choices coming from distinct edges of $M_i$ are
vertex-disjoint because $M_i$ is a matching.  As in
\Cref{lem:bounded-label-exposure-to-twosided}, with probability at least $9/10$ at least
$|M_i|/2\ge\eta n_0/2$ edges of $M_i$ are violated.  Hence the violation graph of
$\widetilde F^-$ contains a matching of size at least
\[
  (D+1)\eta n_0/2 = (\eta/4)n_{\mathrm{cl}} > \varepsilon_\star n_{\mathrm{cl}}.
\]
Thus a function drawn from $\widetilde{\mathcal D}^-$ is $\varepsilon_\star$-far with probability at least
$9/10$.

Finally, fix $\widetilde Q\subseteq V(\widetilde G)$ with $|\widetilde Q|\le q_0$ and put
$Q:=\pi(\widetilde Q)\subseteq L\cup R$.  Then $|Q|\le |\widetilde Q|\le q_0$.  If
$I\notin\Expose(Q)$, the transcript on $Q$ has the same distribution under YES and NO.  The transcript
on $\widetilde Q$ is obtained from that on $Q$ by duplicating some coordinates, so it also has the
same distribution under YES and NO.  Therefore
\[
  \TV(\Tr(\widetilde{\mathcal D}^+,\widetilde Q),
       \Tr(\widetilde{\mathcal D}^-,\widetilde Q))
  \le \Pr[I\in\Expose(Q)]
  \le \frac1{20}.
\]
This proves the required total-variation bound on the transcripts.  The final lower bound for one
copy follows from the same completeness/soundness separation as in
\Cref{lem:bounded-label-exposure-to-twosided}.
\end{proof}

\begin{theorem}[Lower bound matching the $\ell$-dependent tester, even with two-sided error]
\label{thm:ell-dependent-twosided}
Fix integers $r\ge t\ge1$.  For all sufficiently large $P$, let $s=P^t$ and let
\[
  n_0=\Theta_{r,t}(P^{r+t+2})
\]
be the size of each bipartition class in the graph from \Cref{lem:random-height-pmrs}.  Let $D$ be any integer with $1\le D\le s$.  Then, for every sufficiently small $\varepsilon>0$, there are transitively reduced DAGs $G$ with parameters
\[
  n=\Theta_{r,t}(Tn_0D),
  \qquad
  m=\Theta_{r,t}(Tn_0s),
  \qquad
  \ell=\Theta_{r,t}(Tn_0sD^2),
\]
where $T=\Theta_{r,t}(1/\varepsilon)$, such that every $\varepsilon$-tester for monotonicity on $G$ that is
randomized, non-adaptive, and allowed two-sided error makes at least
\[
  \Omega_{r,t}\left(
    \frac{1}{\log n}\cdot\frac{\sqrt{m\ell}}{\varepsilon n}
  \right)
\]
queries.
\end{theorem}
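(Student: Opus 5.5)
The plan is to combine \Cref{lem:random-height-pmrs}, \Cref{lem:cloud-lift-hardness}, and \Cref{lem:copy-planting}, and then check that the parameters match.

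\textbf{Base graph and lift.} Fix $r\ge t\ge1$ and take the $(s,\eta_0,q_0,1/20)$-PMRS family $\mathcal U=(U;M_1,\ldots,M_s)$ from \Cref{lem:random-height-pmrs}. Its parameters are
\[
s=P^t,\qquad q_0=\lfloor c_{r,t}s/\log n_0\rfloor,\qquad |E|=\Theta_{r,t}(n_0s).
\]
For the given $1\le D\le s$, form $G_0:=\mathsf{Cl}_D(U)$. By \Cref{lem:cloud-lift-hardness}, $G_0$ is transitively reduced and has
\[
n_{\mathrm{cl}}=2n_0(D+1),\qquad m_{\mathrm{cl}}=\Theta(n_0s),\qquad \ell_{\mathrm{cl}}=\Theta(n_0sD^2).
\]
It also carries YES/NO distributions satisfying hypotheses (i)--(iii) of \Cref{lem:copy-planting}, with $\varepsilon_\star=\eta_0/8$ and this $q_0$.

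\textbf{Copies and the lower bound.} Set $T:=\lfloor\varepsilon_\star/(2\varepsilon)\rfloor$. For $\varepsilon$ sufficiently small, $T\ge1$ and $T=\Theta(1/\varepsilon)$. Let $G$ be the disjoint union of $T$ copies of $G_0$. It is transitively reduced because it is a disjoint union of transitively reduced DAGs. Vertex counts, reduction edges and closure edges all add over disjoint components, since no comparable pairs cross between copies. Hence
\[
n=Tn_{\mathrm{cl}}=\Theta(Tn_0D),\qquad m=\Theta(Tn_0s),\qquad \ell=\Theta(Tn_0sD^2).
\]
\Cref{lem:copy-planting} then gives a lower bound of $\Omega(Tq_0)=\Omega_{r,t}\bigl(Ts/\log n_0\bigr)$. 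We need $q_0\ge1$ for that lemma, which holds once $P$ is large, because $s=P^t$ dominates $\log n_0=O(\log P)$.

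\textbf{Matching the target expression.} The remaining work is to rewrite $Ts/\log n_0$ in terms of $m,\ell,n$. We have
\[
\sqrt{m\ell}=\Theta\bigl(Tn_0sD\bigr),\qquad\text{so}\qquad \frac{\sqrt{m\ell}}{n}=\Theta(s),
\]
and therefore
\[
\frac{\sqrt{m\ell}}{\varepsilon n}=\Theta(s/\varepsilon)=\Theta(Ts).
\]
Since $n\ge n_0$, we have $\log n_0\le\log n$, so $Ts/\log n_0\ge Ts/\log n$. This yields
\[
\Omega_{r,t}\!\left(\frac{1}{\log n}\cdot\frac{\sqrt{m\ell}}{\varepsilon n}\right).
\]

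\textbf{Where care is needed.} The steps are mostly bookkeeping. The two points that need care are keeping the hidden constants uniform in $\varepsilon$ and $D$, and verifying that $T\le\varepsilon_\star/(2\varepsilon)$ while $T=\Theta(1/\varepsilon)$. These depend only on $r$, $t$ and absolute constants, so the dependence is $O_{r,t}$ as claimed. Everything else is a direct invocation of the earlier lemmas.
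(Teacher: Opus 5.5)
Your proposal is correct and follows the paper's proof essentially step for step: cloud-lift the bounded-label-exposure PMRS family from \Cref{lem:random-height-pmrs}, take $T=\lfloor\varepsilon_\star/(2\varepsilon)\rfloor$ disjoint copies, apply \Cref{lem:copy-planting} to get $\Omega(Ts/\log n_0)$, and use $\sqrt{m\ell}/n=\sqrt{m_{\mathrm{cl}}\ell_{\mathrm{cl}}}/n_{\mathrm{cl}}=\Theta(s)$ together with $\log n_0\le\log n$. Your extra checks, that $q_0\ge1$ for large $P$ and that the constants are uniform in $D$ because $1\le D\le s$, are harmless details the paper leaves implicit.
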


\begin{proof}
Start with the PMRS family from \Cref{lem:random-height-pmrs}, take the $D$-cloud lift of its
underlying graph, and then take $T$ disjoint copies, where
\[
  T:=\left\lfloor\frac{\varepsilon_\star}{2\varepsilon}\right\rfloor,
  \qquad
  \varepsilon_\star:=\eta_0/8.
\]
For sufficiently small $\varepsilon$, we have $T=\Theta_{r,t}(1/\varepsilon)$.

By \Cref{lem:cloud-lift-hardness}, one copy of the lifted graph has
\[
  n_{\mathrm{cl}}=\Theta(n_0D),
  \qquad
  m_{\mathrm{cl}}=\Theta(n_0s),
  \qquad
  \ell_{\mathrm{cl}}=\Theta(n_0sD^2).
\]
The YES/NO distributions for one lifted graph satisfy the hypotheses of \Cref{lem:copy-planting} with
\[
  q_0=\Theta_{r,t}(s/\log n_0).
\]
Applying \Cref{lem:copy-planting} gives a lower bound
\[
  \Omega_{r,t}(Tq_0)
  =\Omega_{r,t}\left(\frac{Ts}{\log n_0}\right)
  \ge
  \Omega_{r,t}\left(\frac{Ts}{\log n}\right),
\]
where the last inequality uses $n\ge n_0$.

For the disjoint union of the $T$ copies,
\[
  n=Tn_{\mathrm{cl}},
  \qquad
  m=Tm_{\mathrm{cl}},
  \qquad
  \ell=T\ell_{\mathrm{cl}}.
\]
Therefore
\[
  \frac{\sqrt{m\ell}}{n}
  =
  \frac{\sqrt{(Tm_{\mathrm{cl}})(T\ell_{\mathrm{cl}})}}{Tn_{\mathrm{cl}}}
  =
  \frac{\sqrt{m_{\mathrm{cl}}\ell_{\mathrm{cl}}}}{n_{\mathrm{cl}}}
  =\Theta_{r,t}(s).
\]
Since $T=\Theta_{r,t}(1/\varepsilon)$, we have $Ts=\Theta_{r,t}(\sqrt{m\ell}/(\varepsilon n))$.  Substituting this into the preceding lower bound proves the theorem.
\end{proof}

\begin{corollary}[Exponent form for the red region]
\label{cor:red-region}
Fix constants $c,d$ satisfying
\begin{equation}
\label{eq:red-region-condition}
  1\le c\le d,
  \qquad
  c+3d<6.
\end{equation}
For every $\delta>0$ and every sufficiently small constant $\varepsilon>0$, there are infinitely many $n$ and transitively reduced $n$-vertex DAGs with
\[
  n^{c-\delta}\le m\le n^{c+\delta},
  \qquad
  n^{d-\delta}\le \ell\le n^{d+\delta},
\]
such that every $\varepsilon$-tester that is randomized, non-adaptive, and allowed two-sided error requires
\[
  \Omega_{c,d,\delta}\left(
    n^{-\delta}\cdot\frac{\sqrt{m\ell}}{\varepsilon n}
  \right)
\]
queries.
\end{corollary}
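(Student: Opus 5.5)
The plan is to instantiate \Cref{thm:ell-dependent-twosided} with integer parameters $r\ge t\ge1$, a large $P$, and $D=\lfloor P^{j}\rfloor$ for a suitable exponent, chosen so that the resulting exponents approximate $(c,d)$. In terms of $n_0$ we write $s=n_0^{\alpha}$ and $D=n_0^{\beta}$. Since $\varepsilon$ is constant, $T=\Theta(1/\varepsilon)$ only affects constants. The graph then has $n=n_0^{1+\beta+o(1)}$, $m=n_0^{1+\alpha+o(1)}$ and $\ell=n_0^{1+\alpha+2\beta+o(1)}$, so
\[
  c=\frac{1+\alpha}{1+\beta},\qquad d=\frac{1+\alpha+2\beta}{1+\beta}.
\]
The realizable pairs come from $\alpha=t/(r+t+2)$ and $\beta\le\alpha$, since \Cref{lem:cloud-lift-hardness} requires $D\le s$. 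The real constraints are $\alpha<1/2$, which follows from $r\ge t$, and $0\le\beta\le\alpha$.

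First invert the map. From $c(1+\beta)=1+\alpha$ and $d(1+\beta)=1+\alpha+2\beta$ we get $(d-c)(1+\beta)=2\beta$, hence
\[
  \beta=\frac{d-c}{2-d+c},\qquad \alpha=c(1+\beta)-1=\frac{2c}{2-d+c}-1=\frac{c+d-2}{2-d+c}.
\]
We then check that \eqref{eq:red-region-condition} gives exactly the admissible range:
\begin{itemize}
  \item $\beta\ge0$ follows from $d\ge c$, and $d\le2$ keeps the denominator positive (this also follows from the constraints, since $c+3d<6$ and $c\ge1$ give $d<5/3$).
  \item $\beta\le\alpha$ is equivalent to $d-c\le c+d-2$, that is, $c\ge1$.
  \item $\alpha<1/2$ is equivalent to $2(c+d-2)<2-d+c$, that is, $c+3d<6$.
\end{itemize}
The boundary case $\alpha=0$ (where $c=d=1$) needs a little care. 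There I would take $t=1$ and $r$ large, so that $\alpha$ is small but positive, and accept the $\delta$ loss.

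Next comes approximation by rationals. Given $\delta$, pick integers $r\ge t\ge1$ with $t/(r+t+2)$ within a small $\delta'$ of $\alpha$, keeping it below $1/2$. This is possible because the fractions $t/(r+t+2)$ with $r\ge t$ are dense in $(0,1/2)$. Then take $D=\lfloor n_0^{\beta'}\rfloor$ with $\beta'=\min(\beta,\alpha')$, so that $D\le s$. The map $(\alpha,\beta)\mapsto(c,d)$ is continuous, so for $\delta'$ small the exponents of $m$ and $\ell$ with respect to $n$ lie within $\delta/2$ of $c$ and $d$. The $\Theta_{r,t}$ constants and the factor $T$ contribute only $n^{o(1)}$, giving the window $n^{c\pm\delta}$, $n^{d\pm\delta}$ for all large $P$. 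Taking $P\to\infty$ produces infinitely many $n$.

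Finally, \Cref{thm:ell-dependent-twosided} yields $\Omega_{r,t}\bigl(\sqrt{m\ell}/(\varepsilon n\log n)\bigr)$, and $\log n\le n^{\delta}$ for large $n$ turns this into the stated bound, with constants depending on $r,t$ and hence on $c,d,\delta$. The main obstacle is bookkeeping rather than ideas. One must make sure the approximating pair $(\alpha',\beta')$ respects both $\beta'\le\alpha'$ and $\alpha'<1/2$ near the boundary lines $c=1$ and $c+3d=6$; strictness of $c+3d<6$ handles the latter. One must also confirm that "sufficiently small $\varepsilon$" in \Cref{thm:ell-dependent-twosided} is uniform once $r,t$ are fixed.
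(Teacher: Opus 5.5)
Your proposal is correct and follows the paper's proof essentially step for step. It uses the same inversion $\beta_\star=(d-c)/(2-d+c)$, $\alpha_\star=(c+d-2)/(2-d+c)$, the same check that the region conditions give $0\le\beta_\star\le\alpha_\star<1/2$, the same approximation by integers $r\ge t\ge1$ (forcing $t\ge1$ near $c=d=1$), and the same absorption of $1/\log n$ into $n^{-\delta}$.

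One small slip needs fixing. $D=\lfloor n_0^{\beta'}\rfloor$ with $\beta'=t/(r+t+2)$ can exceed $s=P^t$, because $n_0=(10t)^r\cdot 20r\cdot P^{r+t+2}$ is a constant factor larger than $P^{r+t+2}$. Take $D=P^b$ with an integer $b\le t$ as the paper does, or cap $D$ at $s$; either choice changes no exponent.
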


\begin{proof}
Since $\varepsilon$ is fixed, the number $T=\Theta_{r,t}(1/\varepsilon)$ of copies in
\Cref{thm:ell-dependent-twosided} is constant and does not affect the exponents of $n$, $m$, and
$\ell$.  We therefore begin with one lifted copy.  Put
\[
  L:=r+t+2
\]
and take $D=P^b$ for an integer $0\le b\le t$.  Since $n_0=\Theta(P^L)$ and $s=P^t$,
\Cref{thm:ell-dependent-twosided} gives
\[
  n=\Theta(P^{L+b}),
  \qquad
  m=\Theta(P^{L+t}),
  \qquad
  \ell=\Theta(P^{L+t+2b}).
\]
Define the ratios determined by these integer parameters as
\[
  \alpha_L:=\frac{t}{L},
  \qquad
  \beta_L:=\frac{b}{L}.
\]
If $c_L$ and $d_L$ denote the exponents of $m$ and $\ell$ relative to $n$, respectively, then
\begin{equation}
\label{eq:red-exponent-map}
  c_L=\frac{1+\alpha_L}{1+\beta_L},
  \qquad
  d_L=\frac{1+\alpha_L+2\beta_L}{1+\beta_L}.
\end{equation}

The ratios $\alpha_L$ and $\beta_L$ are not chosen independently; they are determined by the
integers $L,t,b$.  To guide the choice of these integers, we first compute the real target ratios
$\alpha_\star,\beta_\star$ that would give the prescribed exponents $c,d$.  By
\eqref{eq:red-exponent-map}, they should satisfy
\begin{equation}
\label{eq:red-target-ratios-system}
  c=\frac{1+\alpha_\star}{1+\beta_\star},
  \qquad
  d=\frac{1+\alpha_\star+2\beta_\star}{1+\beta_\star}.
\end{equation}
Solving \eqref{eq:red-target-ratios-system} gives
\begin{equation}
\label{eq:red-target-ratios}
  \beta_\star=\frac{d-c}{2-d+c},
  \qquad
  \alpha_\star=\frac{c+d-2}{2-d+c}.
\end{equation}
The conditions $1\le c\le d$ and $c+3d<6$ are exactly what we need: they imply
$0\le\beta_\star\le\alpha_\star<1/2$.  Indeed,
\[
  \alpha_\star-\beta_\star=\frac{2(c-1)}{2-d+c}\ge0,
  \qquad
  \frac12-\alpha_\star=\frac{6-c-3d}{2(2-d+c)}>0.
\]

It remains only to approximate $\alpha_\star$ and $\beta_\star$ by valid integer parameters.  Choose a
sufficiently large integer $L$ and set
\[
  t:=\max\{1,\lfloor\alpha_\star L\rfloor\},
  \qquad
  b:=\lfloor\beta_\star L\rfloor,
  \qquad
  r:=L-t-2.
\]
For all sufficiently large $L$, we have $r\ge t\ge1$ and $0\le b\le t$, so these parameters are
valid and $D=P^b\le P^t=s$.  Moreover, $\alpha_L\to\alpha_\star$ and
$\beta_L\to\beta_\star$ as $L\to\infty$.  By \eqref{eq:red-exponent-map}, we can therefore fix $L$
large enough that $|c_L-c|<\delta/2$ and $|d_L-d|<\delta/2$, and then let $P$ tend to infinity.
The constant factors in the estimates above are absorbed by
the remaining $\delta/2$, giving the claimed bounds on $m$ and $\ell$ for infinitely many $n$.
Finally, $1/\log n\ge n^{-\delta}$ for all sufficiently large $n$, so
\Cref{thm:ell-dependent-twosided} gives the stated query lower bound.
\end{proof}

\subsection{Completing the phase diagram by deterministic padding}
\label{subsec:phase-diagram-padding}

We use the color terminology from \Cref{fig:monotonicity-tester}: the red, blue, and yellow regions
are where the minimum defining $u(c,d)$ is attained by $(c+d)/2-1$, $c/3$, and $1/2$,
respectively.  On a boundary two of these terms agree; we include the boundaries in the blue case
below.

The cloud lift above realizes the red interior directly.  We now take instances from the preceding
constructions whose exponent pairs lie on the region boundaries and apply the deterministic padding
lemma from \Cref{lem:copy-planting-padding} to cover the blue and yellow regions as well.

\begin{proof}[Proof of \Cref{thm:query-complexity-phase-diagram}\textup{(ii)}]
Let $\eta_0$ be the absolute constant from \Cref{lem:random-height-pmrs} such that every matching
has size at least $\eta_0n_0$, set $\varepsilon_\star:=\eta_0/8$, and take, for example,
$\varepsilon_0:=\eta_0/128$.  Fix constants $1\le c\le d\le2$, $\delta>0$, and
$\varepsilon\in(0,\varepsilon_0]$.  We prove the stronger statement that there are infinitely many
$n$ and transitively reduced $n$-vertex DAGs with
\[
  n^{c-\delta}\le m\le n^{c+\delta},
  \qquad
  n^{d-\delta}\le \ell\le n^{d+\delta},
\]
on which every randomized non-adaptive $\varepsilon$-tester, even with two-sided error, requires
\[
  \Omega_{c,d,\delta}\!\left(n^{u(c,d)-\delta}\right)
\]
queries.  Fix an auxiliary constant $\gamma>0$ sufficiently small compared with $\delta$.  All
integer parameters below are fixed before the size parameter $P$ tends to infinity, and all
logarithmic factors and constant multiplicative factors can therefore be absorbed into
$n^{\pm\gamma}$ for sufficiently large $P$.

\smallskip
\noindent\emph{Red region: $c+3d<6$.}
Apply \Cref{cor:red-region} with accuracy $\gamma$.  Its proof uses the same absolute
$\eta_0$, so the above choice of $\varepsilon_0$ is valid uniformly over the integer parameters of
that construction; in particular, the number $T$ of copies chosen there is positive and
$\Theta(1/\varepsilon)$ uniformly for $\varepsilon\le\varepsilon_0$.  The resulting graph satisfies the desired bounds on $m$ and $\ell$ once
$\gamma<\delta$.  Moreover,
\[
  \Omega\!\left(
    n^{-\gamma}\frac{\sqrt{m\ell}}{\varepsilon n}
  \right)
  \ge
  \Omega\!\left(
    n^{(c+d)/2-1-2\gamma}
  \right).
\]
In this region $u(c,d)=(c+d)/2-1$, so taking $\gamma$ sufficiently small gives the claimed lower
bound.

\smallskip
\noindent\emph{Blue region and its boundary: $c\le3/2$ and $c+3d\ge6$.}
Set
\[
  \theta:=\frac3c-2\in[0,1].
\]
Choose a sufficiently large integer $k$, and choose $b\in\{0,\ldots,k\}$ so that $b/k$ is as close
to $\theta$ as possible.  Apply \Cref{lem:random-height-pmrs} with $r=t=k$, and take the cloud lift
of the underlying graph with $D=P^b\le s=P^k$.  Write
\[
  A_k:=2k+2+b,
  \qquad M_k:=3k+2.
\]
By \Cref{lem:cloud-lift-hardness}, one lifted copy has
\[
  n_{\mathrm{base}}=\Theta_k(P^{A_k}),
  \qquad
  m_{\mathrm{base}}=\Theta_k(P^{M_k}),
  \qquad
  \ell_{\mathrm{base}}=\Theta_k(P^{M_k+2b}),
\]
and the corresponding YES and NO distributions satisfy the total-variation bound for every query
set of size at most $q_0$, where
\[
  q_0=\Theta_k(P^k/\log P).
\]
Define the three exponents
\[
  c_k:=\frac{M_k}{A_k},
  \qquad
  d_k:=\frac{M_k+2b}{A_k},
  \qquad
  h_k:=\frac{k}{A_k}.
\]
As $k$ tends to infinity along the above choices of $b$,
\[
  c_k\longrightarrow c,
  \qquad
  d_k\longrightarrow 2-\frac c3,
  \qquad
  h_k\longrightarrow\frac c3.
\]
For finite $k$ we also have
\begin{equation}
\label{eq:padding-dk-below-boundary}
  d_k=\frac{6-c_k}{3}-\frac{4}{3A_k}.
\end{equation}
Set
\[
  \widehat d_k:=\max\left\{d,\frac{6-c_k}{3}\right\}.
\]
Because $d\ge(6-c)/3$, we have $\widehat d_k\to d$, while
\eqref{eq:padding-dk-below-boundary} gives
$\widehat d_k>d_k$.

Choose
\[
  T:=\left\lfloor\frac{\varepsilon_\star}{4\varepsilon}\right\rfloor
\]
which is at least one because $\varepsilon\le\varepsilon_0$, and let $N:=Tn_{\mathrm{base}}$ be the
number of vertices in the $T$ copies of the lifted base graph.  Add, as a disjoint component, a transitively reduced
directed chain, containing only its consecutive cover edges, on
\[
  L:=\left\lfloor N^{\widehat d_k/2}\right\rfloor
\]
vertices and assign the constant zero function to this chain under both the YES and NO distributions.  Since
$\widehat d_k\le2$, the chain has at most $N$ vertices for all sufficiently large $N$.
Thus \Cref{lem:copy-planting-padding} applies with padding factor $A=1$ and gives a lower bound
$\Omega(Tq_0)$.  The chain has $\Theta(L)$ edges in its transitive reduction and
$\Theta(L^2)$ edges in its transitive closure.  Since
$\widehat d_k/2\le1\le c_k$ and $\widehat d_k>d_k$, the final graph has
\[
  n=\Theta(N),
  \qquad
  m=\Theta(N^{c_k}),
  \qquad
  \ell=\Theta(N^{\widehat d_k}).
\]
It is transitively reduced because every component is.  Furthermore,
\[
  \Omega(Tq_0)=\Omega_k(TP^k/\log P)=\Omega(N^{h_k-\gamma})
\]
for all sufficiently large $P$.  Taking $k$ large enough makes $c_k$ and $\widehat d_k$ lie within
$\gamma$ of $c$ and $d$, respectively, and makes $h_k\ge c/3-\gamma$.  In the blue region
$u(c,d)=c/3$, which proves the claim, including the red--blue boundary and the point
$(c,d)=(3/2,3/2)$.

\smallskip
\noindent\emph{Yellow region: $c>3/2$.}
Choose a sufficiently large integer $k$ and apply \Cref{lem:random-height-pmrs} with $r=t=k$, without
a cloud lift.  The resulting bipartite DAG has no directed paths of length two, so its transitive
reduction and nontrivial transitive closure have the same edge set.  Thus one copy has
\[
  n_{\mathrm{base}}=\Theta_k(P^{2k+2}),
  \qquad
  m_{\mathrm{base}}=\ell_{\mathrm{base}}=\Theta_k(P^{3k+2}),
  \qquad
  q_0=\Theta_k(P^k/\log P).
\]
By \Cref{lem:bounded-label-exposure-to-twosided}, it has the YES/NO distributions required by
\Cref{lem:copy-planting-padding}.  Take
\[
  T:=\left\lfloor\frac{\varepsilon_\star}{8\varepsilon}\right\rfloor
\]
copies of the base graph; again $T\ge1$ because $\varepsilon\le\varepsilon_0$.  Put
$N:=Tn_{\mathrm{base}}$.  For deterministic padding, first take a height-two bipartite
DAG with two sides of $N$ vertices each and exactly $\lfloor N^c/4\rfloor$ edges directed from the
left side to the right side.  Its transitive reduction and nontrivial transitive closure both consist
of precisely those edges.  Second, take a transitively reduced directed chain, containing only its
consecutive cover edges, on $\lfloor N^{d/2}\rfloor$ vertices.
Assign the constant zero function to both components under YES and NO.  The padding has at most
$3N$ vertices, so \Cref{lem:copy-planting-padding} applies with $A=3$.

For the $T$ copies of the base graph, the exponent of $N$ in their number of edges is
\[
  \frac{3k+2}{2k+2}<\frac32\le c,
\]
the height-two component has $\Theta(N^c)$ edges in both its reduction and closure, and the chain
has $O(N)$ edges in its transitive reduction and $\Theta(N^d)$ edges in its transitive closure.
Since $d\ge c$, the resulting
transitively reduced graph therefore satisfies
\[
  n=\Theta(N),
  \qquad
  m=\Theta(N^c),
  \qquad
  \ell=\Theta(N^d).
\]
The lower bound on the number of queries is
\[
  \Omega(Tq_0)
  =\Omega_k(TP^k/\log P)
  =\Omega\!\left(N^{\frac{k}{2k+2}-\gamma}\right).
\]
Taking $k$ sufficiently large makes the exponent at least $1/2-2\gamma$.  Here
$u(c,d)=1/2$, so this proves the yellow case.  Finally, varying $P$ gives infinitely many values of
$n$ in every case.  Taking the accuracy to zero and then choosing $P$ sufficiently large at each
stage gives the equivalent diagonal formulation in the statement.
\end{proof}
 
\section*{Acknowledgements}
We thank Nathan Harms, Jane Lange, Mikhail Makarov, Cameron Seth, and Yubo Zhang for helpful discussions.
Y.Y. is supported by JSPS KAKENHI Grant Number 22H05001, 25K24465, and 26K21940.

\bibliographystyle{abbrv}
\bibliography{main}

\end{document}